\newcommand{\dd}{\mathrm{d}}
\def \Dev {\term{DEV}}
\def \DevM {\ensuremath{\term{DEV}_{\max}}} 
\def \OBJ {\term{OBJ}}
\def \guessold {\guess_{\term{old}}}
\newcommand{\taualg}{\tau_{\text{alg}}}
\title{Adversarial Bandits with Knapsacks%
\footnote{An extended abstract is published in \emph{FOCS 2019}: 60th Annual IEEE Symposium on Foundations of Computer Science.
\newline\indent
The definitive version of this paper will be published in \emph{JACM}: Journal of the ACM.
\vspace{1mm}\newline\indent
\emph{Version history.} The conference version corresponds, as an extended abstract, to the March'19 version of this manuscript. A major revision in Oct'19 improved the competitive ratios in Sections~\ref{sec:adversarial} and~\ref{sec:HP}, reducing the dependence on $d$ and shaving off some constant factors. In particular, we streamlined some looseness in the algorithm in Section~\ref{sec:adversarial}, and made the final computation somewhat more efficient. Also, we made the lower bound statements more explicit, and expanded the discussion of open questions. The subsequent revisions fixed various inaccuracies in the proofs and added some follow-up work.
\vspace{1mm}\newline\indent
\emph{Acknowledgements.} We are grateful to Sahil Singla and Thomas Kesselheim for pointing out the reduction in Remark~\ref{rem:reduction}, and an inefficiency in our original analysis in Section~\ref{sec:adv-analysis}. We are also grateful to Omid Sadeghi and the JACM reviewers for pointing out several typos and inaccuracies. We thank Robert Kleinberg, Akshay Krishnamurthy, Steven Wu, and Chicheng Zhang for many insightful conversations on online machine learning.
\vspace{1mm}
}}
\author{Nicole Immorlica \thanks{Microsoft Research, Cambridge, MA. Email: nicimm@microsoft.com.}}
\author{Karthik A. Sankararaman \thanks{Facebook, Menlo Park, CA.  Email: karthikabinavs@gmail.com. The research was done while the author was a graduate student at University of Maryland (College Park, MD) and an intern at Microsoft Research (New York, NY). Supported in part by NSF Awards CNS 1010789, CCF 1422569, CCF-1749864 and research awards from Adobe, Amazon, and Google.}}
\author{Robert Schapire\thanks{Microsoft Research, New York, NY. Email: schapire@microsoft.com.}}
\author{Aleksandrs Slivkins\thanks{Microsoft Research, New York, NY. Email: slivkins@microsoft.com.}}
\affil{}
\date{First version: November 2018\\This version: July 2022}
\begin{document}

\maketitle

\begin{abstract}
We consider \emph{Bandits with Knapsacks} (henceforth, \emph{BwK}), a general model for multi-armed bandits under supply/budget constraints. In particular, a bandit algorithm needs to solve a well-known \emph{knapsack problem}: find an optimal packing of items into a limited-size knapsack. The \BwK problem is a common generalization of numerous motivating examples, which range from dynamic pricing to repeated auctions to dynamic ad allocation to network routing and scheduling. While the prior work on \BwK focused on the stochastic version, we pioneer the other extreme in which the outcomes can be chosen adversarially. This is a considerably harder problem, compared to both the stochastic version and the ``classic" adversarial bandits, in that regret minimization is no longer feasible. Instead, the objective is to minimize the \emph{competitive ratio}: the ratio of the benchmark reward to algorithm's reward.

We design an algorithm with competitive ratio $O(\log T)$ relative to the best fixed distribution over actions, where $T$ is the time horizon; we also prove a matching lower bound. The key conceptual contribution is a new perspective on the stochastic version of the problem. We suggest a new algorithm for the stochastic version, which builds on the framework of regret minimization in repeated games and admits a substantially simpler analysis compared to prior work. We then analyze this algorithm for the adversarial version, and use it as a subroutine to solve the latter.

Our algorithm is the first ``black-box reduction" from bandits to \BwK: it takes an arbitrary bandit algorithm and uses it as a subroutine. We use this reduction to derive several extensions.
\end{abstract}

\newpage
\tableofcontents
\newpage
\section{Introduction}
\label{sec:intro}
Multi-armed bandits is a simple abstraction for the tradeoff between \emph{exploration} and \emph{exploitation}, \ie between making potentially suboptimal decisions for the sake of acquiring new information and using this information for making better decisions. Studied over many decades, multi-armed bandits is a very active research area spanning computer science, operations research, and economics
\citep{CesaBL-book,Bergemann-survey06,Gittins-book11,Bubeck-survey12,slivkins-MABbook,LS19bandit-book}.

In this paper, we focus on bandit problems which feature supply or budget constraints, as is the case in many realistic applications.
For example, a seller who experiments with prices may have a limited inventory, and a website optimizing ad placement may be constrained by the advertisers' budgets.
This general problem is called
\emph{Bandits with Knapsacks} (\emph{BwK}) since, in this model, a bandit algorithm needs effectively to solve a \emph{knapsack problem}
(find an optimal packing of items into a limited-size knapsack) or generalization thereof.
The \BwK model was introduced in \cite{BwK-focs13} as a common generalization of numerous motivating examples, ranging from dynamic pricing to ad allocation to repeated auctions  to network routing/scheduling. Various special cases with budget/supply constraints were studied previously, \citep[\eg][]{BZ09,DynPricing-ec12,DynProcurement-ec12,Krause-www13,combes2015bandits}.

In \BwK, the algorithm is endowed with $d\geq 1$ limited resources that are consumed by the algorithm. In each round, the algorithm chooses an action (\emph{arm}) from a fixed set of $K$ actions, and the outcome consists of a reward and consumption of each resource; all are assumed to lie in $[0,1]$. The algorithm observes \emph{bandit feedback}, \ie only the outcome of the chosen arm. The algorithm stops at time horizon $T$, or when the total consumption of some resource exceeds its budget. The goal is to maximize the total reward, denoted $\REW$.

For a concrete example, consider \emph{dynamic pricing}.%
\footnote{See Section 8 in \cite{BwK-focs13} for a detailed discussion of this and many other examples.}
The algorithm is a seller with limited supply of some product. In each round, a new customer arrives, the algorithm chooses a price, and the customer either buys one item at this price or leaves. A sale at price $p$ implies reward of $p$ and consumption of $1$. This example easily extends to $d>1$ products/resources. Now in each round the algorithm chooses the per-unit price for each resource, and the customer decides how much of each resource to buy at this price.

Prior work on \BwK focused on the stochastic version of the problem, called \emph{\StochasticBwK}, where the outcome of each action is drawn from a fixed distribution. This problem has been solved optimally using three different techniques \citep{BwK-focs13,AgrawalDevanur-ec14}, and extended in various directions in subsequent work \citep{AgrawalDevanur-ec14,cBwK-colt14,CBwK-colt16,CBwK-nips16}.

We go beyond the stochastic version, and instead study the most ``pessimistic", adversarial version where the rewards and resource consumptions can be arbitrary. We call it \emph{adversarial bandits with knapsacks} (\emph{\AdvBwK}), as it extends the classic model of ``adversarial bandits" \citep{bandits-exp3}.
Bandits aside, this problem subsumes online packing problems \citep{OnlineMatchingBook,buchbinder2009design}, where algorithm observes \emph{full feedback} (the outcomes of all possible actions) in each round, and observes it \emph{before} choosing an action.

\xhdr{Hardness of the problem.}
\AdvBwK is a much harder problem compared to \StochasticBwK. The new challenge is that the algorithm needs to decide how much budget to save for the future, without being able to predict it. (It is also the essential challenge in online packing problems, and it drives our lower bounds.) This challenge compounds the ones already present in \StochasticBwK: that exploitation may be severely limited by the resource consumption during exploration, that optimal per-round reward no longer guarantees optimal total reward, and that the best fixed distribution over arms may perform much better than the best fixed arm. Jointly, these challenges amount to the following. An algorithm for \AdvBwK must compete, during any given time segment $[1,\tau]$, with a distribution over arms that maximizes the total reward on this time segment. However, this distribution may behave very differently, in terms of expected per-round outcomes, compared to the optimal distribution for some other time segment $[1,\tau']$.

	In more concrete terms, let $\OPTFD$ be the total expected reward of the \emph{best fixed distribution} over arms. In \StochasticBwK (as well as in adversarial bandits) an algorithm can achieve sublinear regret:
    $\OPTFD - \E[\REW] = o(T)$.%
\footnote{More specifically, one can achieve regret $\tilde{O}(\sqrt{KT})$ for adversarial bandits \citep{bandits-exp3}, as well as for \StochasticBwK if all budgets are $\Omega(T)$ \citep{BwK-focs13}. One can achieve sublinear regret for \StochasticBwK if all budgets are $\Omega(T^\alpha)$, $\alpha \in(0,1)$ \citep{BwK-focs13}.}
	In contrast, in \AdvBwK regret minimization is no longer possible, and we therefore are primarily interested in the \emph{competitive ratio} $\OPTFD/\E[\REW]$.

It is instructive to consider a simple example in which the competitive ratio is at least $\tfrac54-o(1)$ for any algorithm.
There are two arms and one resource with budget $\tfrac{T}{2}$. Arm $1$ has zero rewards and zero consumption. Arm $2$  has consumption $1$ in each round, and offers reward $\tfrac12$ in each round of the first half-time ($\tfrac{T}{2}$ rounds). In the second half-time, it offers either reward $1$ in all rounds, or reward $0$ in all rounds. Thus, there are two problem instances that coincide for the first half-time and differ in the second half-time. The algorithm needs to choose how much budget to invest in the first half-time, without knowing what comes in the second. Any choice leads to competitive ratio at least $\tfrac54$ on one of the problem instances.%


Extending this idea, we prove an even stronger lower bound on the competitive ratio:
\begin{align}\label{eq:intro-LB}
    \OPTFD/ \E[\REW] \geq \Omega(\log T).
\end{align}
Like the simple example above, the lower-bounding construction involves only two arms and only one resource, and forces the algorithm to make a huge commitment without knowing the future.

\xhdr{Algorithmic contributions.}
Our main result is an algorithm which nearly matches  \eqref{eq:intro-LB}, achieving
\begin{align}\label{eq:intro-ratio}
    \E[\REW] \geq \tfrac{1}{O(\log T)} \left( \OPTFD - \reg \right),
\end{align}
where $\reg$ is a low-order regret term.

	We put forward a new algorithm for \BwK, called \MainALG, that unifies the stochastic and adversarial versions. It has a natural game-theoretic interpretation for \StochasticBwK, and admits a simpler analysis compared to the prior work. For \AdvBwK, we use \MainALG as a subroutine, though with a different parameter and a different analysis, to derive two algorithms: a simple one that achieves \eqref{eq:intro-ratio}, and a more involved one that achieves the same competitive ratio with high probability. Absent resource consumption, we recover the optimal $\tilde{O}(\sqrt{KT})$ regret for adversarial bandits.


	\MainALG is based on a new perspective on \StochasticBwK. We reframe a standard linear relaxation for \StochasticBwK in a way that gives rise to a repeated zero-sum game, where the two players choose among arms and resources, respectively, and the payoffs are given by the Lagrange function of the linear relaxation. Our algorithm consists of  two online learning algorithms playing this repeated game. We analyze \MainALG for \StochasticBwK, building on the tools from regret minimization in stochastic games, and achieve a near-optimal regret bound when the optimal value and the budgets are $\Omega(T)$.%
\footnote{This regime is of primary importance in prior work \citep[\eg][]{BZ09,Wang-OR14}.}

	We obtain several extensions, where we derive improved performance guarantees for some scenarios. These extensions showcase the \emph{modularity} of \MainALG, in the sense that the two players can be implemented as arbitrary algorithms for adversarial online learning that admit a given regret bound. Each extension follows from the main results, with a different choice of the players' algorithms.

\OMIT{We obtain several extensions, where we derive improved performance guarantees as easy corollaries of the main results. The extensions concern three prolific lines of work: respectively, full feedback, combinatorial semi-bandits, and contextual bandits (see Section~\ref{sec:ext} for mode details).  In particular, for the contextual extension of \StochasticBwK we match the best regret rate from prior work when the optimal value is $\Omega(T)$.}

\OMIT{ 
Next, we analyze \MainALG for the adversarial setting. This analysis follows a different path, as we no longer have a stochastic game. We give a simple algorithm for \AdvBwK, which randomly guesses a parameter (essentially, the value of $\OPTFD$), and then invokes \MainALG with this parameter. This algorithm suffices to guarantee \eqref{eq:intro-ratio}, \asedit{which is in terms of the expected reward $\E[\REW]$}. We provide another algorithm for \AdvBwK to achieve the same competitive ratio with high probability. This algorithm is considerably  more involved, both in design and in analysis. It uses \MainALG as a subroutine, and invokes its adversarial analysis as a lemma.} 

\OMIT{We work out an application to dynamic pricing, where the algorithm is a seller with limited supply $B$ of several products. (Here, price vectors are the arms, and products are the resources.) We obtain regret for the stochastic version, and $O(\log B)$ competitive ratio for the adversarial version. Both performance guarantees are relative to unrestricted price vectors. This is significant because prior work for $d>1$ products only provided performance guarantees relative to the discretized version in which the price vectors are restricted to some fixed and finite mesh. As a by-product, we upper-bound the \emph{discretization error}: the difference in $\OPTFD$ between the original problem and the discretized version.}

\xhdr{Discussion.}
\MainALG has numerous favorable properties.  As just discussed, it is simple, unifying, modular, and yields strong performance guarantees in multiple settings.  It is the first ``black-box reduction" from bandits to \BwK: we take a bandit algorithm and use it as a subroutine for \BwK. This is a very natural algorithm for the stochastic version once the single-shot game is set up; indeed, it is immediate from prior work that the repeated game converges to the optimal distribution over arms. Its regret analysis for \StochasticBwK is extremely clean. Compared to prior work \citep{BwK-focs13,AgrawalDevanur-ec14}, we side-step the intricate analysis of sensitivity of the linear program to non-uniform stochastic deviations that arise from adaptive exploration.

\MainALG has a primal-dual interpretation, as arms and resources correspond respectively to primal and dual variables in the linear relaxation. Two players in the repeated game can be seen as the respective \emph{primal algorithm} and \emph{dual algorithm}. Compared to the rich literature on \emph{primal-dual algorithms}~\citep{williamson2011design,buchbinder2009design,OnlineMatchingBook}
\citep[including the more recent literature on stochastic online packing problems][]{DevanurH-ec09,AgrawalWY-OR14,DevanurJSW-ec11,FeldmanHKMS-esa10,MolinaroR-icalp12}
\MainALG has a very specific and modular structure dictated by the repeated game.

Logarithmic competitive ratios are fairly common and well-accepted in the area of approximation algorithms, and particularly in online algorithms (see Related Work for citations).

\OMIT{The modularity of \MainALG is crucial for the extensions mentioned above: each extension follows from the main results, with a different choice of the primal algorithm.}

\OMIT{
The modularity of \MainALG is crucial for our result on dynamic pricing. Indeed, for this result we analyze the primal algorithm in the repeated game, derive a new regret bound, and plug this regret bound into the general theorem for \MainALG. In particular, it suffices to analyze the discretization error for the primal algorithm, \ie for a standard bandit problem, whereas prior work had to deal with a much more complicated problem of analyzing the discretization error in \BwK directly.
}

\xhdr{Benchmarks.}
We argue that the best fixed distribution over arms is an appropriate benchmark for \AdversarialBwK. First, consider the total expected reward of the \emph{best dynamic policy}, denote it $\OPTDP$. (The best dynamic policy is the best algorithm, in hindsight, that is allowed to switch arms arbitrarily across time-steps.) This is the strongest possible benchmark, but it is \emph{too} strong for \AdvBwK. Indeed, we show a simple example with just one resource (with budget $B$), where competitive ratio against this benchmark is at least $\tfrac{T}{B^2}$ for any algorithm. Second, consider the total expected reward of the \emph{best fixed arm}, denote it $\OPTFA$. It is a traditional benchmark in multi-armed bandits, but is uninteresting for \AdvBwK. We show that the competitive ratio is at least $\Omega(K)$ in the worst case, and this is matched, in expectation and up to a constant factor, by a trivial algorithm that samples one arm at random and sticks with it forever.

For \StochasticBwK, these three benchmarks are related as follows. The best fixed distribution is still the main object of interest in the analysis. However, all results -- both ours and prior work -- are almost automatically extended to compete against the best dynamic policy. The best fixed arm is a much weaker benchmark than the best fixed distribution: there are simple examples when their expected reward differs by a factor of two, in multiple special cases of interest \citep{BwK-focs13}.

\OMIT{
Both our result and prior work on \StochasticBwK can be extended to a stronger benchmark, namely, the total expected reward of the \emph{best dynamic policy} (the best algorithm, in hindsight, that is allowed to switch arms arbitrarily across time-steps), denoted $\OPTDP$. Likewise, prior work on online packing problems achieves a similar $\log(T)$ competitive ratio against this stronger benchmark. However, we argue that this benchmark is \emph{too} strong for \AdvBwK: we show a simple example with just one resource (with budget $B$), where competitive ratio is at least $\tfrac{T}{B^2}$ for any algorithm.

A traditional benchmark in multi-armed bandits is the expected reward of the \emph{best fixed arm}, denoted $\OPTFA$. This is a much weaker benchmark for \StochasticBwK: there are simple examples with $\OPTFD\geq 2\,\OPTFA$ in multiple special cases of interest \cite{BwK-focs13}.%
\footnote{\Eg we have two arms and two resources. In each round, each arm $i\in \{1,2\}$ collects reward $1$, consumes $1$ unit of resource $i$, and $0$ units of the other resource. Both budgets are $B$, and $T=2B$. Then $\OPTFA=B$, whereas $\OPTFD$ is close to $2B$.}
For the adversarial version, $\OPTFA$ is both weaker and less interesting: we show that the competitive ratio is at least $\Omega(K)$ in the worst case, and this is matched, in expectation, by a trivial algorithm that samples one arm at random and sticks with it forever.
} 

\section{Related work}
\label{sec:related}

The literature on regret-minimizing online learning  is vast; see \cite{CesaBL-book,Bubeck-survey12,Hazan-OCO-book} for background. Most relevant are two algorithms for adversarial rewards/costs: Hedge for full feedback \citep{freund1999adaptive}, and EXP3 for bandit feedback \citep{bandits-exp3}; both are based on the weighted majority algorithm from \citep{LittWarm94}.

\StochasticBwK was introduced and optimally solved in \cite{BwK-focs13}. Subsequent work extended these results to soft supply/budget constraints \citep{AgrawalDevanur-ec14}, a more general notion of rewards%
\footnote{The total reward is determined by the time-averaged outcome vector, but can be an arbitrary Lischitz-concave function thereof.}
\citep{AgrawalDevanur-ec14},
combinatorial semi-bandits \citep{Karthik-aistats18}, and contextual bandits \citep{cBwK-colt14,CBwK-colt16,CBwK-nips16}. Several special cases with budget/supply constraints were studied previously:
dynamic pricing \citep{BZ09,DynPricing-ec12,BesbesZeevi-or12,Wang-OR14},
 dynamic procurement \citep{DynProcurement-ec12,Krause-www13} (a version of dynamic pricing where the algorithm is a buyer rather than a seller), dynamic ad allocation \citep{AdsWithBudgets-arxiv13,combes2015bandits}, and a version with a single resource and unlimited time \citep{Gyorgy-ijcai07,TranThanh-aaai10,TranThanh-aaai12,Qin-aaai13}.
 While all this work is on regret minimization,
\citet{GuhaM-icalp09,GuptaKMR-focs11} studied closely related Bayesian formulations.

\StochasticBwK was optimally solved using three different algorithms \citep{BwK-focs13,AgrawalDevanur-ec14}, with extremely technical and delicate analyses. All three algorithms involve inherently `stochastic' techniques such as ``successive elimination" and ``optimism under uncertainty", and do not appear to extend to the adversarial version. One of them, \term{PrimalDualBwK} from \cite{BwK-focs13}, is a primal-dual algorithm superficially similar to ours. Indeed, it decouples into two online learning algorithms: a ``primal" algorithm which chooses among arms, and a ``dual" algorithm similar to ours, which chooses among resources. However, the two algorithms are not playing a repeated game in any meaningful sense, let alone a zero-sum game. The primal algorithm operates under a much richer input: it takes the entire outcome vector for the chosen arm, as well as the ``dual distribution" -- the distribution over resources chosen by the dual algorithm. Further, the primal algorithm is very problem-specific: it interprets the dual distribution as a vector of costs over resources, and chooses arms with largest reward-to-cost ratios, estimated using ``optimism under uncertainty".

Our approach to using regret minimization in games can be traced to \cite{freund1996game,freund1999adaptive} (see Ch. 6 in \cite{SchapireBoosting}), who showed how a repeated zero-sum game played by two agents yields an approximate Nash equilibrium.
\OMIT{The framework of regret minimization in games (Lemma~\ref{lem:learningGames} and related results) was introduced in  and has been significant in several ways. On a face value, it is about a game between two agents that run regret-minimizing algorithms. Second, it advances regret minimization as a plausible behavioral model that leads to approximate Nash equilibria. Third, it gives a way compute an approximate Nash equilibrium. }
This approach has been used as a unifying algorithmic framework for several  problems: boosting~\citep{freund1996game}, linear programs~\citep{AroraHK}, maximum flow~\citep{Christiano11MaxFlow},
and  convex optimization~\citep{abernethy2017frank,wang2018acceleration}.
While we use a result with the $1/\sqrt{t}$ convergence rate for the equilibrium property, recent literature obtains faster convergence for cumulative payoffs (but not for the equilibrium property) under various assumptions
\citep{Rakhlin-nips13,Daskalakis-GEB15,Haipeng-colt18}.

	Repeated Lagrangian games, in conjunction with regret minimization in games, have been used in a series of recent papers
\citep{rogers2015inducing,hsu2016jointly,roth2016watch,pmlr-v80-kearns18a,agarwal2018reductions,roth2017multidimensional},
as an algorithmic tool to solve convex optimization problems; application domains range from differential privacy to algorithmic fairness to learning from revealed preferences. All these papers deal with deterministic games (\ie same game matrix in all rounds). Reframing the problem in terms of repeated Lagrangian games is a key technical insight in this work. 
Most related to our paper are \citet{roth2016watch,roth2017multidimensional}, where a repeated Lagrangian game is used as a subroutine (the ``inner loop") in an online algorithm; the other papers solve an offline problem. We depart from this prior work in several respects. Our main results are for the adversarial version, where the standard machinery does not apply and we provide a very different analysis. For the stochastic version, we use a stochastic game and we deal with some subtleties specific to \BwK.

\OMIT{KARTHIK'S READING REPORT, PLEASE DO NOT REMOVE!
\kaedit{Papers with repeated Lagrangian games \cite{rogers2015inducing,hsu2016jointly,roth2016watch,pmlr-v80-kearns18a,agarwal2018reductions,roth2017multidimensional}.
	\begin{itemize}
		\item Solving an online problem? - \cite{roth2016watch}, \cite{roth2017multidimensional}
		\item Playing on a stochastic game? - None
		\item Using no-regret rather than best-response for the dual? \cite{agarwal2018reductions,pmlr-v80-kearns18a}
		\item Solving the actual problem rather than some subroutine? \cite{hsu2016jointly,rogers2015inducing,pmlr-v80-kearns18a,agarwal2018reductions}.
And all they do, technically, is apply the hammer.
		\item Solving a problem whose relaxation is an LP very much like ours? - Most are general convex programs. Reduction to LP does not make sense in the problem setting.
		\item Doing some adversarial analysis? - Unsure what this means.
		\item Achieving smth optimal-ish? \cite{pmlr-v80-kearns18a,agarwal2018reductions}. Most of these papers do not cite a lower-bound. So I am not sure if its optimal.

	\end{itemize}
}} 


Online packing problems \citep[\eg][]{BuchbinderNaor09,buchbinder2009design,DevanurJSW-ec11}
can be seen as a special case of \AdvBwK with a much more permissive feedback model: the algorithm observes full feedback (the outcomes for all actions) before choosing an action. Online packing subsumes various \emph{online matching} problems, including the \emph{AdWords problem} \citep{MSVV07} motivated by ad allocation \cite[see][for a survey]{OnlineMatchingBook}. While we derive $O(\log T)$ competitive ratio against $\OPTFD$, online packing admits a similar result against $\OPTDP$.%
\OMIT{An extension to contextual BwK allows us to treat the full feedback as a context, and compete against the best fixed distribution over policies (see Section~\ref{sec:ext-CB}). The latter benchmark coincides with $\OPTDP$ for both the AdWords and the Online Packing problem. However, for AdWords one can obtain a constant competitive ratio, whereas our results only imply $O(\log T)$ ratio.}

Another related line of work concerns online convex optimization with constraints \citep{mahdavi2012trading,mahdavi2013stochastic,chen2017online,neely2017online,chen2018bandit}. Their setting differs from ours in several important respects. First, the action set is a convex subset of $\R^K$ (and the algorithms rely on the power to choose arbitrary actions in this set). In particular, there is no immediate way to handle discrete action sets.%
\footnote{Unless there is full feedback, in which case one can use a standard reduction whereby actions in online convex optimization correspond to distributions over actions in a $K$-armed bandit problem.}
Second, convexity/concavity is assumed on the rewards and resource consumption.
Third, full feedback is observed for the resource consumption. Moreover,  in all papers except \citet{chen2018bandit} one also observes either full feedback on rewards or the rewards gradient around the chosen action. Fourth, their algorithm only needs to satisfy the budget constraints at the time horizon (whereas in BwK the budget constraints hold for all rounds). Fifth, their fixed-distribution benchmark is weaker than ours: essentially, its time-averaged consumption must be small enough at each round $t$. Due to these differences, this setting admits sublinear regret for adversarial outcomes \citep{neely2017online}. The other papers in this line of work focus on  stochastic outcomes.

Logarithmic competitive ratios are quite common in prior work on approximation algorithms and online algorithms. Examples include:
set cover \citep{lovasz1975ratio,johnson1974approximation},
buy-at-bulk network design \citep{awerbuch1997buy},
sparsest cut \citep{arora2009expander},
the dial-a-ride problem \citep{charikar1998finite},
online k-server \citep{bansal2011polylogarithmic},
online packing/covering \citep{azar2016online},
online set cover \citep{alon2003online},
online network design \citep{umboh2015online},
and online paging \citep{fiat1991competitive}.

\OMIT{ KARTHIK'S READING REPORT, PLEASE DO NOT REMOVE!
\kaedit{notes for related work on Bandit convex optimization.
	All works below have the following setting. At each time, we choose a point $\vec{x}_t$. Then we are shown/receive a loss $f_t(x_t)$ (bandit or full-feedback (varies by work)) and receive cost functions $g_{t,1}, g_{t, 2}, .. g_{t, d}$ (full feedback). The goal is to minimize the total loss $\sum_{t \in [T]} f_t(x_t)$ such that the long-term constraint $\sum_{t \in [T]} g_{t, i} \leq 0$ for every $i \in [d]$. Note that in intermediate time-steps, one can violate the constraints.
	
	This setting differs from BwK in the following ways. First, we only receive bandit feedback in the constraints. Second, in our setting the constraints should be maintained at every time-step. The game stops the first time we violate a constraint. This is critical since, we cannot obtain a sublinear regret with hard constraints, while in their setting they obtain a sub-linear regret. There doesn't seem to be a natural reduction from BwK to their setting. Third, all action sets are convex (and hence continuous). They do not extend to discrete action steps.
	Fourth, the functions $g_{t, i}$ cannot all be positive. This is because, they assume that there exists a feasible solution in hind-sight. Thus, this implies we have both consumption and replenishment of resources.
	
	\begin{itemize}
		\item \cite{chen2017online}. Adversarial setting. Not concurrent work. Full-feedback on the objective.
		\item \cite{neely2017online}. Adversarial setting. Not concurrent work. Full-feedback on the objective. Unpublished.
		\item \cite{mahdavi2012trading}. Adversarial setting. Not concurrent work. Bandit feedback on the objective. The functions $g$ are all known beforehand. They assume that they have access to the gradient of the objective function at each time-step.
		\item  \cite{mahdavi2013stochastic} Stochastic setting. Not concurrent work. Full feedback on the objective. The functions $g$ are all linear.
		\item \cite{chen2018bandit} Adversarial setting. Published 22 May 2018. Bandit feedback on the objective. No gradient assumption. They obtain an estimate to gradient using one point access to the function.
	\end{itemize}
}} 

\subsection{Simultaneous and independent work}
\label{sec:related-simultaneous}

Three related papers have come to our attention after the initial version of our paper has appeared on {\tt arxiv.org} in Nov'18. At the time, \citet{rivera2018online,rangi2018unifying} have been available as yet unpublished technical reports, and \citet{Jake-icml19} has not yet appeared.

\citet{rivera2018online} consider online convex optimization with knapsacks: essentially, the problem defined in Section~\ref{sec:ext-BCO}, but with full feedback. Focusing on the stochastic version, they design an algorithm similar to \MainALG, and derive a regret bound similar to ours, using a similar analysis. They also claim an extension to bandit feedback, without providing any details (such as the  precise statement of Lemma~\ref{lem:learningGames} in terms of the regret property \eqref{eq:prelims-regret-weak}).

\citet{rangi2018unifying} consider \AdvBwK in the special case when there is only one constrained resource, including time. They attain sublinear regret, \ie a regret bound that is sublinear in $T$. They also assume a known lower bound $c_{\min}>0$ on realized per-round consumption of each resource, and their regret bound scales as $1/c_{\min}$. They also achieve $\text{polylog}(T)$ instance-dependent regret for the stochastic version using the same algorithm (matching results from prior work on the stochastic version). \BwK with only one constrained resource (including time) is a much easier problem, compared to the general case with multiple resources studied in this paper, in the following sense. First, the single-resource version admits much stronger performance guarantees ($\text{polylog}(T)$ vs. $\sqrt{T}$ regret bounds for \StochasticBwK, and sublinear regret vs. competitive ratio for \AdvBwK). Second, the optimal all-knowing time-invariant policy is the best fixed arm, rather than the best fixed distribution over arms.

\citet{Jake-icml19} study online learning in repeated adversarial zero-sum games (which is our main technical tool). They obtain a powerful result for arbitrary games: an online learning algorithm which controls both players and guarantees convergence to the Nash equilibrium. They apply their framework to train Generative Adversarial Networks (GANs). Interestingly, they achieve the competitive ratio of $1$, despite the adversarial setting.  Their algorithm can continue up to round $T$, with no stopping rule like in \BwK; for this reason, their results do not have an immediate bearing on our problem.

\OMIT{
\asedit{Dynamic pricing and, more generally, revenue management problems, have a rich literature in Operations Research \cite{Boer-survey15}, mainly focusing on parameterized demand distributions with Bayesian priors. The study of prior-independent dynamic pricing has been initiated in~\cite{Blum03,KleinbergL03}, without limited supply constraints. Dynamic pricing with a limited supply of $d=1$ product has been introduced in \cite{BZ09}. It has been optimally solved against the best fixed arm in \cite{DynPricing-ec12,Wang-OR14}, and finally against the best dynamic policy in \cite{BwK-focs13}. Dynamic pricing with a limited supply of multiple products was first studied in \cite{BesbesZeevi-or12}, and subsequently has been an important special case in all work on \BwK. Most of this work, starting from \cite{KleinbergL03}, uses the \emph{fixed discretization} technique, where the infinite set of prices / price vectors is replaced by a mesh.%
\footnote{This technique is also standard in the work on multi-armed bandits with Lipschitz assumption (see Chapter 4 in \cite{slivkins-MABbook}).} The granularity of this mesh can be adjusted in the analysis, as long as the discretization error can be upper-bounded. The latter, however, has not been previously achieved for $d>1$ products.}
} 

%
%


\OMIT{ 
\emph{Online Matching} and \emph{AdWords} problem \cite{karp1990optimal,devanur2013randomized,MSVV07,DevanurH-ec09,
Karande11Unknown,OnlineMatchingBook,vertexWeighted11,Feldman09IID,Goel08Budgetted} and more general online packing problems
\cite{DevanurJSW-ec11,BuchbinderNaor09,buchbinder2009design,agrawal2014dynamic,kesselheim2014primal,alon2006general}.

Primal-Dual algorithms classic papers \cite{LPPrimalDual,MaxFlow,kuhn1955hungarian}. See surveys \cite{williamson2002primal,goemans1997primal} for overview in approximation algorithms.
} 


\section{Preliminaries}
	\label{sec:prelims}

We use bold fonts to represent vectors and matrices.
	We use standard notation whereby, for a positive integer $K$, $[K]$ stands for $\{1,2 \LDOTS K\}$, and $\Delta_K$ denotes the set of all probability distributions on $[K]$.
Some of the notation introduced further is summarized in Appendix~\ref{sec:notations}.


\xhdr{Bandits with Knapsacks (\BwK).}
There are $T$ rounds, $K$ possible actions and $d$ resources, indexed as
    $[T], [K], [d]$,
respectively. In each round $t\in [T]$, the algorithm chooses an action $a_t\in [K]$ and receives an \emph{outcome vector}
    $\vec{o}_t = (r_t; \, c_{t,1} \LDOTS c_{t,d}) \in [0,1]^{d+1}$,
where $r_t$ is a reward and $c_{t,i}$ is consumption of each resource $i\in [d]$. Each resource $i$ is endowed with budget $B_i\leq T$. The game stops early, at some round $\taualg<T$, when/if the total consumption of any resource exceeds its budget. The algorithm's objective is to maximize its total reward. Without loss of generality all budgets are the same:
    $B_1 = B_2 = \ldots = B_d = B$.%
\footnote{To see that this is indeed w.l.o.g., for each resource $i$, divide all per-round consumptions $c_{t,i}$ by $B_i/B$, where
    $B := \min_{i \in [d]} B_i$
is the smallest budget. In the modified problem instance, all consumptions still lie in $[0, 1]$, and all the budgets are equal to $B$.}

The outcome vectors are chosen as follows. In each round $t$, the adversary chooses the \emph{outcome matrix}
    $\vM_t \in [0,1]^{K\times (d+1)}$,
where rows correspond to actions. The outcome vector $\vec{o}_t$ is defined as the $a_t$-th row of this matrix, denoted $\vM_t(a_t)$. Only this row is revealed to the algorithm. The adversary is deterministic and \emph{oblivious}, meaning that the entire sequence
    $\vM_1 \LDOTS \vM_T$
is chosen before round $1$. A problem instance of \BwK consists of (known) parameters $(d,K,T,B)$, and the (unknown) sequence $\vM_1 \LDOTS \vM_T$.

In the stochastic version of \BwK, henceforth termed \emph{\StochasticBwK}, each outcome matrix $\vM_t$ is chosen from some fixed but unknown distribution $\outcomeD$ over the outcome matrices. An instance of this problem consists of (known) parameters $(d,K,T,B)$, and the (unknown) distribution $\outcomeD$.

Following prior work \citep{BwK-focs13,AgrawalDevanur-ec14}, we assume, w.l.o.g., that one of the resources is a \emph{dummy resource} similar to time; formally, each action consumes $B/T$ units of this resource per round (we only need this for \StochasticBwK). Further, we posit that one of the actions is a \emph{null action}, which lets the algorithm skips a round: it has $0$ reward and consumes $0$ amount of each resource other than the dummy resource.
	
\xhdr{Benchmarks.}
Let $\REW(\ALG) = \sum_{t\in [\taualg]} r_t$ be the total reward of algorithm $\ALG$ in the \BwK problem. Our benchmark is the \emph{best fixed distribution}, a distribution over actions which maximizes $\E[\REW(\cdot)]$ for a particular problem instance. The expected total reward of this distribution is denoted $\OPTFD$.

For \StochasticBwK, one can compete with the \emph{best dynamic policy}: an algorithm that maximizes $\E[\REW(\cdot)]$ for a particular problem instance. Essentially, this algorithm knows the latent distribution $\outcomeD$ over outcome matrices. Its expected total reward is denoted $\OPTDP$.

%
%
%

\xhdr{Adversarial online learning.}
To state the framework of ``regret minimization in games" below, we need to introduce the protocol of  \emph{adversarial online learning}, see Figure~\ref{prob:adv}.

\begin{figure}[h]
\begin{framed}
{\bf Given:} action set $A$, payoff range $[\rmin,\rmax]$.\\
In each round $t\in [T]$,
\begin{OneLiners}
\item[1.] the adversary chooses a payoff vector $\vec{f}_t\in [\rmin,\rmax]^K$;
\item[2.] the algorithm chooses a distribution $\vec{p}_t$ over $A$, without observing $\vec{f}_t$,
\item[3.] algorithm's chosen action $a_t\in A$ is drawn independently from $\vec{p}_t$;
\item[4.] payoff  $f_t(a_t)$ is received by the algorithm.
\end{OneLiners}
\vspace{-1mm}
\end{framed}
\vspace{-1mm}
\caption{Adversarial online learning}
\label{prob:adv}
\end{figure}

In this protocol, the adversary can use previously chosen arms to choose the payoff vector $\vec{f}_t$, but not the algorithm's random seed. The distribution $\vec{f}_t$ is chosen as a deterministic function of history. (The history at round $t$ consists, for each round $s<t$, of the chosen action $a_s$ and the observed feedback in this round.) We focus on two feedback models: \emph{bandit feedback} (no auxiliary feedback) and \emph{full feedback} (the entire payoff vector $\vec{f}_t$). The version for costs can be defined similarly, by setting the payoffs to be the negative of costs.

We are interested in adversarial online learning algorithms with known upper bounds on \emph{regret},
\begin{align}
\regAOL(T):= \textstyle
       \sbr{ \max_{a\in A} \sum_{t\in [T]} f_t(a) }
    \;-\;  \sbr{ \sum_{t\in [T]} f_t(a_t) }.
\end{align}
The benchmark here is the total payoff of the best arm, according to the payoff vectors actually chosen by the adversary. More precisely, we assume high-probability regret bounds of the following form:
\begin{align}\label{eq:prelims-regret-weak}
\textstyle
\forall \delta>0 \qquad
\Pr\sbr{ \regAOL(T)  \leq (\rmax-\rmin)\, R_\delta(T) } \geq 1-\delta,
\end{align}
for some function $R_\delta(\cdot)$. We will actually use a stronger version implied by \eqref{eq:prelims-regret-weak},%
\footnote{Regret bound \eqref{eq:prelims-regret-strong} follows from \eqref{eq:prelims-regret-weak} using a simple ``zeroing-out" trick: for a given round $\tau\in [T]$, the adversary can set all future payoffs to some fixed value $x\in [\rmin,\rmax]$, in which case
    $\regAOL(\tau) = \regAOL(T)$.}
\begin{align}\label{eq:prelims-regret-strong}
\textstyle
\forall \delta>0 \qquad
\Pr\left[\; \forall \tau\in [T]\quad \regAOL(\tau)  \leq (\rmax-\rmin)\, R_{\delta/T}(T) \;\right] \geq 1-\delta.
\end{align}

\noindent Algorithms EXP3.P \citep{bandits-exp3} for bandit feedback, and Hedge \citep{FS97} for full feedback, satisfy \eqref{eq:prelims-regret-weak} with, resp.,
\begin{align}\label{eq:prelims-regrets}
    \text{$R_\delta(T) = O\left(\sqrt{|A|\,T\log (T/\delta)} \right)$
~~and~~
    $R_\delta(T)=O\left( \sqrt{T \log (|A|/\delta) } \right)$.}
\end{align}

\xhdr{Regret minimization in games.}
We build on the framework of \emph{regret minimization in games}. A \emph{zero-sum game} $(A_1,A_2,\vG)$ is a game between two players $i\in \{1,2\}$ with action sets $A_1$ and $A_2$ and payoff matrix $\vG\in \R^{A_1\times A_2}$. If each player $i$ chooses an action $a_i\in A_i$, the outcome is a number $G(a_1,a_2)$. Player $1$ receives this number as \emph{reward}, and player $2$ receives it as \emph{cost}. A \emph{repeated zero-sum game} $\mG$ with action sets $A_1$ and $A_2$, time horizon $T$ and game matrices
    $\vG_1 \LDOTS \vG_T\in \R^{A_1\times A_2}$
is a game between two algorithms, $\ALG_1$ and $\ALG_2$, which proceeds over $T$ rounds such that each round $t$ is a zero-sum game $(A_1,A_2,\vG_t)$. The goal of $\ALG_1$ is to maximize the total reward, and the goal of $\ALG_2$ is to minimize the  total cost.

The game $\mG$ is called \emph{stochastic} if the game matrix $\vG_t$  in each round $t$ is drawn independently from some fixed distribution. For such games, we are interested in the \emph{expected game}, defined by the expected game matrix $\vG = \E[\vG_t]$. We can relate the algorithms' performance to the minimax value of $\vG$.

\begin{restatable}{lemma}{learningGames}
\label{lem:learningGames}
Consider a stochastic repeated zero-sum game between algorithms $\ALG_1$ and $\ALG_2$, with payoff range $[\rmin,\rmax]$.  Assume that each $\ALG_j$, $j\in \{1,2\}$ is an algorithm for adversarial online learning, as per Figure~\ref{prob:adv}, which satisfies regret bound \eqref{eq:prelims-regret-weak} with $R_\delta(T) =R_{j,\delta}(T)$.

Let $\tau$ be some fixed round in the game. For each algorithm
    $\ALG_j$, $j\in \{1,2\}$,
let $A_j$ be its action set, let $p_{t,j}\in \Delta_{A_j}$ be the distribution chosen in each round $t$, and let
    $\bar{\vec{p}}_j = \frac{1}{\tau} \sum_{t\in [\tau]}\vec{p}_{t,j}$
be the average play distribution at round $\tau$. Let $v^*$ be the minimax value for the expected game $\vG = \E[\vG_t]$.

Then for each $\delta>0$, with probability at least $1-2\delta$ it holds that
\begin{align}
\forall \vec{p}_2\in \Delta_{A_2}\quad
\bar{\vec{p}}_1^\tran\, \vG\, \vec{p}_2 \geq v^*
    - \tfrac{1}{\tau}(\rmax-\rmin)\,
          \left(\; R_{1,\,\delta/T}(T)+R_{2,\,\delta/T}(T)
                + 4\sqrt{2T \log (T/\delta)} \; \right).
        \label{eq:prelims-games-play}
\end{align}
\end{restatable}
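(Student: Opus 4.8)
The plan is to adapt the classical ``regret minimization in games'' argument (Freund--Schapire) to two new features of our setting: the game matrix is random ($\vG_t$ in place of the fixed $\vG=\E[\vG_t]$), and the guarantees \eqref{eq:prelims-regret-strong} bound the \emph{realized} regret of the sampled actions $a_{t,j}$ rather than regret of the distributions $\vec{p}_{t,j}$. Recall the deterministic template: player~$1$'s no-regret property together with the minimax theorem shows that the average play value is at least $v^*$ up to $\ALG_1$'s regret, and then player~$2$'s no-regret property transfers this into a security guarantee for $\bar{\vec{p}}_1$, namely $\bar{\vec{p}}_1^\tran\vG\,\vec{p}_2\ge v^*-(\text{regret})$ for every $\vec{p}_2$. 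I would run this template while inserting martingale-concentration steps to move between the realized stochastic game and the expected game $\vG$.

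First I would set up the stochasticity as martingales with respect to the history $\mathcal{F}_{t-1}$ through round $t-1$. Since $\vec{p}_{t,1},\vec{p}_{t,2}$ are $\mathcal{F}_{t-1}$-measurable while $\vG_t$ and the sampled actions are drawn afterwards, we have $\E[\,G_t(a_{t,1},a_{t,2})\mid\mathcal{F}_{t-1}\,]=\vec{p}_{t,1}^\tran\vG\,\vec{p}_{t,2}$, and, writing $\vec{e}_a$ for the $a$-th basis vector, any expression $\vec{u}^\tran\vG_t\vec{e}_{a_{t,2}}$ with a \emph{fixed} vector $\vec{u}$ has conditional mean $\vec{u}^\tran\vG\,\vec{p}_{t,2}$. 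Each centered sequence is a bounded martingale-difference sequence with increments in $[\rmin,\rmax]$, so Azuma--Hoeffding controls its partial sums at $\tau$; allocating failure probability $\delta/T$ (mirroring the ``for all $\tau$'' form of \eqref{eq:prelims-regret-strong}) produces the $\sqrt{T\log(T/\delta)}$ terms. Let $\vec{p}_1^*$ be a maximin strategy of $\vG$, so $(\vec{p}_1^*)^\tran\vG\,\vec{q}\ge v^*$ for all $\vec{q}$. Since player~$1$'s realized best-response benchmark $\max_{a_1}\sum_{t\in[\tau]}G_t(a_1,a_{t,2})$ dominates the average $\sum_{t\in[\tau]}(\vec{p}_1^*)^\tran\vG_t\vec{e}_{a_{t,2}}$, I can lower-bound it using the \emph{single} fixed-strategy martingale for $\vec{u}=\vec{p}_1^*$ and then apply the minimax inequality, avoiding any union over $A_1$. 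Chaining player~$1$'s regret, this concentration, and the realized-play martingale gives, on the good event, $\sum_{t\in[\tau]}\vec{p}_{t,1}^\tran\vG\,\vec{p}_{t,2}\ge \tau v^*-(\rmax-\rmin)\big(R_{1,\delta/T}(T)+O(\sqrt{T\log(T/\delta)})\big)$.

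Next I would invoke player~$2$'s regret (cost version) to transfer this from the realized play to the average strategy $\bar{\vec{p}}_1$. As a single high-probability event, \eqref{eq:prelims-regret-strong} gives $\sum_{t\in[\tau]}G_t(a_{t,1},a_{t,2})\le\sum_{t\in[\tau]}G_t(a_{t,1},a_2)+(\rmax-\rmin)R_{2,\delta/T}(T)$ simultaneously for every $a_2$. Combined with the previous paragraph (and the realized-play martingale lower-bounding $\sum_{t\in[\tau]}G_t(a_{t,1},a_{t,2})$), this yields $\sum_{t\in[\tau]}G_t(a_{t,1},a_2)\ge\tau v^*-(\rmax-\rmin)(R_{1,\delta/T}(T)+R_{2,\delta/T}(T))-O(\sqrt{T\log(T/\delta)})$ for all $a_2$. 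Converting each realized sum to the expected game, $\sum_{t\in[\tau]}G_t(a_{t,1},a_2)\approx\sum_{t\in[\tau]}\vec{p}_{t,1}^\tran\vG\,\vec{e}_{a_2}=\tau\,\bar{\vec{p}}_1^\tran\vG\,\vec{e}_{a_2}$, gives the claimed bound for every pure action $a_2$. Because $\bar{\vec{p}}_1^\tran\vG\,\vec{p}_2$ is linear in $\vec{p}_2$, the inequality for pure actions extends to all $\vec{p}_2\in\Delta_{A_2}$, and a union bound over the two regret events accounts for the overall probability $1-2\delta$.

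The step I expect to be the main obstacle is exactly this last conversion: lower-bounding player~$1$'s \emph{security} value $\min_{a_2}\bar{\vec{p}}_1^\tran\vG\,\vec{e}_{a_2}$. Unlike player~$1$'s max-benchmark, which I bound below by substituting the single fixed strategy $\vec{p}_1^*$, the minimum over $a_2$ must be controlled at the data-dependent worst case, so the realized-to-expected conversion of $\sum_{t\in[\tau]}G_t(a_{t,1},a_2)$ has to hold \emph{uniformly} over $A_2$ rather than for one fixed comparator. Securing this uniform control within the clean concentration term---carefully bookkeeping the (few) martingale families and folding the union over the finite action set and over rounds into the $4\sqrt{2T\log(T/\delta)}$ factor, instead of paying a separate benchmark cost for each player---is the delicate part of the argument.
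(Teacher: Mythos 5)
Your proposal is correct and takes essentially the same route as the paper's proof (Appendix~\ref{app:pf-games}): upgrade each player's realized-action regret to the distribution level via Azuma--Hoeffding with a union bound over rounds (the paper packages this as Lemma~\ref{lem:standard-to-non-standard}), close player 1's side by substituting the single maximin strategy and invoking the minimax theorem, transfer the guarantee to $\bar{\vec{p}}_1$ via player 2's regret, and extend from pure comparators to all of $\Delta_{A_2}$ by linearity. The uniformity-over-$A_2$ subtlety you flag is genuine but is equally glossed over by the paper, whose Azuma step in the proof of \eqref{eq:rewardGamesUB} is likewise stated for a fixed $\vec{p}_2$ even though the conclusion is quantified over all $\vec{p}_2$; the standard fix (a union bound over player 2's pure actions) merely replaces the $\sqrt{2T\log(T/\delta)}$ term by $O\left(\sqrt{T\log(|A_2|\,T/\delta)}\right)$, which is harmless in the paper's applications where $|A_2|=d$.
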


\refeq{eq:prelims-games-play} states that the average play of player $1$ is approximately optimal against any distribution chosen by player $2$.%
\footnote{If each player $j$ chooses distribution $p_j\in \Delta_{A_j}$, and the game matrix is $\vG$, then expected reward/cost is $\vec{p}_1^\tran \vG \vec{p}_2$.}
This lemma is well-known for the deterministic case (\ie when $\vec{G}_t = \vec{G}$ for each round $t$), and folklore for the stochastic case. We provide a proof in Appendix~\ref{app:pf-games} for the sake of completeness.

\OMIT{ 
		Consider the following setup. We have $T$ rounds and two players. At each time-step $t$, player $1$ and player $2$ chooses distributions $\vec{P}_t$, $\vec{Q}_t$ respectively, over their actions. We assume that player $1$ uses a no-regret bandit algorithm such as EXP3 \cite{bandits-exp3} while player $2$ uses a no-regret full-information algorithm such as Hedge \cite{FS97,freund1999adaptive}. We then have the following convergence lemma.
} 

%
%

\section{A new algorithm for Stochastic BwK}
\label{sec:IID}

We present a new algorithm for \StochasticBwK, based on the framework of regret minimization in games. This is a very natural algorithm once the single-shot game is set up, and it allows for a very clean regret analysis. We will also use this algorithm as a subroutine for the adversarial version.

On a high level, we define a stochastic zero-sum game for which a  mixed Nash equilibrium corresponds to an optimal solution for a linear relaxation of the original problem. Our algorithm consists of two regret-minimizing algorithms playing this game. The framework of regret minimization in games guarantees that the average primal and dual play distributions ($\bar{\vec{p}}_1$ and $\bar{\vec{p}}_2$ in Lemma~\ref{lem:learningGames}) approximate the mixed Nash equilibrium in the expected game, which correspondingly approximates the optimal solution.

\subsection{Linear relaxation and Lagrange functions}

We start with a linear relaxation of the problem that all prior work relies on. This relaxation is stated in terms of expected rewards/consumptions, \ie implicitly, in terms of the expected outcome matrix $\vM = \E[\vM_t]$. We explicitly formulate the relaxation in terms of $\vM$, and this is essential for the subsequent developments. For ease of notation, we write the $a$-th row of $\vM$, for each action $a\in[K]$, as
    \[ \vM(a) = (r^{\vM}(a); \, c^{\vM}_1(a) \LDOTS c^{\vM}_d(a)), \]
so that $r^{\vM}(a)$ is the expected reward and $c^{\vM}_i(a)$ is the expected consumption of each resource $i$.

Essentially, the relaxation assumes that each instantaneous outcome matrix $\vM_t$ is equal to the expected outcome matrix
    $\vM = \E[\vM_t]$. The relaxation seeks the best distribution over actions, focusing on a single round with budgets rescaled as $B/T$. This leads to the following linear program (LP):

\begin{equation}
\label{lp:primalAbstract}
\begin{array}{ll@{}ll}
	\text{maximize} \qquad
&
\sum_{a\in [K]} X(a)\;  r^{\vM}(a) & \text{such that}\\
 		& \sum_{a\in [K]} X(a) = 1  \\
\displaystyle \forall i \in [d]  \qquad
& \sum_{a\in[K]} X(a)\; c^{\vM}_i(a) \leq B/T \\
	\displaystyle \forall a \in [K] \qquad & \displaystyle 0 \leq X(a) \leq 1.
\end{array}
\end{equation}
We denote this LP by $\myLP{\vM}{B}{T}$. The solution $\vec{X}$ is the best fixed distribution over actions, according to the relaxation. The value of this LP, denoted $\OPTLP(\vM,B,T)$, is the expected per-round reward of this distribution. It is also the total reward of $\vec{X}$ in the relaxation, divided by $T$.
We know from \cite{BwK-focs13} that
\begin{align}\label{eq:IID-benchmarks}
T\cdot\OPTLP(\vM,B,T) \geq \OPTDP\geq \OPTFD,
\end{align}
where $\OPTDP$ and $\OPTFD$ are the total expected rewards of, respectively, the best dynamic policy and the best fixed distribution. In words, $\OPTDP$ is sandwiched between the total expected reward of the best fixed distribution and that of its linear relaxation.

Associated with the linear program $\myLP{\vM}{B}{T}$ is the \emph{Lagrange function}
$\mL = \myLag{\vM}{B}{T}$. It is a function
$\mL: \Delta_K \times \mathbb{R}^d_{\geq 0} \rightarrow \R$ defined as
\begin{align}\label{eq:LagrangianGeneral}	
\mL(\vec{X}, \vec{\lambda})
:=  \sum_{a\in [K]} X(a)\, r^{\vM}(a) +
    \sum_{i\in [d]} \lambda_i
    \left[ 1-\frac{T}{B}\; \sum_{a\in [K]} X(a)\, c^{\vM}_i(a) \right].
\end{align}
The values $ \lambda_1 \LDOTS \lambda_d$ in \refeq{eq:LagrangianGeneral} are called the \emph{dual variables}, as they correspond to the variables in the dual LP. Lagrange functions are meaningful due to their max-min property (\eg Theorem D.2.2 in \cite{ben2001lectures}):
\begin{align}\label{eq:LagrangeMinMax}	
\min_{\vec{\lambda}\geq 0} \max_{\vec{X} \in \Delta_K}
    \mL(\vec{X}, \vec{\lambda})
= \max_{\vec{X} \in \Delta_K} \min_{\vec{\lambda}\geq 0}
    \mL(\vec{X}, \vec{\lambda})
= \OPTLP(\vM,B,T).
	\end{align}
This property holds for our setting because $\myLP{\vM}{B}{T}$ has at least one feasible solution (namely, one that puts probability one on the null action), and the optimal value of the LP is bounded.

\begin{remark}
We use the linear program $\myLP{\vM}{B}{T}$ and the associated Lagrange function $\myLag{\vM}{B}{T}$ throughout the paper. Both are parameterized by an outcome matrix $\vM$, budget $B$ and time horizon $T$. In particular, we can plug in an arbitrary $\vM$, and we heavily use this ability throughout. For the adversarial version, it is essential to plug in parameter $T_0\leq T$ instead of the time horizon $T$. For the analysis of the high-probability result in \AdvBwK, we use a rescaled budget $B_0\leq B$ instead of budget $B$.
\end{remark}

\subsection{Our algorithm: repeated Lagrangian game}

The Lagrange function $\mL = \myLag{\vM}{B}{T}$ from \eqref{eq:LagrangianGeneral} defines the following zero-sum game: the \emph{primal player} chooses an arm $a$, the \emph{dual player} chooses a resource $i$, and the payoff is a number
\begin{align}\label{eq:IID-Lagrangian-simple}
    \mL(a,i) = r^{\vM}(a) + 1-\tfrac{T}{B}\; c^{\vM}_i(a).
\end{align}
The primal player receives this number as a reward, and the dual player receives it as cost. This game is termed the \emph{Lagrangian game} induced by $\myLag{\vM}{B}{T}$. This game will be crucial throughout the paper.

The Lagrangian game is related to the original linear program as follows:

\begin{lemma} \label{lm:gameToLP}
Assume one of the resources is the dummy resource. Consider the linear program $\myLP{\vM}{B}{T}$, for some outcome matrix $\vM$. Then the value of this LP equals the minimax value $v^*$ of the Lagrangian game induced by $\myLag{\vM}{B}{T}$. Further, if
    $(\vec{X},\vec{\lambda})$
is a mixed Nash equilibrium in the Lagrangian game, then $\vec{X}$ is an optimal solution to the LP.
\end{lemma}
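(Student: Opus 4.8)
The plan is to match the game's minimax value $v^*$ with the Lagrangian max-min expression already recorded in \eqref{eq:LagrangeMinMax}, the only gap being that the dual player in the game randomizes over resources (a distribution $\vec q\in\Delta_d$) whereas the Lagrange dual variable $\vec\lambda$ ranges over all of $\R^d_{\ge 0}$. Concretely, the mixed payoff of the Lagrangian game at strategies $(\vec X,\vec q)$ is exactly $\mL(\vec X,\vec q)$, so by von Neumann's minimax theorem the game has a value $v^*=\max_{\vec X\in\Delta_K}\min_{\vec q\in\Delta_d}\mL(\vec X,\vec q)$, while \eqref{eq:LagrangeMinMax} gives $\OPTLP(\vM,B,T)=\max_{\vec X\in\Delta_K}\min_{\vec\lambda\ge 0}\mL(\vec X,\vec\lambda)$. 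I would prove $v^*=\OPTLP$ by two inequalities and then read off the equilibrium statement.

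One direction is immediate: since $\Delta_d\subseteq\R^d_{\ge 0}$, minimizing $\mL(\vec X,\cdot)$ over the smaller set $\Delta_d$ can only increase the value for each fixed $\vec X$, so $v^*\ge\OPTLP$. For the inner minimum over the simplex I would write $h_i(\vec X):=1-\tfrac{T}{B}\sum_a X(a)\,c^{\vM}_i(a)$ for the slack of resource $i$; since $\mL(\vec X,\vec q)=\sum_a X(a) r^{\vM}(a)+\sum_i q_i\, h_i(\vec X)$ is linear in $\vec q$, it is minimized at a vertex, giving $\min_{\vec q\in\Delta_d}\mL(\vec X,\vec q)=\sum_a X(a) r^{\vM}(a)+\min_{i\in[d]}h_i(\vec X)$. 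Here the dummy resource is essential: every action consumes it at rate exactly $B/T$, so its slack is identically $0$, whence $\min_i h_i(\vec X)\le 0$ for every $\vec X$, with equality precisely when $\vec X$ is LP-feasible.

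For the reverse inequality, set $\gamma=\gamma(\vec X):=-\min_i h_i(\vec X)\ge 0$. For feasible $\vec X$ the reward is at most $\OPTLP$ and $\gamma=0$, so the bound is clear; the crux is the infeasible case. There I would mix $\vec X$ with the null action, $\vec X'=(1-\beta)\vec X+\beta\,(\text{null})$ with $\beta=\tfrac{\gamma}{1+\gamma}$. Since the null action has zero consumption of every non-dummy resource, this scales each non-dummy consumption down by $1-\beta$, making $\vec X'$ feasible, while the reward scales to $\tfrac{1}{1+\gamma}\sum_a X(a) r^{\vM}(a)$. Feasibility of $\vec X'$ then gives $\sum_a X(a) r^{\vM}(a)\le(1+\gamma)\,\OPTLP$, hence $\min_{\vec q}\mL(\vec X,\vec q)=\sum_a X(a) r^{\vM}(a)-\gamma\le\OPTLP+\gamma(\OPTLP-1)\le\OPTLP$, using $\OPTLP\le 1$ (the per-round reward is bounded by $1$). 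Taking the max over $\vec X$ yields $v^*\le\OPTLP$, so $v^*=\OPTLP$. This infeasible case is the main obstacle: bounding a penalized objective whose penalty weights are forced to sum to one (so only the single most-violated slack is charged) requires exactly the two structural facts I exploit — the dummy resource, which guarantees $\min_i h_i\le 0$, and the normalization $\OPTLP\le 1$, which lets the null-action rescaling absorb the slack.

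For the second claim, a mixed Nash equilibrium $(\vec X,\vec\lambda)$ of a zero-sum game consists of a maximin primal strategy and a minimax dual strategy, and the equilibrium value equals $v^*=\OPTLP$; thus $\vec X$ attains $\min_{\vec q\in\Delta_d}\mL(\vec X,\vec q)=\OPTLP$, i.e.\ $\sum_a X(a) r^{\vM}(a)-\gamma(\vec X)=\OPTLP$. I would then argue $\vec X$ must be feasible: if $\gamma>0$, the infeasible-case bound above gives $\sum_a X(a) r^{\vM}(a)\le(1+\gamma)\OPTLP$, which combined with $\sum_a X(a) r^{\vM}(a)=\OPTLP+\gamma$ forces $\OPTLP\ge 1$ and hence reward $=1+\gamma>1$, impossible since rewards lie in $[0,1]$. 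So $\gamma=0$, $\vec X$ is feasible, and $\sum_a X(a) r^{\vM}(a)=\OPTLP$, i.e.\ $\vec X$ is an optimal LP solution. Everything beyond the two structural facts is the standard minimax/equilibrium dictionary together with \eqref{eq:LagrangeMinMax}.
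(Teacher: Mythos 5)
Your proof is correct, and it takes a genuinely different route from the paper's. The paper's appendix proof works \emph{locally at a mixed Nash equilibrium}: it writes the saddle-point inequalities and establishes a complementary-slackness claim (Claim~\ref{cl:sadHelpLemma}) --- (a) the equilibrium distribution $\vec{X}^*$ is LP-feasible, proved by shifting the mass of any supported arm onto the null arm, and (b) any resource with $\lambda_i^*>0$ has a tight constraint, proved by shifting all dual mass onto the dummy resource --- and then deduces optimality of $\vec{X}^*$ and the identity $\mL(\vec{X}^*,\vec{\lambda}^*)=\OPTLP$ by comparing against an arbitrary feasible solution. You instead prove the \emph{value identity first and globally}: you evaluate the inner minimum over $\Delta_d$ at a vertex, observe that the dummy resource forces your minimum slack $\min_i h_i(\vec{X})\le 0$, and handle infeasible $\vec{X}$ by the explicit mixture $\vec{X}'=(1-\beta)\vec{X}+\beta\,(\text{null})$ with $\beta=\gamma/(1+\gamma)$, which together with the normalization $\OPTLP\le 1$ shows the penalty $\gamma$ always outweighs any reward gained by violating constraints; the equilibrium claim then falls out of the standard zero-sum dictionary (a Nash primal strategy is maximin and attains the value $v^*$) plus the same infeasibility bound. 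Both arguments lean on exactly the same structural facts --- dummy resource, null arm, rewards in $[0,1]$ --- but yours explicitly proves the restricted strong-duality statement $\max_{\vec{X}\in\Delta_K}\min_{\vec{q}\in\Delta_d}\mL(\vec{X},\vec{q})=\OPTLP$, which is precisely what the main text advertises as the ``idea'' of the proof but the appendix never isolates, and it gives a quantitative handle (your $\gamma$) on how badly infeasible strategies are penalized; the paper's route, in exchange, produces complementary slackness at the equilibrium as a by-product, which your argument does not yield.
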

	
The proof can be found in Appendix~\ref{app:eq:LagrangeMinMaxNew}. The idea is that because of the special structure of the LP, the second equality in \eqref{eq:LagrangeMinMax} also holds when the dual vector $\vec{\lambda}$ is restricted to distributions.


Consider a repeated version of the Lagrangian game. Formally, the \emph{repeated Lagrangian game} with parameters $B_0\leq B$ and $T_0\leq T$ is a repeated zero-sum game between the \emph{primal algorithm} that chooses among arms and the \emph{dual algorithm} that chooses among resources. Each round $t$ of this game is the Lagrangian game induced by the Lagrange function
    $\mL_t :=\myLag{\vM_t}{B_0}{T_0}$,
where $\vM_t$ is the round-$t$ outcome matrix. Note that we use parameters $B_0,T_0$ instead of budget $B$ and time horizon $T$.%
\footnote{These parameters are needed only for the adversarial version. For \StochasticBwK we use $B_0=B$ and $T_0=T$.}

\begin{remark}
Consider repeated Lagrangian game for \StochasticBwK (with $B_0=B$ and $T_0=T$). The payoffs in the expected game are defined by the expected Lagrange function
    $\mL := \E[\mL_t]$.
By linearity, $\mL$ is the Lagrange function for the expected outcome matrix
    $\vM = \E[\vM_t] $:
\begin{align}\label{eq:IID-linearity}
 \mL := \E[\mL_t] = \myLag{\vM}{B}{T}.
\end{align}
\end{remark}

\OMIT{ 
\begin{remark}
Why does repeated Lagrangian game makes sense for \StochasticBwK? What the framework of regret minimization in games accomplishes is that the average primal and dual play distributions ($\bar{\vec{p}}_1$ and $\bar{\vec{p}}_2$ in Lemma~\ref{lem:learningGames}) approximate the mixed Nash equilibrium in the expected game, in some specific sense. And we know from Lemma~\ref{lm:gameToLP} that if $(\bar{\vec{p}}_1,\bar{\vec{p}}_2)$ were an \emph{exact} mixed Nash equilibrium, then $\bar{\vec{p}}_1$ would be an optimal solution for
    $\myLP{\vM}{B}{T}$,
which is essentially what we want in \StochasticBwK.
\end{remark}
} 

Our algorithm, called \MainALG, is very simple: it is a repeated Lagrangian game in which the primal algorithm receives bandit feedback, and the dual algorithm receives full feedback.

To set up the notation, let $a_t$ and $i_t$ be, respectively, the chosen arm and resource in round $t$. The payoff is therefore
    $\mL_t(a_t,i_t)$.
It can be rewritten in terms of the observed outcome vector
    $\vec{o}_t = (r_t; c_{t,1} \LDOTS c_{t,d})$
(which corresponds to the $a_t$-th row of the instantaneous outcome matrix $\vM_t$):
\begin{align}\label{eq:IID-Lagrange-t}
    \mL_t(a_t,i_t) = r_t + 1- \tfrac{T_0}{B_0}\; c_{t,i_t} \in \sbr{-\tfrac{T_0}{B_0}+1, 2}.
\end{align}

\noindent Note that the payoff range is $[\rmin,\rmax] = [-\tfrac{T_0}{B_0}+1]$.

With this notation, the pseudocode for \MainALG is summarized in Algorithm~\ref{alg:LagrangianBwK}. The pseudocode is simple and self-contained, without referring to the formalism of repeated games and Lagrangian functions. Note that the algorithm is implementable, in the sense that the outcome vector $\vec{o}_t$ revealed in each round $t$ of the \BwK problem suffices to generate full feedback for the dual algorithm.

\begin{algorithm2e}[!h]
\caption{Algorithm \MainALG for \StochasticBwK.}
\label{alg:LagrangianBwK}
\DontPrintSemicolon
\SetKwInOut{Input}{input}
\Input{parameters $B_0,T_0$, primal algorithm $\ALG_1$, dual algorithm $\ALG_2$.}
\tcp{$\ALG_1$, $\ALG_2$ are adversarial online learning algorithms}
\tcp{~~~with bandit feedback and full feedback, respectively}

\For{round $t = 1, 2, 3, \; \ldots $}{
\begin{enumerate}
\item $\ALG_1$ returns arm $a_t\in [K]$, algorithm $\ALG_2$ returns resource $i_t\in [d]$.\;

\item arm $a_t$ is chosen, outcome vector
        $\vec{o}_t = (r_t(a_t); c_{t,1}(a_t) \LDOTS c_{t,d}(a_t)) \in [0,1]^{d+1}$
    is observed.\;

\item The payoff $\mL_t(a_t,i_t)$ from \eqref{eq:IID-Lagrange-t} is reported to $\ALG_1$ as reward, and to $\ALG_2$ as cost.\;

\item The payoff $\mL_t(a_t,i)$ is reported to $\ALG_2$ for each resource $i\in [d]$.
\end{enumerate}}
\end{algorithm2e}

\subsection{Performance guarantees}

We consider algorithm \MainALG with parameter $T_0=T$. We assume the existence of the dummy resource; this is to ensure that the crucial step, \refeq{eq:IID-analysis-B}, works out even if the algorithm stops at time $T$, without exhausting any actual resources. We obtain a regret bound that is non-trivial whenever $B>\Omega(\sqrt{T})$, and is optimal, up to log factors, in the regime when $\min(\OPTDP,B)>\Omega(T)$.

\begin{theorem}\label{thm:IID}
Consider \StochasticBwK with $K$ arms, $d$ resources, time horizon $T$, and budget $B$. Assume that one resource is the dummy resource (with consumption $\tfrac{B}{T}$ for each arm).
Fix the failure probability parameter $\delta\in (0,1)$. Consider algorithm \MainALG with parameters $B_0=B$, $T_0=T$.

If EXP3.P and Hedge are used as the primal and the dual algorithms, respectively, then the algorithm achieves the following regret bound, with probability at least $1-\delta$:
\begin{align}\label{eq:thm:IID-specific}
\textstyle \OPTDP - \REW(\MainALG) \leq
    O\left( \frac{T}{B}\; \sqrt{T K  \log (dT/\delta)} \right).
\end{align}

In general, suppose each algorithm $\ALG_j$ satisfies a regret bound \eqref{eq:prelims-regret-weak} with
    $R_\delta(T) = R_{j,\delta}(T)$
and payoff range $[\rmin,\rmax] = [-\tfrac{T}{B}+1, 2]$.
Then with probability at least $1-O(\delta T)$ it holds that
\begin{align}\label{eq:thm:IID-general}
\textstyle \OPTDP - \REW(\MainALG) \leq
    O\Paren{\tfrac{T}{B}}\left(R_{1,\,\delta/T}(T) + R_{2,\,\delta/T}(T) +
	\sqrt{T \log (dT/\delta)} \right).
\end{align}
\end{theorem}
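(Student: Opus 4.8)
The plan is to combine the regret-minimization-in-games guarantee (\cref{lem:learningGames}) with martingale concentration, using the dummy resource to control the stopping behavior. Write $\vM = \E[\vM_t]$ for the expected outcome matrix. By \eqref{eq:IID-linearity} the expected game matrix of the repeated Lagrangian game is $\vG = \myLag{\vM}{B}{T}$, and by \cref{lm:gameToLP} its minimax value is $v^* = \OPTLP(\vM,B,T)$, with $v^*\le 1$ since rewards lie in $[0,1]$. The payoff range is $[\rmin,\rmax]=[1-\tfrac{T}{B},\,2]$, so $\rmax-\rmin = 1+\tfrac{T}{B} = O(\tfrac{T}{B})$, which is the source of the $O(T/B)$ factor. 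The goal is to lower bound $\REW(\MainALG)=\sum_{t\le\taualg} r_t$ by $Tv^*$ minus the error terms, after which $Tv^* = T\,\OPTLP(\vM,B,T)\ge \OPTDP$ by \eqref{eq:IID-benchmarks} finishes the job.

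First I would apply \cref{lem:learningGames} at the \emph{random} stopping time $\taualg$ rather than a fixed round: since $\taualg\in[T]$, it suffices to invoke the lemma at every $\tau\in[T]$ and union bound, which is exactly what produces the $1-O(\delta T)$ confidence. Writing $\bar{\vec{p}}_1 = \tfrac{1}{\taualg}\sum_{t\le\taualg}\vec{p}_{t,1}$ for the average primal play, $\bar r := \sum_a \bar p_1(a)\, r^{\vM}(a)$ and $\bar c_i := \sum_a \bar p_1(a)\, c^{\vM}_i(a)$, and testing against the dual distribution concentrated on the most-consumed resource $i^* = \argmax_i \bar c_i$, the lemma gives
\begin{align*}
\bar r + 1 - \tfrac{T}{B}\,\max_{i\in[d]} \bar c_i \;\ge\; v^* - \tfrac{1}{\taualg}(\rmax-\rmin)\left(R_{1,\delta/T}(T)+R_{2,\delta/T}(T)+4\sqrt{2T\log(T/\delta)}\right).
\end{align*}

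The crucial step is converting this per-round near-optimality into a bound over the random horizon $\taualg$, and this is where the dummy resource enters. In every case $\max_{i}\sum_{t\le\taualg} c_{t,i}\ge B$: a real resource overflows when $\taualg<T$, while the dummy resource (consuming $B/T$ per round) reaches exactly $B$ at $\taualg=T$; this is the inequality \refeq{eq:IID-analysis-B}. Next I would pass between realized and expected quantities via Azuma--Hoeffding applied to the martingale differences $c_{t,i}-\sum_a p_{t,1}(a)c^{\vM}_i(a)$ and $r_t - \sum_a p_{t,1}(a) r^{\vM}(a)$; these are martingales for \emph{any} adaptive play, so they hold jointly with the games guarantee, and a union bound over $\tau\in[T]$ and the $d$ resources yields deviations $O(\sqrt{T\log(dT/\delta)})$ (the source of the $\log(dT/\delta)$). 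Hence $\taualg\max_i\bar c_i \ge B-O(\sqrt{T\log(dT/\delta)})$ and $\REW(\MainALG)\ge \taualg\,\bar r - O(\sqrt{T\log(T/\delta)})$. Substituting $\tfrac{T}{B}\max_i\bar c_i \ge \tfrac{T}{\taualg} - \tfrac{T}{B\taualg}O(\sqrt{T\log(dT/\delta)})$ into the display, multiplying through by $\taualg$, and applying the reward concentration gives
\begin{align*}
\REW(\MainALG) \;\ge\; \taualg\, v^* + (T-\taualg) - O\Paren{\tfrac{T}{B}}\left(R_{1,\delta/T}(T)+R_{2,\delta/T}(T)+\sqrt{T\log(dT/\delta)}\right).
\end{align*}
Since $(T-\taualg)(1-v^*)\ge 0$ we have $\taualg v^* + (T-\taualg)\ge Tv^*\ge\OPTDP$, yielding the general bound \eqref{eq:thm:IID-general}. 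The specific bound \eqref{eq:thm:IID-specific} follows by plugging the EXP3.P and Hedge rates from \eqref{eq:prelims-regrets} (with $|A_1|=K$, $|A_2|=d$), so $R_1+R_2 = O(\sqrt{KT\log(dT/\delta)})$, and relabeling $\delta$ to fold $O(\delta T)$ into $1-\delta$ (changing only logarithmic factors).

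The main obstacle is the interplay between the random stopping time and the two high-probability guarantees: \cref{lem:learningGames} controls an \emph{expected-game} quantity evaluated at $\bar{\vec{p}}_1$, whereas the stopping condition and the reward concern \emph{realized} outcomes over a random number of rounds. Reconciling them requires (i) the union bound over all $\tau$ so both the games guarantee and the Azuma bounds hold at $\taualg$, and (ii) choosing the test distribution adaptively as the most-consumed resource so that the overflow inequality feeds directly into the Lagrangian payoff. The dummy resource is precisely what lets the single inequality $\max_i\sum_{t\le\taualg}c_{t,i}\ge B$ cover both the ``resource exhausted'' and ``horizon reached'' cases, which is what makes the telescoping $\taualg v^* + (T-\taualg)\ge Tv^*$ clean.
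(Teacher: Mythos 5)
Your proposal is correct and follows essentially the same route as the paper's proof: the Lagrangian-game minimax value via Lemma~\ref{lem:learningGames} applied at every fixed $\tau$ with a union bound, Azuma--Hoeffding to pass between realized and expected play, the dummy resource to force $\max_i\sum_{t\le\taualg}c_{t,i}\ge B$ even when $\taualg=T$, the telescoping $\taualg v^*+(T-\taualg)\ge Tv^*$, and the sandwich \eqref{eq:IID-benchmarks}. The only (immaterial) difference is that you test the dual point distribution on $\argmax_i\bar c_i$ while the paper tests it on the resource whose realized consumption overflowed; both are instances of the same uniform-over-$\vec{p}_2$ guarantee.
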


\begin{remark}	
To obtain \eqref{eq:thm:IID-specific} from the ``black-box" result \eqref{eq:thm:IID-general}, we use regret bounds in \refeq{eq:prelims-regrets}.
\end{remark}

\begin{remark}
From \cite{BwK-focs13}, the optimal regret bound for \StochasticBwK  is
\[ \OPTDP-\E[\REW] \leq \tilde{O}\left( \sqrt{K \OPTDP}\;( 1+\sqrt{\OPTDP/B}) \right).\]
Thus, the regret bound \eqref{eq:thm:IID-specific} is near-optimal if  $\min(\OPTDP,B)>\Omega(T)$, and non-trivial if $B>\Omega(\sqrt{T})$.
\end{remark}

We next prove the ``black-box"  regret bound \eqref{eq:thm:IID-general}. For the sake of analysis, consider a version of the repeated Lagrangian game that continues up to the time horizon $T$. In what follows, we separate the ``easy steps" from what we believe is the crux of the proof.

\xhdr{Notation.} Let $\vec{X}_t$ be the distribution chosen in round $t$ by the primal algorithm $\ALG_1$.
Let
    $\oX := \frac{1}{\tau} \sum_{t\in [\tau]} \vec{X}_t$
be the distribution of average play up to round $\tau$. Let
    $\vM = \E[\vM_t]$
be the expected outcome matrix. Let
    $\vec{r} = (r^{\vM}(a):\; a\in [K])$
be the vector of expected rewards over the actions. Likewise,
    $\vec{c}_i = (c_i^{\vM}(a):\; a\in [K])$
be the vector of expected consumption of each resource $i\in[d]$.

\xhdr{Using Azuma-Hoeffding inequality.}
Consider the first $\tau$ rounds, for some $\tau\in[T]$. The average reward and resource-$i$ consumption over these rounds are close to
    $\oX \cdot \vec{r}$ and $\oX \cdot \vec{c}_i$,
respectively, with high probability. Specifically, a simple usage of Azuma-Hoeffding inequality (Lemma~\ref{lem:AzumaHoeffding}) implies that
\begin{align}
\textstyle \tfrac{1}{\tau}\;\sum_{t \in [\tau]}r_t
    &\geq \oX \cdot \vec{r} - R_{0}(\tau)/\tau, \label{eq:AzumaReward} \\
\textstyle \tfrac{1}{\tau}\; \sum_{t \in [\tau]} c_{i,t}
    &\leq \oX \cdot \vec{c}_i+ R_{0}(\tau)/\tau,
    \qquad \forall i \in [d],
    \label{eq:AzumaResource}
\end{align}
hold with probability at least $1-\delta$, where $R_0(\tau) = O( \sqrt{\tau \log (d/\delta)})$.

\xhdr{Regret minimization in games.}
Let us apply the machinery from regret minimization in games to the repeated Lagrangian game. Consider the game matrix $\vG$ of the expected game. Using \refeq{eq:IID-linearity} and Lemma~\ref{lm:gameToLP}, we conclude that the minimax value of $\vG$ is
    $v^* = \OPTLP(\vM,B,T)$.

We apply Lemma~\ref{lem:learningGames}, with a fixed stopping time $\tau\in [T]$. Recall that the payoff range is
    $\rmax-\rmin = \tfrac{T}{B}+1$.
Thus, with probability at least $1-2\delta$ it holds that
\begin{align}\label{eq:IID-games-play}
\vec{\lambda}\in \Delta_{d}:\quad
\oX^\tran\, \vG\, \vec{\lambda}
    \geq v^* - \tfrac{1}{\tau} (\tfrac{T}{B}+1)\cdot \reg(T),
\end{align}
where the regret term is
    $\reg(T) := R_{1,\,\delta/T}(T)+R_{2,\,\delta/T}(T)
                     +  4\sqrt{2T \log (T/\delta)}$.

\xhdr{Crux of the proof.}
Let us condition on the event that \eqref{eq:AzumaReward}, \eqref{eq:AzumaResource}, and \eqref{eq:IID-games-play} hold for each $\tau\in [T]$. By the union bound, this event holds with probability at least
    $1-3\delta T$.

Let $\tau$ denote the \emph{stopping time} of the algorithm, the first round when the total consumption of some resource exceeds its budget. Let $i$ be the resource for which this happens; hence,
\begin{align}\label{eq:IID-analysis-B}
\textstyle    \sum_{t\in [\tau]}\; c_{i,t} >B.
\end{align}
Let us use \refeq{eq:IID-games-play} with
    $\vec{\lambda} = \vec{\lambda}^{(i)}$,
the point distribution for this resource. Then
\begin{align*}
 \oX^\tran\, \vG\, \vec{\lambda}^{(i)}
    &= \myLag{\vM}{B}{T}(\oX,\vec{\lambda}^{(i)})
        &\EqComment{by \refeq{eq:IID-linearity}} \\
    &= \oX \cdot \vec{r} +1 - \tfrac{T}{B}\; \oX \cdot \vec{c}_i
        &\EqComment{by definition of Lagrange function} \\
    &\leq \textstyle
        \frac{1}{\tau}\; \rbr{
            \rbr{\sum_{t \in [\tau]}r_t}
            - \rbr{ \tfrac{T}{B}\, \sum_{t \in [\tau]} c_{i,t} }
            +\tau + (1+\tfrac{T}{B})\,R_{0}(\tau)
        }
         &\EqComment{plugging in \eqref{eq:AzumaReward} and \eqref{eq:AzumaResource}} \\
    &\leq
    \textstyle
        \frac{1}{\tau}\; \rbr{
            \rbr{\sum_{t \in [\tau]}r_t} + \tau - T
        + (1+\tfrac{T}{B})\,R_{0}(\tau)
        }.
         &\EqComment{plugging in \refeq{eq:IID-analysis-B}}
\end{align*}
Plugging this into \refeq{eq:IID-games-play} and rearranging, we obtain
\begin{align*}
\textstyle
\sum_{t \in [\tau]}r_t
    \geq \tau\,v^* + T-\tau - (1+\tfrac{T}{B})\cdot \reg(T) -  (1+\tfrac{T}{B})\,R_{0}(\tau).
\end{align*}

Since $v^*\leq 1$ (because $v^*=\OPTLP$, as we've proved above),
\begin{align*}
\textstyle
\REW(\MainALG) = \sum_{t \in [\tau]}r_t
    \geq T\,v^* - (1+\tfrac{T}{B})\cdot \reg(T) -(1+\tfrac{T}{B})\,R_{0}(\tau).
\end{align*}
The claimed regret bound \eqref{eq:thm:IID-general} follows by \refeq{eq:IID-benchmarks}, completing the proof of Theorem~\ref{thm:IID}.

\section{A simple algorithm for \AdvBwK}
\label{sec:adversarial}

\newcommand{\Scale}{} 
	
We present and analyze an algorithm for \AdvBwK which achieves
    $d\cdot \log T$
competitive ratio, in expectation, up to a low-order additive term. Our algorithm is very simple: we randomly guess the value of $\OPTFD$ and run \MainALG with parameter $T_0$ driven by this guess. The analysis is very different, however, since we cannot rely on the machinery from regret minimization in stochastic games. The crux of the analysis (Lemma~\ref{lm:adv-crux}) is re-used to analyze the high-probability algorithm in the next section.

In hindsight, the intuition for our algorithm can be explained as follows. Since \MainALG builds on adversarial online learning algorithms $\ALG_j$, it appears plausibly applicable to \AdvBwK. We analyze it for an arbitrary parameter $T_0$, and find that it performs best when $T_0$ is tailored to $\OPTFD$ up to a constant multiplicative factor. This is precisely what our algorithm achieves using the random guess.
	
\OMIT{ 
\xhdr{Motivation.} Here we briefly describe the intuition behind the guess $\guess$ that satisfies Property~\ref{eq:Guess}.  The key difficulty is that the algorithm doesn't know a-priori the value $\tau=\tau^*$. This is the intuition behind why the algorithm needs to make a \emph{guess}. We now justify the origin behind guessing an approximation to $\OPTFD$.

	Consider the Lagrange function in \refeq{eq:LagrangianGeneral}. Recall that
	\[
		\max_{1 \leq \tau \leq T} \tau \OPT_{\LP}(\tau) = \max_{1 \leq \tau \leq T} \max_{\vec{X} \in \Delta_K} \min_{\vec{\lambda} \in \Delta_d} \tau \mL_{\vM_\tau}(\vec{X}, \vec{\lambda}).
	\]

	Thus for any given $\vec{X} \in \Delta_K$ and $\vec{\lambda} \in \Delta_d$, is it illuminating to re-write $\tau^* \ast \mL_{\vM_{\tau^*}}(\vec{X}, \vec{\lambda})$ as follows.
	\begin{equation}
		\label{eq:LagrangeAdv}
		\sum_{t=1}^{\tau^*} \sum_{a=1}^K X(a) r_t(a) + \sum_{i=1}^d \lambda_i \left(1- \sum_{t=1}^{\tau^*} \sum_{a=1}^K X(a) \frac{c_{t, i}(a)}{B} \right).
	\end{equation}
	We will now show that $\sum_{i=1}^d \lambda_i$ can be larger than $1$ and in-fact as large as $\OPT_{\LP}(\tau^*)$. This follows by looking at the corresponding dual program to $\LP(\tau^*)$. Ignoring the constraint corresponding to the dummy resource (since this resource has $0$ consumption on all arms) this can be written as follows.

\begin{equation*}
\begin{array}{ll@{}ll}
\tag{Benchmark-LP-Dual}
\label{lp:dual_single}
\displaystyle \text{minimize} & \displaystyle \qquad \sum_{i=1}^{d-1} \lambda_i + \nu \\
\displaystyle \forall a \in [m] & \displaystyle \qquad \left( \sum_{t=1}^{\tau^*} \sum_{i=1}^{d-1} c_{t, i}(a)/B \right) \lambda + \nu \geq \sum_{t=1}^{\tau^*} r_t(a) \\
 & \displaystyle \qquad  0 \leq \lambda_1, \lambda_2, \ldots, \lambda_{d-1}, \nu
\end{array}
\end{equation*}

From strong duality we have that for an optimal dual solution pair $(\vec{\lambda}^*, \nu^*)$ we have $\sum_{i=1}^{d-1} \lambda_i^* + \nu^* = \OPT_{\LP}(\tau^*)$. However our dual algorithm chooses distributions and hence the sum is $1$. To remedy this difference in scale, we can re-write $\sum_{i=1}^{d-1} \lambda_i + \nu$ as $\OPT_{\LP}(\tau^*) \Paren{\sum_{i=1}^d \tilde{\lambda}_i}$ where $\sum_{i=1}^d \tilde{\lambda}_i = 1$. Thus if we guessed the value of $\OPT_{\LP}(\tau^*)$ (or equivalently $\OPTFD$) we could then learn the optimal distribution $\vec{\tilde{\lambda}}$ via the dual algorithm. We don't explicitly use these observations in our analysis, but only use it to motivate our algorithm.
} 

Our algorithm is presented as Algorithm~\ref{alg:LagrangianBwKAdv}. We guess the value of $\OPTFD$ within a given range $[\Gmin,\Gmax]$. We guess $\OPTFD$ uniformly on the ``exponential scale": we draw the exponent $u$ uniformly at random, and define the guess as $\guess = \Gmin\cdot \kappa^u$, for some scale parameter $\kappa>1$.%
\footnote{Somewhat surprisingly, our results do not depend on the value $\kappa$. This is because the dependence on $\kappa$ is captured via a normalized integral
    $\ln \kappa \cdot \int_0^{\log_\kappa x} \kappa^u du$,
and this expression does not depend on $\kappa$.}
 We call \MainALG with $T_0 = \guess/(d+1)$. Our analysis works as long as $\OPTFD \leq \Gmax$, with $\Gmin$ appearing in the additive term.

\begin{algorithm2e}[h]
\caption{A simple algorithm for \AdversarialBwK.}
\label{alg:LagrangianBwKAdv}
\DontPrintSemicolon
\SetKwInOut{Input}{input}
\Input{scale parameter $\kappa>1$, guess range $[\Gmin, \Gmax]$, primal and dual algorithms $\ALG_1$, $\ALG_2$}
\tcp{$\ALG_1$, $\ALG_2$ are adversarial online learning algorithms}
\tcp{~~~with bandit feedback and full feedback, resp.}
Choose $u$ uniformly at random from $[0, \RndMax]$, where
    $\RndMax = \log_{\kappa} \tfrac{\Gmax}{\Gmin} $.\;
Guess the value of $\OPTFD$ as $\guess = \Gmin\cdot\kappa^u$.\;
Run \MainALG with algorithms $\ALG_1$, $\ALG_2$ and parameters $B_0=B$ and $T_0=\guess/(d+1)$.
\end{algorithm2e}

\begin{theorem}\label{thm:AdvBwK-main}
Consider \AdvBwK with $K$ arms, $d$ resources, time horizon $T$, and budget $B$. Assume that one of the arms is a \emph{null arm} that has zero reward and zero resource consumption. Consider Algorithm~\ref{alg:LagrangianBwKAdv} with scale parameter $\kappa > 1$. Suppose algorithms $\ALG_j$ that satisfy the regret bound \eqref{eq:prelims-regret-weak} with $\delta = T^{-2}$ and regret term
    $R_\delta(T) = R_{j,\delta}(T)$,
for any known payoff range
    $[\rmin,\rmax]$.

\begin{itemize}
\item[(a)] If $\OPTFD \leq \Gmax$ then the expected reward of Algorithm~\ref{alg:LagrangianBwKAdv} satisfies
\begin{align}\label{eq:adv-thm-range}
\E[\REW]
    \geq \frac{\OPTFD - \Gmin}
        { (d+1)\;
        \ln \left( \frac{\Gmax}{\Gmin} \right)} -\reg-1,
\end{align}
where
$\reg = (1+\tfrac{\OPTFD}{d B} )
            \left( R_{1,\,\delta/T}(T) + R_{2,\,\delta/T}(T)\right)$.

\item[(b)] In particular, taking $[\Gmin,\Gmax] = [\sqrt{T},T]$,
we obtain
\begin{align}\label{eq:adv-thm-logT}
\E[\REW] \geq \frac{\OPTFD-\sqrt{T}}
    {\tfrac12\, (d+1)\; \ln(T)} -\reg-1.
\end{align}
\end{itemize}
\end{theorem}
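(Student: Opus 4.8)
The plan is to first prove a deterministic, high-probability guarantee for \MainALG run with an \emph{arbitrary fixed} parameter $T_0$ --- this is the crux, \Cref{lm:adv-crux} --- and then average it over the random guess $\guess$. The target crux statement is that, with probability $1-O(\delta T)$,
\[
\REW(\MainALG)\;\geq\;\min\bigl(T_0,\;\OPTFD-d\,T_0\bigr)-\reg,
\]
where $\reg$ is the combined regret $(\rmax-\rmin)\bigl(R_{1,\delta/T}(T)+R_{2,\delta/T}(T)\bigr)$ of the two learners and $\rmax-\rmin=1+\tfrac{T_0}{B}$. Granting this, setting $T_0=\guess/(d+1)$ makes $\min(T_0,\OPTFD-dT_0)$ maximal --- and equal to $T_0$ --- exactly when $\guess=\OPTFD$; this is precisely why guessing $\OPTFD$ (equivalently, $T_0$) is the right move, and it is what the random guess achieves up to a constant factor on the exponential scale.

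To prove the crux I would rerun the repeated-Lagrangian-game bookkeeping of \Cref{thm:IID}, but replace the stochastic-games machinery of \Cref{lem:learningGames} by the per-prefix regret bound \eqref{eq:prelims-regret-strong}, which holds simultaneously for every prefix length $\tau'\le T$. Writing $\mL_t(a,i)=r_t(a)+1-\tfrac{T_0}{B}\,c_{t,i}(a)$, the primal no-regret property gives, for every distribution $\vec{X}$ and prefix $[1,\tau']$, a lower bound of $\sum_{t\le\tau'}\mL_t(a_t,i_t)$ against $\sum_{t\le\tau'}\mL_t(\vec{X},i_t)$, and the dual no-regret property gives the matching upper bound against any single resource $i$. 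I would then split on the stopping behavior. In the first case some resource $i^\ast$ is exhausted, i.e.\ $\sum_{t\le\tau}c_{t,i^\ast}(a_t)>B$: comparing the primal play to the \emph{null arm}, for which $\mL_t(\mathrm{null},\cdot)\equiv 1$, yields $\REW\ge\tfrac{T_0}{B}\sum_{t\le\tau}c_{t,i_t}(a_t)-(\rmax-\rmin)R_{1,\delta/T}(T)$; the dual no-regret property then forces the realized consumption $\sum_t c_{t,i_t}(a_t)$ to be within $(\rmax-\rmin)R_{2,\delta/T}(T)$ of $\sum_t c_{t,i^\ast}(a_t)>B$, giving $\REW\ge T_0-\reg$. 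In the second case the game reaches $t=T$ without exhausting any resource: I would take $(\vec{X}^\ast,s^\ast)$ to be the optimizer of the fractional LP relaxation of the best fixed distribution, so that $\sum_{t\le s^\ast}\langle\vec{X}^\ast,\vec{r}_t\rangle\ge\OPTFD$ while $\sum_{t\le s^\ast}\langle\vec{X}^\ast,\vec{c}_{t,i}\rangle\le B$ for every $i$. Applying the combined inequality at prefix $\tau'=s^\ast$ with $\vec{X}=\vec{X}^\ast$, I bound the penalty $\tfrac{T_0}{B}\sum_{t\le s^\ast}\langle\vec{X}^\ast,\vec{c}_{t,i_t}\rangle$ by summing the $d$ per-resource budget constraints, i.e.\ by $\tfrac{T_0}{B}\cdot dB=d\,T_0$, which gives $\REW\ge\OPTFD-d\,T_0-\reg$. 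Taking the worse of the two cases yields the crux.

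With the crux in hand it remains to average over the guess. Since $\guess=\Gmin\kappa^u$ with $u$ uniform on $[0,\RndMax]$ and $\RndMax=\log_\kappa(\Gmax/\Gmin)$, the substitution $g=\Gmin\kappa^u$, $du=dg/(g\ln\kappa)$ turns the expectation into
\[
\E[\REW]\;\ge\;\frac{1}{\ln(\Gmax/\Gmin)}\int_{\Gmin}^{\Gmax}\frac{\min\bigl(\tfrac{g}{d+1},\,\OPTFD-\tfrac{dg}{d+1}\bigr)-\reg}{g}\,dg,
\]
where $\ln\kappa$ and $\RndMax$ cancel --- this is the promised independence from $\kappa$ and the source of the $1/\ln(\Gmax/\Gmin)$ factor. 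Because the $\min$ equals $\tfrac{g}{d+1}$ on $g\le\OPTFD$, restricting the integral to $[\Gmin,\OPTFD]$ (legitimate since $\OPTFD\le\Gmax$ and $\REW\ge0$ elsewhere) leaves $\int_{\Gmin}^{\OPTFD}\tfrac{1}{d+1}\,dg=\tfrac{\OPTFD-\Gmin}{d+1}$ from the main term; a careful accounting keeps the regret inside the numerator (using $\ln(\OPTFD/\Gmin)\le\ln(\Gmax/\Gmin)$), and the $-1$ absorbs the slack by which consumption overshoots $B$ in the stopping round. The $O(\delta T)$ failure probability with $\delta=T^{-2}$ costs only a low-order additive term since $\REW\le T$. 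This gives \eqref{eq:adv-thm-range}, and part~(b) is immediate by plugging in $\Gmin=\sqrt{T},\ \Gmax=T$, for which $\ln(\Gmax/\Gmin)=\tfrac12\ln T$, recovering \eqref{eq:adv-thm-logT}.

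The main obstacle is the crux lemma, and within it the early-stopping case: the analysis there cannot compare against $\vec{X}^\ast$ (whose feasibility is only guaranteed up to horizon $s^\ast$, while the algorithm may stop much earlier), so the bound $\REW\ge T_0-\reg$ must instead be extracted from the null-arm comparison together with the dual learner's incentive to load weight onto the high-consumption resource. A secondary technical point worth isolating is the fractional-LP relaxation inequality $\OPTFD\le\sum_{t\le s^\ast}\langle\vec{X}^\ast,\vec{r}_t\rangle$, the adversarial analogue of \eqref{eq:IID-benchmarks}, which lets the per-round, per-prefix regret bounds stand in for the stochastic-game value used in \Cref{thm:IID}.
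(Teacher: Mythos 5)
Your proposal is correct and follows essentially the same route as the paper: your crux statement and its two-case proof (null-arm plus dual inequality when a resource is exhausted, giving $T_0-\reg$; primal inequality against the stopped-LP optimizer with the $dB$ consumption bound otherwise, giving $\OPTFD-dT_0-\reg$) is exactly the paper's Lemma~\ref{lm:adv-crux}, and your change-of-variables integration over the guess, including the cancellation of $\kappa$ and the main-term computation $\tfrac{\OPTFD-\Gmin}{d+1}$, is the paper's own wrap-up argument. The only cosmetic difference is that you split cases by whether any resource is ever exhausted rather than relative to a fixed round $\sigma$, which suffices here (the paper's $\sigma$-parametrization is only needed for reuse in its high-probability analysis).
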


\begin{remark}
One can use algorithms EXP3.P for $\ALG_1$ and Hedge for $\ALG_2$, with regret bounds given by \eqref{eq:prelims-regrets}, and achieve the regret term
    $\reg = O\Paren{1+ \tfrac{\OPTFD}{d B}}\;
        \sqrt{T K \log (Td/\delta)}$.
We obtain a meaningful performance guarantee as long as, say, $\reg<\OPTFD/2$; this requires
   $\OPTFD$ and $B$
to be at least
    $\widetilde{\Omega}(\sqrt{TK})$.
\end{remark}

\OMIT{ 
\begin{remark}
In retrospect, the intuition for our algorithm can be explained as follows. \MainALG builds on adversarial online learning algorithms $\ALG_i$, and appears plausibly applicable to the adversarial BwK problem. We analyze it for an arbitrary parameter $T_0$ in Lemma~\ref{lm:adv-crux} (which is the crux of the proof), and find that it performs best when $T_0$ is tailored to $\OPTFD$, up to a constant multiplicative factor. A simple way to achieve the latter is to guess $\OPTFD$ randomly, as in Algorithm~\ref{alg:LagrangianBwKAdv}.
\end{remark}
} 

\begin{remark}
We define the outcome matrices slightly differently compared to Section~\ref{sec:IID} in that we do not posit a dummy resource. Formally, we assume that the null arm has zero consumption in every resource. This is essential for case 1 (\emph{i.e.,} when $\taualg \leq \sigma)$ in the analysis of Lemma~\ref{lm:adv-crux}.
\end{remark}

\begin{remark}\label{rem:factor-d}
The $\log(T)$ appears in the competitive ratio because the algorithm needs to guess $\OPTFD$ up to a constant factor. The factor of $d$ can be traced to a pessimistic over-estimate in \eqref{eq:adv-crux-case2-dB}.
\end{remark}

\begin{remark}\label{rem:simplifies}
The algorithm simplifies when $d=1$, \ie if there is only one resource other than the dummy resource. Then the outcome matrices have only one resource, so the dual algorithm $\ALG_2$ is no longer needed.
\end{remark}

\begin{remark}\label{rem:reduction}
The problem can be reduced to the case $d=1$, which simplifies the algorithm, as per Remark~\ref{rem:simplifies}, but increases the competitive ratio. The reduction is very simple: replacing all ``true resources" (\ie all resources other than the dummy resource) with the ``maximal resource" whose consumption is the maximum over the true resources.
The competitive ratio, \ie the denominator in \refeq{eq:adv-thm-range}, increases by the factor of
    $\frac{2d}{d+1}$.
Moreover, the reduction can be wasteful if the maximal consumption (across all resources) is much larger than a ``typical" consumption of each resource. The analysis compares algorithm's reward to the benchmark for the ``fake problem" with $d=1$, then compares the said benchmark to $\OPTFD$. The former step is essentially the analysis in  Section~\ref{sec:adv-analysis}, albeit in a slightly simpler form. We omit the easy details.
\end{remark}

If a problem instance of \AdvBwK is actually an instance of adversarial bandits, then we recover the optimal $\tilde{O}(\sqrt{KT})$ regret. (This easily follows by examining the proof of Lemma~\ref{lm:adv-crux}.)


\begin{lemma}\label{lm:adv-recover}
Consider \MainALG, with algorithms EXP3.P for $\ALG_1$ and Hedge for $\ALG_2$,
for an instance of \AdvBwK with zero resource consumption. This algorithm obtains $\tilde{O}(\sqrt{KT})$ regret, for any parameters $B_0,T_0>0$. Accordingly, so does Algorithm~\ref{alg:LagrangianBwKAdv} with any scale parameter $\kappa>0$.
\end{lemma}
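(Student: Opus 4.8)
The plan is to show that when all resource consumptions are zero, \MainALG essentially reduces to running the primal algorithm $\ALG_1$ (here EXP3.P) in isolation as a standard adversarial bandit algorithm, so that its regret guarantee carries over directly. First I would observe what the Lagrangian payoffs look like under zero consumption. From \eqref{eq:IID-Lagrange-t}, with $c_{t,i_t}=0$ for every resource and every round, the payoff reported to $\ALG_1$ becomes $\mL_t(a_t,i_t) = r_t + 1 - \tfrac{T_0}{B_0}\cdot 0 = r_t + 1$. Thus the reward signal fed to the primal algorithm is exactly the true reward $r_t$ shifted by the additive constant $1$, regardless of which resource $i_t$ the dual algorithm selects. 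Since an additive constant shift of all payoffs does not change the regret of any adversarial online learning algorithm (the shift cancels in the difference $\max_a \sum_t f_t(a) - \sum_t f_t(a_t)$), the behavior of $\ALG_1$ is identical to running EXP3.P on the genuine reward sequence $r_1,\dots,r_T$.

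Next I would address the fact that the game never stops early: with zero consumption, no budget is ever exceeded, so the stopping time $\taualg = T$ and the algorithm runs through all $T$ rounds. Therefore $\REW = \sum_{t\in[T]} r_t$ is precisely the cumulative reward of EXP3.P on this adversarial bandit instance, and the benchmark $\OPTFD$ coincides with the best-fixed-arm benchmark $\max_a \sum_t r_t(a)$ (when there is no consumption, the best fixed distribution is dominated by a point mass on the best arm, so the two benchmarks agree). Invoking the EXP3.P regret bound from \eqref{eq:prelims-regrets}, namely $R_\delta(T) = O(\sqrt{KT\log(T/\delta)})$ over payoff range of constant width (the rewards lie in $[0,1]$, equivalently the shifted payoffs in $[1,2]$), gives $\max_a\sum_t r_t(a) - \REW \leq \tilde{O}(\sqrt{KT})$, which is the claimed regret.

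For the second sentence of the lemma, concerning Algorithm~\ref{alg:LagrangianBwKAdv}, I would note that this algorithm merely draws a random guess $\guess$ and then invokes \MainALG with $T_0 = \guess/(d+1)$ and $B_0 = B$. Since the argument above holds for \emph{any} choice of parameters $B_0,T_0>0$ (the value of $T_0/B_0$ multiplies the identically-zero consumption and hence is irrelevant), the regret guarantee is unaffected by the random guess. Thus Algorithm~\ref{alg:LagrangianBwKAdv} inherits the $\tilde{O}(\sqrt{KT})$ regret for every scale parameter $\kappa$.

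The only subtle point—and the step I would be most careful about—is the role of the dual algorithm $\ALG_2$. One must verify that $\ALG_2$'s choices truly have no effect on $\ALG_1$'s reward stream. This is exactly what the zero-consumption calculation above secures: the dependence on $i_t$ enters only through the term $\tfrac{T_0}{B_0}c_{t,i_t}$, which vanishes. Once this is checked, the dual player becomes a harmless passenger and the whole analysis collapses to the classical adversarial bandit guarantee, so no genuine obstacle remains; the work is entirely in confirming that the constant shift and the decoupling are exact rather than merely approximate.
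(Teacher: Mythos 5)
Your proposal is correct, and it reaches the same conclusion by a more self-contained route than the paper. The paper proves this lemma by specializing its crux lemma (Lemma~\ref{lm:adv-crux}) with $\sigma=T$: Case~1 (resource exhaustion) never fires, the consumption term in \eqref{eq:adv-crux-case2} vanishes, giving $\REW_T \geq T\cdot f(T) - U_1(T|T_0)$, and then it argues that $T\cdot f(T) = \max_{a}\sum_{t} r_t(a)$ because the LP constraints become vacuous, so the optimal distribution is a point mass. You bypass the lemma entirely and argue the decoupling directly: under zero consumption the Lagrangian payoff vector is $r_t(\cdot)+1$ for every arm regardless of the dual player's choice $i_t$, so \MainALG is literally EXP3.P run on an additively shifted reward sequence, the dual algorithm is a passenger, no early stopping occurs, and $\OPTFD$ collapses to the best fixed arm. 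The two arguments rest on the same underlying fact---the primal algorithm's adversarial regret guarantee applied to Lagrangian payoffs, which under zero consumption are shifted rewards---but yours is more elementary (no clean-event conditioning, no LP formalism), while the paper's is shorter given that Lemma~\ref{lm:adv-crux} is already in hand, which is exactly why the paper phrases the result as following ``by examining the proof of Lemma~\ref{lm:adv-crux}.'' One shared subtlety worth noting: both you and the paper take the regret of $\ALG_1$ to scale with the \emph{realized} payoff range $[1,2]$ rather than the nominal Lagrangian range $[-\tfrac{T_0}{B_0}+1,2]$ that EXP3.P would be tuned for inside \MainALG; since the paper makes precisely the same assertion (``The payoff range is $[1,2]$, so $U_1(T|T_0)=\tilde{O}(\sqrt{KT})$''), this is not a gap in your argument relative to the paper's, but it is the one place where a fully rigorous treatment would need to say how the primal algorithm's tuning interacts with the instance-dependent range.
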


\subsection{Analysis: proof of Theorem~\ref{thm:AdvBwK-main} and Lemma~\ref{lm:adv-recover}}
\label{sec:adv-analysis}

\xhdr{Stopped linear program.}
Let us set up a linear relaxation that is suitable to the adversarial setting. The expected outcome matrix is no longer available. Instead, we use \emph{average} outcome matrices:
\begin{align}\label{eq:adv-average-M}
    \textstyle  \bvM_\tau = \tfrac{1}{\tau}\; \sum_{t\in [\tau]}\; \vM_t,
\end{align}
the average up to a given intermediate round $\tau\in [T]$. Similar to the stochastic case, the relaxation assumes that each instantaneous outcome matrix $\vM_t$ is equal to the average outcome matrix $\bvM_\tau$. What is different now is that the relaxation depends on $\tau$: using $\bvM_\tau$ is tantamount to stopping precisely at this round.

With this intuition in mind, for a particular end-time $\tau$ we consider the linear program \eqref{lp:primalAbstract}, parameterized by the time horizon $\tau$ and the average outcome matrix $\bvM_\tau$.
Its value,
    $\OPTLP(\bvM_\tau, B,\tau)$,
represents the per-round expected reward, so it needs to be scaled by the factor of $\tau$ to obtain the total expected reward. Finally, we maximize over $\tau$. Thus, our linear relaxation for \AdvBwK is defined as follows:
\begin{align}\label{eq:adv-stoppedLP}
\textstyle
\OPTLPfull := \max_{\tau\in [T]}\;\tau\cdot \OPTLP(\bvM_\tau, B,\tau)
    \geq \OPTFD.
\end{align}
The inequality in \eqref{eq:adv-stoppedLP} is proved in the appendix (Section~\ref{sec:appxAdversarial}).


\xhdr{Regret bounds for $\ALG_j$.}
Since each algorithm $\ALG_j$, $j\in \{1,2\}$ satisfies regret bound \eqref{eq:prelims-regret-weak} with $\delta = T^{-2}$ and
    $R_\delta(T) = R_{j,\delta}(T)$,
it also satisfies a stronger version~\eqref{eq:prelims-regret-strong} with the same parameters.
Recall from \eqref{eq:IID-Lagrange-t} that the payoff range is
    $[\rmin,\rmax] = [-\tfrac{T_0}{B}+1, 2]$.
For succinctness, let
    $U_j(T|T_0) = (1+\tfrac{T_0}{B})\, R_{j,\,\delta/T}(T)$
denote the respective regret term in \eqref{eq:prelims-regret-strong}.

Let us apply these regret bounds to our setting.  Let $a_t\in [K]$ and $i_t\in [d]$ be, resp., the chosen arm and resource in round $t$. We represent the outcomes as vectors over arms:
    $\vec{r}_t, \vec{c}_{t,i}\in [0,1]^K$
denote, resp., reward vector and resource-$i$ consumption vector for a given round $t$. Recall that the round-$t$ payoffs in \MainALG are given by the Lagrange function
    $\mL_t :=\myLag{\vM_t}{B}{T_0}$
such that
\begin{align}\label{eq:ADV-Lagrange}
    \mL_t(a,i) = r_t(a) + 1- \tfrac{T_0}{B}\; c_{t,i}(a)
\end{align}
for each arm $a$ and resource $i$.
Consider the total Lagrangian payoff at a given round $\tau\in [T]$:
\begin{align}\label{eq:adv-total-payoff}
\textstyle
\sum_{t\in [\tau]} \mL_t(a_t,i_t)
= \REW_\tau +\tau - W_\tau,
\end{align}
where
    $\REW_\tau =\sum_{t\in[\tau]} r_t(a_t)$
is the total reward up to round $\tau$, and
    $W_\tau = \tfrac{T_0}{B} \sum_{t\in[\tau]} c_{t,i_t}(a_t)$
is the \emph{consumption term}. The regret bounds sandwich  \eqref{eq:adv-total-payoff} from above and below:
\begin{align}\label{eq:adv-regrets}
\left(\max_{a\in [K]}\; \sum_{t\in [\tau]} \mL_t(a,i_t)\right) - U_1(T|T_0)
\leq \REW_\tau +\tau - W_\tau
\leq \left(\min_{i\in [d]}\; \sum_{t\in [\tau]} \mL_t(a_t,i)\right) + U_2(T|T_0).
\end{align}
This holds for all $\tau\in [T]$, with probability at least $1-2\delta$. The first inequality in \eqref{eq:adv-regrets} is due to the primal algorithm, and the second is due to the dual algorithm. Call them \emph{primal} and \emph{dual} inequality, respectively.

\xhdr{Crux of the proof.}
We condition on the event that \eqref{eq:adv-regrets} holds for all $\tau\in [T]$, which we call the \emph{clean event}.
The crux of the analysis is encapsulated in the following lemma, which analyzes an execution of \MainALG with an arbitrary parameter $T_0$ under the clean event.

\begin{lemma}\label{lm:adv-crux}
Consider an execution of \MainALG with $B_0=B$ and an arbitrary parameter $T_0$ such that the clean event holds. Fix an arbitrary round $\sigma\in [T]$, and consider the LP value relative to this round:
\begin{align}\label{eq:lm:adv-crux-benchmark}
f(\sigma) := \OPTLP(\bvM_\sigma, B,\sigma).
\end{align}
The algorithm's reward up to round $\sigma$ satisfies
\begin{align}\label{eq:lm:adv-crux-sigma}
    \REW_\sigma \geq \min(T_0,\, \sigma\cdot f(\sigma) -dT_0) -
        \left(\; U_1(T|T_0) + U_2(T|T_0) \;\right).
\end{align}
Taking $\sigma$ to be the maximizer in \eqref{eq:adv-stoppedLP}, algorithm's reward satisfies
\begin{align}\label{eq:lm:adv-crux}
    \REW \geq \min(T_0,\OPTFD-dT_0) -
        \left(\; U_1(T|T_0) + U_2(T|T_0) \;\right).
\end{align}
\end{lemma}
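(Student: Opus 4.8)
The plan is to work entirely under the \emph{clean event}, so that both the primal and the dual inequality in \eqref{eq:adv-regrets} hold simultaneously for every $\tau\in[T]$, and then to split on whether the algorithm stops before or after the reference round $\sigma$. Throughout I read $\REW_\sigma$ as the reward actually collected in rounds $1,\dots,\min(\sigma,\taualg)$, so that $\REW=\REW_{\taualg}\ge\REW_\sigma$ automatically; this makes the passage from \eqref{eq:lm:adv-crux-sigma} to \eqref{eq:lm:adv-crux} immediate once $\sigma$ is chosen as the maximizer of \eqref{eq:adv-stoppedLP}, for which $\sigma f(\sigma)=\OPTLPfull\ge\OPTFD$. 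The two cases will supply the two arguments of the minimum in \eqref{eq:lm:adv-crux-sigma}.

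First I would handle the case $\taualg>\sigma$, where no resource is exhausted by round $\sigma$ and $\REW_\sigma=\sum_{t\in[\sigma]}r_t(a_t)$. I apply the primal inequality at $\tau=\sigma$, but I compare not against a single arm but against the distribution $\vec{X}^*\in\Delta_K$ optimal for $\myLP{\bvM_\sigma}{B}{\sigma}$; this is legitimate since $\max_{a}\sum_{t\in[\sigma]}\mL_t(a,i_t)\ge\sum_{a}X^*(a)\sum_{t\in[\sigma]}\mL_t(a,i_t)$. Expanding through the average matrix $\bvM_\sigma$, the reward part is exactly $\sigma\bigl(\sum_a X^*(a)\,r^{\bvM_\sigma}(a)\bigr)=\sigma f(\sigma)$, while the consumption part is $\tfrac{T_0}{B}\sum_{t\in[\sigma]}\sum_a X^*(a)\,c_{t,i_t}(a)$. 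The crucial estimate is to over-bound the single dual-chosen resource $i_t$ by the sum over all resources: $\sum_{t\in[\sigma]}\sum_a X^*(a)\,c_{t,i_t}(a)\le\sum_{i\in[d]}\sum_{t\in[\sigma]}\sum_a X^*(a)\,c_{t,i}(a)=\sigma\sum_{i\in[d]}\sum_a X^*(a)\,c^{\bvM_\sigma}_i(a)\le d\,B$, using LP feasibility $\sum_a X^*(a)c^{\bvM_\sigma}_i(a)\le B/\sigma$ for each $i$. This is the pessimistic over-estimate flagged in Remark~\ref{rem:factor-d}, and it is where the factor $d$ originates. Dropping the non-negative term $W_\sigma$ then yields $\REW_\sigma\ge\sigma f(\sigma)-dT_0-U_1(T|T_0)$ (so in this case $U_2$ is pure slack).

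Next I would treat the case $\taualg\le\sigma$, where the game stops because some resource $i^\ast$ has $\sum_{t\in[\taualg]}c_{t,i^\ast}(a_t)>B$, and $\REW_\sigma=\REW_{\taualg}$. Here I apply the primal inequality at $\tau=\taualg$ with the \emph{null arm} $z$ as comparator: because $z$ has zero reward and (crucially) zero consumption in \emph{every} resource, $\mL_t(z,i_t)=1$ for all $t$, so $\max_a\sum_{t\in[\taualg]}\mL_t(a,i_t)\ge\taualg$, which after cancelling $\taualg$ gives $\REW_{\taualg}\ge W_{\taualg}-U_1(T|T_0)$. To lower-bound the consumption term $W_{\taualg}$, I feed the exhausted resource $i^\ast$ into the dual inequality at $\tau=\taualg$: since $\min_{i}\sum_{t}\mL_t(a_t,i)\le\sum_{t}\mL_t(a_t,i^\ast)=\REW_{\taualg}+\taualg-\tfrac{T_0}{B}\sum_t c_{t,i^\ast}(a_t)<\REW_{\taualg}+\taualg-T_0$, the dual inequality forces $W_{\taualg}\ge T_0-U_2(T|T_0)$. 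Chaining the two displays gives $\REW_{\taualg}\ge T_0-\bigl(U_1(T|T_0)+U_2(T|T_0)\bigr)$.

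Combining the two cases yields \eqref{eq:lm:adv-crux-sigma}, since each case delivers one argument of the minimum (weakened by the full $U_1+U_2$), and then \eqref{eq:lm:adv-crux} follows by specializing $\sigma$ and using $\REW\ge\REW_\sigma$. I expect the main obstacle to be the mismatch between the resources $i_t$ actually played by the dual algorithm — the only ones entering $W_\tau$ — and the resources that govern the benchmark, namely the LP constraints when $\taualg>\sigma$ and the single overflowing resource $i^\ast$ when $\taualg\le\sigma$. Bridging this gap is precisely where the dual's no-regret guarantee is invoked: in the stopping case it converts budget violation at $i^\ast$ into the lower bound $W_{\taualg}\ge T_0-U_2$, and in the non-stopping case it is the crude union-over-resources bound interacting with LP feasibility, which is what unavoidably costs the factor $d$.
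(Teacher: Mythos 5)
Your proof is correct and follows essentially the same route as the paper's: the same case split on whether a resource is exhausted by round $\sigma$, the same use of the null arm in the primal inequality and the exhausted resource in the dual inequality for the stopping case, and the same comparison against the LP-optimal distribution $\vec{X}^*$ with the union-over-resources bound $\sum_{t\in[\sigma]}\vec{X}^*\cdot\vec{c}_{t,i_t}\le dB$ (the source of the factor $d$) in the non-stopping case. No gaps to report.
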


\refeq{eq:lm:adv-crux-sigma} is used, with a different $\sigma$, for the high-probability analysis in Section~\ref{sec:HP}.

\begin{proof}
Let $\taualg$ be the stopping time of the algorithm. We consider two cases, depending on whether some resource is exhausted at time $\sigma$. In both cases, we focus on the round
    $\min(\taualg,\sigma)$.

\textbf{Case 1: $\taualg\leq\sigma$ and some resource is exhausted.}
Let us focus on round $\tau=\taualg$. If $i$ is the exhausted resource, then
    $\sum_{t \in [\tau]} c_{t, i}(a_t) > B$.
Let us apply the dual inequality in \eqref{eq:adv-regrets} for this resource:
\begin{align*}
\REW_\tau +\tau - W_\tau - U_2(T|T_0)
    &\leq \textstyle  \sum_{t\in [\tau]} \mL_t(a_t,i) \\
    &= \textstyle \textstyle  \REW_\tau + \tau -
        \tfrac{T_0}{B} \sum_{t \in [\tau]} c_{t, i}(a_t) \\
    & \leq \REW_\tau + \tau - T_0.
\end{align*}
It follows that $W_\tau \geq T_0-U_2(T|T_0)$.

Now, let us apply the primal inequality in \eqref{eq:adv-regrets} for the null arm. Recall that the reward and consumption for this arm is $0$, so $\mL_t(\term{null}, i_t) = 1$ for each round $t$. Therefore,
\begin{align*}
\REW_\tau +\tau - W_\tau + U_1(T|T_0)
     &\geq \textstyle  \sum_{t\in [\tau]} \mL_t(\term{null},i_t)
     = \tau.
\end{align*}
We conclude that
    $\REW_\tau \geq  W_\tau - U_1(T|T_0) \geq T_0 - U_1(T|T_0) - U_2(T|T_0)$.

\textbf{Case 2: $\taualg \geq \sigma$.}
Let us focus on round $\sigma$. Consider the linear program
    $\myLP{\bvM_\sigma}{B}{\sigma}$,
and let $\vec{X^*}\in \Delta_K$ be an optimal solution to this LP. The primal inequality in \eqref{eq:adv-regrets} implies that
\begin{align}
\REW_\sigma +\sigma - W_\sigma + U_1(\sigma)
    &\geq \textstyle
        \max_{a\in [K]}\; \sum_{t\in [\sigma]} \mL_t(a,i_t) \nonumber\\
    &\geq \textstyle \sum_{t\in [\sigma]} \sum_{a\in [K]} X^*(a)\;\mL_t(a,i_t) \nonumber\\
    &= \textstyle
        \sigma +
        \sum_{t\in [\sigma]} \vec{X^*}\cdot \vec{r}_t -
        \tfrac{T_0}{B}\,\sum_{t\in [\sigma]} \vec{X^*}\cdot \vec{c}_{t,i_t}\nonumber\\
\REW_\sigma
    &\geq \textstyle \sigma\cdot f(\sigma) -
        \tfrac{T_0}{B}\, \sum_{t\in [\sigma]} \vec{X^*}\cdot \vec{c}_{t,i_t} - U_1(T|T_0).
            \label{eq:adv-crux-case2}
\end{align}
In the last inequality we used the fact that
    $\sum_{t \in [\sigma]} \vec{X^*}\cdot \vec{r_t} = \sigma\cdot f(\sigma)$
by optimality of $\vec{X^*}$.

    $\sum_{t\in [\sigma]}\vec{X^*}\cdot \vec{c}_{t,i}\leq B$
for each resource $i$, since $\vec{X^*}$ is a feasible solution for
    $\OPTLP(\bvM_\sigma, B,\sigma)$.
Then,
\begin{align}\label{eq:adv-crux-case2-dB}
\textstyle
\sum_{t\in [\sigma]} \vec{X^*}\cdot \vec{c}_{t,i_t}
    \leq  \textstyle
        \sum_{i\in[d]}\;\sum_{t\in [\sigma]} \vec{X^*}\cdot \vec{c}_{t,i}
    \leq dB.
\end{align}
Plugging \eqref{eq:adv-crux-case2-dB} into \eqref{eq:adv-crux-case2}, we conclude that
    $\REW_\sigma \geq \textstyle \sigma\cdot f(\sigma) -dT_0-U_1(T|T_0)$.

Conclusions from the two cases imply \eqref{eq:lm:adv-crux}, as claimed.
\end{proof}

\xhdr{Wrapping up (the easy version).}
$\OPTFD \in [\Gmin,\Gmax]$, then some guess $\guess$ is approximately correct:
\begin{align}\label{eq:adv:wrapup:guess}
 \OPTFD/\kappa \leq \guess \leq \OPTFD.
\end{align}
By Lemma~\ref{lm:adv-crux}, the algorithm's execution with this guess, assuming the clean event, satisfies \eqref{eq:lm:adv-crux}, where, recalling that $T_0 = \guess/(d+1)$, we have
\[ \min(T_0,\OPTFD-dT_0) \geq \frac{\OPTFD}{\kappa (d+1)}
\quad\text{and}\quad
T_0 \leq \frac{\OPTFD}{d+1}.\]

\noindent The regret term for this guess is
\begin{align*}
\reg = U_1(T|T_0) + U_2(T|T_0)
    \leq (1+\tfrac{\OPTFD}{(d+1)\,B})\,
        (R_{1,\,\delta/T}(T) + R_{2,\,\delta/T}(T)).
\end{align*}

\noindent To complete the proof of \eqref{eq:adv-thm-range} (with a much  larger constant in the denominator), note that we obtain a suitable guess $\guess$ with probability
    $1/\left\lceil \log_{\kappa} \tfrac{\Gmax}{\Gmin} \right\rceil$.

\xhdr{Wrapping up (optimizing the constants).}
Let us go beyond the ``approximately right guess" in \eqref{eq:adv:wrapup:guess}, and account for contributions of \emph{every} guess. In other words, let us integrate over the guesses.

Assume that $\OPTFD$ lies in the guess range $[\Gmin,\Gmax]$. Recall that the algorithm samples $u$ uniformly at random from the interval $[0,\RndMax]$. Write
\begin{align*}
    T_0  &= T_0(u) = \Gmin\cdot \kappa^u/(d+1),\\
    \reg &= \reg(u) = U_1(T|T_0(u)) + U_2(T|T_0(u)),\\
    \Lambda(u) &= \min(T_0(u),\max(0,\OPTFD-dT_0(u))).
\end{align*}
Then by Lemma~\ref{lm:adv-crux}, for a particular choice of $u$, the algorithm's reward satisfies
\begin{align}\label{eq:adv:wrapup:crux}
    \E[\REW \mid u] \geq \Lambda(u) - \reg(u)-1,
\end{align}
where the `-1' term accounts for the complement of the ``clean event".

The $\Lambda(u)$ term can be split into three cases as follows:
	\begin{align}\label{eq:adv:wrapup:split}
    \Lambda(u) = \begin{cases}
       T_0(u) &\quad\text{if }
       0 \leq u \leq
        \log_{\kappa} \tfrac{\OPTFD}{\Gmin},\\
      \OPTFD - d T_0(u) &\quad\text{if }
        \log_{\kappa} \tfrac{\OPTFD}{\Gmin}
        < u \leq \log_{\kappa} \Paren{\tfrac{d+1}{d} \cdot \tfrac{\OPTFD}{\Gmin}} ,\\
       0 &\quad\text{otherwise}.
     \end{cases}
	\end{align}
\noindent So, we are only interested in
    $u\leq u^*:= \log_{\kappa} \Paren{\tfrac{\OPTFD}{\Gmin}}$.

Integrating the right-hand side of \eqref{eq:adv:wrapup:crux} over $u$, we obtain:
\begin{align}\label{eq:adv:wrapup:int}
\E[\REW]
    \geq \frac{1}{\RndMax} \int_0^{u^*} \E[\REW\mid u]\; \dd u
    = \frac{1}{\RndMax} \int_0^{u^*}
        \Paren{\Lambda(u) - \reg(u) - 1}\; \dd u,
\end{align}
where
    $\RndMax = \log_{\kappa} \tfrac{\Gmax}{\Gmin} $
as per the algorithm's specification.

Using \eqref{eq:adv:wrapup:split} and omitting the easy details, the main term $\Lambda(u)$ integrates as follows:
\begin{align}\label{eq:adv:wrapup:int-main}
 \int_0^{u^*} \Lambda(u)\, \dd u
    \geq \frac{\OPTFD-\Gmin}{(\ln \kappa) (d+1)}.
 \end{align}
(We've only used the first ``regime" in \eqref{eq:adv:wrapup:split}. As for the second "regime" in \eqref{eq:adv:wrapup:split}, integrating over it only improves \eqref{eq:adv:wrapup:int-main} by a small additive term.)

To handle the regret term $\reg(u)$, note that it is non-decreasing with $u$, so
\begin{align*}
\frac{1}{\RndMax} \int_0^{u^*} \reg(u)\, \dd u
    \leq \frac{u^*\cdot \reg(u^*)}{\RndMax}
    \leq \reg(u^*).
 \end{align*}

\noindent Plugging this into \eqref{eq:adv:wrapup:int}, we obtain
 \begin{align}\label{eq:adv:wrapup:final}
\E[\REW]
    \geq \frac{1}{\RndMax} \left( \frac{\OPTFD-\Gmin}{(\ln \kappa) (d+1)} \right) - \reg(u^*) - 1.
\end{align}
Recalling that $T_0 = T_0(u^*) = \OPTFD/(d+1)$, we have
\begin{align*}
\reg(u^*) = U_1(T|T_0) + U_2(T|T_0)
    \leq (1+\tfrac{\OPTFD}{(d+1)\,B})\,
        (R_{1,\,\delta/T}(T) + R_{2,\,\delta/T}(T)).
\end{align*}
Finally, because of the $-\Gmin$ term in \eqref{eq:adv:wrapup:final}, the assumption $\OPTFD\geq \Gmin$ is redundant. This completes the proof of Theorem~\ref{thm:AdvBwK-main}(a).

\OMIT{ 
    \kaedit{Finally, to obtain the corollary, we minimize w.r.t $\kappa > 1$. Note that when $[\Gmin, \Gmax] = [1, T]$ the competitive ratio is $\kappa (d+1) \Cel{\log T/\log \kappa}$. Thus, we want to minimize the expression $\kappa/\log \kappa$ for $\kappa > 1$. Differentiating w.r.t. $\kappa$ and setting it to $0$, we have that the minimizer is when $\kappa = e$. Thus, the competitive ratio at this minimizer is $e (d+1) \Cel{\log T}$.}
} 

\xhdr{Proof Sketch of Lemma~\ref{lm:adv-recover}.}
Recall that in the adversarial bandit setting we have $\vec{c}_{i, t} = 0$ for every $i \in [d]$ and every $t \in [T]$. We re-analyze Lemma~\ref{lm:adv-crux} with $\sigma=T$. Notice that case 1 never occurs. Thus we obtain obtain \refeq{eq:adv-crux-case2} in case 2. Note that $\tfrac{T_0}{B}\, \sum_{t\in [\sigma]} \vec{X^*}\cdot \vec{c}_{t,i_t} = 0$ since $\vec{c}_{i, t}=0$. Therefore, we obtain
	\[ \REW_T \geq T \cdot f(T) - U_1(T|T_0). \]	

	We now argue that $T \cdot f(T) = \max_{a \in [K]} \sum_{t \in [T]} r_t(a)$. Let $\vec{X}^*$ be the optimal distribution over the arms. Thus $\sum_{t \in [T]} \vec{X^*}\cdot \vec{r_t} = T\cdot f(T)$. Note that since $\vec{c}_{i, t}=0$ the only constraint on $\vec{X}^*$ is that it lies in $\Delta_K$. Therefore the maximizer is a point distribution on $\max_{a \in [K]} \sum_{t \in [T]} r_t(a)$. This proof does not rely on any specific value for $B_0, T_0$. The payoff range is
    $[\rmax,\rmin] = [1, 2]$,
so
    $U_1(T|T_0) = \tilde{O}\Paren{\sqrt{K T}}$.

\section{High-probability algorithm for \AdvBwK}
\label{sec:HP}

\newcommand{\tvM}{\widetilde{\vM}}
\newcommand{\ips}{\term{ips}}

We recover the $O(d\log T)$ competitive ratio for \AdvBwK, but with high probability rather than merely in expectation. Our algorithm uses \MainALG as a subroutine, and re-uses the adversarial analysis thereof (Lemma~\ref{lm:adv-crux}). We do not optimize the regret term or the constant in the competitive ratio.

The algorithm is considerably more complicated compared to Algorithm~\ref{alg:LagrangianBwKAdv}. Instead of making one random guess $\guess$ for the value of $\OPTLPfull$, we iteratively refine this guess over time. The algorithm proceeds in phases. In the beginning of each phase, we start a fresh instance of \MainALG with parameter $T_0$ defined by the current value of $\guess$.%
\footnote{The idea of restarting the algorithm in each phase is similar to the standard ``doubling trick" in the online machine learning literature, but much more delicate in our setting.}
We update the guess $\guess$ in each round (in a way specified later), and stop the phase once $\guess$ becomes too large compared to its initial value in this phase. We invoke \MainALG with a rescaled budget
    $B_0 = B/\Theta \Paren{\log T }$.
Within each phase, we simulate the \BwK problem with budget $B_0$: we stop \MainALG once the consumption of some resource in this phase exceeds $B_0$. For the remainder of the phase, we play the null arm with probability $1-\gamma_0$ and do uniform exploration with the remaining probability, for some parameter $\gamma_0\in (0,1)$ (here and elsewhere, \emph{uniform exploration} refers to choosing each action with equal probability). The pseudocode is summarized in Algorithm~\ref{alg:LagrangianBwKAdvHighP}.

\begin{algorithm2e}[!h]
\caption{High-probability algorithm for \AdvBwK.}
\label{alg:LagrangianBwKAdvHighP}
\DontPrintSemicolon
\SetKwInOut{Input}{input}
\Input{scale parameter $\kappa$, exploration parameter $\gamma_0$, primal algorithm $\ALG_1$, dual algorithm $\ALG_2$}
\tcp{$\ALG_1$, $\ALG_2$ are adversarial online learning algorithms}
\tcp{~~~with bandit feedback and full feedback, resp.}

Initialize $\guess =1$.\;
\For{each phase}{
Start a fresh instance \ALG of \MainALG\;
~~~~with parameters $B_0=B/2\Cel{\log_\kappa T}$ and
    $T_0=\guess/\left( \revedit{3}\,d\lceil \log_{\kappa} T \rceil\right)$. \;
Define $\guessold := \guess$.\;
\For{each round in this phase}{
Recompute the guess $\guess$ \;
{\bf if} $\guess > \kappa \cdot \guessold$ {\bf then} start a new phase\;
\uIf{consumption of all resources in this phase does not exceed $B_0$}
    {Play the action chosen by \ALG,
    observe the outcome and report it back to \ALG.
     }
\Else{Choose the null arm with probability $1-\gamma_0$, do uniform exploration otherwise}
}}
\end{algorithm2e}

To complete algorithm's specification, let us define how to update the guess $\guess$ in each round $t$. The guess, denoted $\guess_t$, is an estimate for
    $\OPTLPfull[t]$,
-as defined in \eqref{eq:adv-stoppedLP}. We form this estimate using a standard \emph{inverse propensity scoring} (\emph{IPS}) technique. Let $\vec{p}_t$ and $a_t$ be, resp., the distribution and the arm chosen by the primal algorithm in round $t$. The instantaneous outcome matrix $\vM_t$ is estimated by matrix $\vM_t^\ips\in [0,\infty)^{K\times d}$ such that each row $\vM_t^\ips(a)$ is defined as follows:
	\begin{equation}\label{eq:IPSEst}
		 \vM_t^\ips(a) := \indicator{a_t=a}\; \tfrac{1}{p_t(a_t)}\; \vM_t(a).
	\end{equation}

For a given end-time $\tau$, the average outcome matrix $\bvM_\tau$ from \eqref{eq:adv-average-M} is estimated as
\begin{equation}\label{eq:IPSEst-ave}
\bvM_\tau^\ips := \tfrac{1}{\tau}\; \textstyle \sum_{t\in [\tau]} \vM_t^\ips.
\end{equation}

Finally, we plug this estimate into \eqref{eq:adv-average-M} and define
\begin{align}\label{eq:hp-guess}
\textstyle
\guess_t := \max_{\tau\in [t]}\;\tau\cdot \OPTLP(\bvM_\tau^\ips, B,\tau).
\end{align}
For the analysis, we will assume that the primal algorithm does some uniform exploration:
\begin{align}\label{eq:hp-explore}
\textstyle
p_t(a) \geq \gamma>0 \quad \text{for each arm $a\in [K]$ and each round $t\in [T]$}.
\end{align}

\begin{theorem}\label{thm:AdvBwK-HP}
Consider \AdvBwK with $K$ arms, $d$ resources, time horizon $T$, and budget $B$. Let $\delta>0$ be the failure probability parameter. Assume that $B > 5 K \, T^{3/4} \log(2T^2)$.
Suppose that one of the arms is a \emph{null arm} that has zero reward and zero resource consumption.

Consider Algorithm~\ref{alg:LagrangianBwKAdvHighP} with parameters
    $\kappa = 2$ and $\gamma_0 = T^{-1/4}$.
 Assume that each algorithm $\ALG_j$, $j\in \{1,2\}$, satisfies the regret bound \eqref{eq:prelims-regret-weak} with payoff range $[\rmin,\rmax] = [-\tfrac{T}{B}+1, 2]$
and regret term
    $R_\delta(T) = R_{j,\delta}(T)$.
Assume that the primal algorithm $\ALG_1$ satisfies \eqref{eq:hp-explore} with parameter $\gamma \geq T^{-1/4}$.

Then the total reward $\REW$ collected by Algorithm~\ref{alg:LagrangianBwKAdvHighP} satisfies
\begin{align}\label{eq:thm:AdvBwK-HP}
\Pr\left[\; \REW \geq
        \frac{\OPTFD}
            {O(d\log T) } - O(\reg)\;
\right] \geq 1-O(\delta T),
\end{align}
where the regret term is
    $\term{reg} =
        \tfrac{T}{B}\left(\;
            K \;T^{3/4}\; \log^{1/2}(\tfrac{1}{\delta})
            + R_{1,\,\delta/T}(T) + R_{2,\,\delta/T}(T)\;\right)$.
\end{theorem}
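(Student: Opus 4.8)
The plan is to reduce the analysis to a single, well-chosen phase whose reward is certified through the crux Lemma~\ref{lm:adv-crux}, while the remaining phases are controlled only through the budget. Throughout I would condition on a \emph{clean event} that intersects three sub-events, each holding with probability $1-O(\delta)$ or $1-O(\delta T)$: (i) the running estimate tracks the true stopped-LP value, $|\guess_t-\OPTLPfull[t]|\le E$ for all $t\in[T]$; (ii) the primal/dual regret inequalities \eqref{eq:adv-regrets} hold for the fresh \MainALG instance of \emph{every} phase; and (iii) the realized consumption of the fallback (``else'') rounds is within an additive $O(\sqrt{T\log(1/\delta)})$ of its conditional mean. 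A union bound over the $O(\log_\kappa T)$ phases, the $T$ end-times, and the $K(d+1)$ matrix entries then gives total failure probability $O(\delta T)$.

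First I would establish (i). The estimator $\vM^\ips_t$ is conditionally unbiased, and the exploration guarantees---\eqref{eq:hp-explore} inside \MainALG and the fallback rate $\gamma_0=T^{-1/4}$ spread uniformly over the $K$ arms---force each arm to be pulled with probability at least $\gamma_0/K\ge T^{-1/4}/K$, so every entry of $\vM^\ips_t$ lies in $[0,\,KT^{1/4}]$. Applying Azuma--Hoeffding (Lemma~\ref{lem:AzumaHoeffding}) to the martingale $\sum_{t\le\tau}\bigl(\vM^\ips_t(a,i)-\vM_t(a,i)\bigr)$ and union-bounding over entries and over $\tau$ bounds the per-entry deviation of the \emph{unnormalized} sum by $\eta=O\bigl(KT^{3/4}\sqrt{\log(KdT/\delta)}\bigr)$. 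I then pass from this matrix error to an error on $\tau\cdot\OPTLP(\cdot,B,\tau)$ by LP sensitivity: an $\eta$-perturbation of an (unnormalized) reward row moves the objective by at most $\eta$, while an $\eta$-perturbation of a consumption row is neutralized by mixing the optimum toward the null arm by a $\Theta(\eta/B)$ fraction, at a reward cost of at most $\tfrac{\eta}{B}\cdot\tau\le\tfrac{T}{B}\,\eta$. This yields $|\guess_t-\OPTLPfull[t]|\le E$ with $E=O\bigl(\tfrac{T}{B}KT^{3/4}\sqrt{\log(KdT/\delta)}\bigr)$, precisely the first term of $\term{reg}$.

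Next I would handle the phase and budget bookkeeping. On the clean event $\guess_t$ stays between $1$ and $\OPTLPfull[T]+E\le\mathrm{poly}(T)$, and since consecutive phases increase the guess geometrically by $\kappa=2$, there are at most $N=O(\log_\kappa T)=O(\log T)$ phases. Each phase stops its \MainALG instance once the consumption \emph{within that phase} reaches $B_0=B/(2\lceil\log_\kappa T\rceil)$, so the \MainALG parts consume at most $N\cdot B_0\le B/2$ in total; the else-rounds play the null arm except with probability $\gamma_0$, so by sub-event (iii) their realized consumption is at most $\gamma_0 T+O(\sqrt{T\log(1/\delta)})=O(T^{3/4})<B/2$, using the hypothesis $B>4T^{3/4}$. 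Hence on the clean event the \emph{global} budget is never exhausted, the game runs to horizon $T$, and every phase runs to completion.

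Finally I would extract the reward. I rely on the budget-scaling fact that the stopped-LP value is at least linear in the budget: for $B'\le B$, mixing the budget-$B$ optimum toward the null arm by the fraction $B'/B$ shows the budget-$B'$ value is at least $\tfrac{B'}{B}$ times the budget-$B$ value. Let $P$ be the phase whose guess satisfies $\OPTLPfull[T]/\kappa^2\le\guess\le\OPTLPfull[T]/\kappa$; there $T_0=\guess/(3d\lceil\log_\kappa T\rceil)=\Theta\bigl(\OPTFD/(d\log T)\bigr)$ (using $\OPTLPfull[T]\ge\OPTFD$ and the bound $E$, which is where the $KT^{3/4}$ term of $\term{reg}$ enters the numerator). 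Applying Lemma~\ref{lm:adv-crux} to $P$'s \MainALG run with $B$ replaced by $B_0$ (its proof uses only a generic budget, so it transfers verbatim): either Case~1 fires, the phase exhausts $B_0$, and \eqref{eq:lm:adv-crux-sigma} already gives $\REW\ge T_0-(U_1(T|T_0)+U_2(T|T_0))$; or Case~2 holds and one must exhibit a phase-local $\sigma$ with $\sigma f(\sigma)\ge(d+1)T_0$, forcing the same conclusion. Since $\tfrac{T_0}{B_0}=O(T/(dB))$ the regret terms are $U_j=O(\tfrac{T}{B})R_{j,\delta/T}(T)$, and combining with $T_0=\Theta(\OPTFD/(d\log T))$ yields \eqref{eq:thm:AdvBwK-HP}. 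The main obstacle is exactly the reconciliation forced by the restart: Lemma~\ref{lm:adv-crux} certifies reward against the \emph{phase-local}, budget-$B_0$ stopped LP, whereas the guess and the benchmark $\OPTFD$ are \emph{global} and refer to budget $B$. Guaranteeing that the phase $P$ whose $T_0$ is tuned to $\OPTFD$ actually faces a window carrying $\Omega(\OPTFD/\log T)$ of value exploitable under the reduced budget $B_0$---rather than a window the oblivious adversary has emptied---is the delicate heart of the argument, and it is what makes this scheme more subtle than a textbook doubling trick.
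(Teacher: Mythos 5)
Your scaffolding coincides with the paper's proof: the same three-part clean event, an IPS-concentration step (your per-entry Azuma bound plus LP sensitivity differs in mechanics from the paper's Lemma~\ref{lem:hatLPwithActual}, which argues feasibility of budget-shrunk LP solutions, but both yield a deviation of order $\tfrac{K T^{7/4}}{B}\sqrt{\log(T/\delta)}$), the same phase/budget bookkeeping, and the application of Lemma~\ref{lm:adv-crux} to a single distinguished phase (your phase $P$ is, in effect, the paper's last \emph{full} phase, by Lemma~\ref{lem:HP-stopping}). However, the proposal stops exactly at the statement on which the whole theorem turns: in Case~2 of Lemma~\ref{lm:adv-crux} you must exhibit a phase-local $\sigma$ with $\sigma f(\sigma)\geq (d+1)T_0$, where $f$ is the LP value of the \emph{window} $[\tau_{start},\tau_{end}]$ under the reduced budget $B_0$; and your closing sentences concede that guaranteeing this window actually carries $\Omega(\OPTFD/\log T)$ of exploitable value, ``rather than a window the oblivious adversary has emptied,'' is unresolved. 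This is a genuine gap, not a formality: nothing in your clean event prevents the adversary from concentrating all reward before $\tau_{start}$, in which case the phase-local LP value would be zero even though the guess $\guess$ in that phase is correctly tuned to the global benchmark.

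The paper closes precisely this hole with Lemma~\ref{lem:remOPT}, using two ingredients that never appear in your proposal. First, a superadditivity property of stopped LPs over consecutive windows, \eqref{eq:diffOPTToFull}: $\OBJ([T_1,T_2]) \geq \OBJ(T_2)-\OBJ(T_1-1)$, which holds because consumptions are nonnegative, so the optimal solution of the horizon-$T_2$, budget-$B$ LP is simultaneously feasible for the $[1,T_1-1]$ LP and for the window LP. Second, the real reason for singling out a \emph{full} phase (rather than merely one whose guess lands near $\OPTFD/\kappa$): the firing of the stopping rule certifies growth of the estimated global value across the window, namely $\OBJ^{\ips}(\tau_{start}-1)<\kappa^{j-1}$ while $\OBJ^{\ips}(\tau_{end})\geq\kappa^{j}$. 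Combining these two facts with the IPS concentration gives $\OBJ([\tau_{start},\tau_{end}]) \geq \kappa^j-\kappa^{j-1}-O(\Dev(T)) \geq \left(\tfrac{1}{\kappa}-\tfrac{1}{\kappa^2}\right)\OPTLPfull - O(\Dev(T))$, \ie a constant fraction of the global benchmark is guaranteed to sit inside the window no matter how the adversary arranged the rewards; budget rescaling \eqref{eq:scaledBudgetOPTVals1} then converts this into the required lower bound on $\sigma f(\sigma)$ at the cost of another $O(\log T)$ factor. Without this argument (or a substitute for it), your proof cannot conclude; with it, it becomes the paper's proof.
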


\begin{remark}
Using algorithms EXP3.P for $\ALG_1$ and Hedge for $\ALG_2$, we can achieve \eqref{eq:thm:AdvBwK-HP} with
    \[ \term{reg}
        = O\Paren{ \tfrac{TK}{B}} \; T^{3/4}\; \sqrt{\log (T/\delta)}.\]
This is because EXP3.P, with appropriately modified uniform exploration term $\gamma=T^{-1/4}$, satisfies the regret bound \eqref{eq:prelims-regret-weak} with
    $R_\delta(\tau) = O(T^{3/4}) \sqrt{K\log \tfrac{T}{\delta}}$,
and for Hedge we can (still) use \refeq{eq:prelims-regrets}. The theorem is meaningful whenever, say, $\term{reg}<\OPTFD/2$. The latter requires
    $\OPTFD \cdot \tfrac{B}{K}>\widetilde{\Omega}(T^{7/4})$.
\end{remark}

\begin{remark}
Like in Theorem~\ref{thm:AdvBwK-main}, we posit that the null arm does not consume any resources.
\end{remark}

\begin{remark}
For the sake of intuition, let us clarify the choice of parameters $B_0$ and $T_0$ in the algorithm. First, $\Cel{\log_\kappa T}$ appears in both $B_0$ and $T_0$, because it is an upper bound on the number of phases. Second, $d$ is needed in $T_0$ to counteract the dependence on $d$ in Lemma~\ref{lm:adv-crux}, the adversarial analysis of \MainALG. Third, the $\tfrac12$ appears in $B_0$ because we allow half of the budget to be spent on uniform exploration. Finally, the $\tfrac13$ in $T_0$ is needed to enable a specific step deep down in the analysis, the transition from \refeq{eq:partTwo} to \refeq{eq:HighPFinal1}.
\end{remark}

\begin{proof}[Proof Sketch of Theorem~\ref{thm:AdvBwK-HP}]

The proof consists of several steps. First, we argue that the guess $\guess_t$ is close to $\OPTLPfull[t]$ with high probability. This argument only relies on the uniform exploration property \eqref{eq:hp-guess} and the definition of IPS estimators, not on any properties of the algorithm. We immediately obtain concentration for the average outcome matrices. With some work, we derive concentration on the respective LP-values.

Next, we focus on a particular phase in the execution of the algorithm. We say that a phase is \emph{full} if the stopping condition $\guess_t> \kappa \cdot \guessold$ has fired. We focus on the last full phase. We prove there is enough reward to be collected in this phase. Essentially, letting $\tau_1,\tau_2$ be, resp., the start and end time of this phase, we consider the \BwK problem restricted to time interval $[\tau_1,\tau_2]$, and lower-bound the LP-value of this problem in terms of the LP-value of the original problem. Finally, we use the adversarial analysis of \MainALG (Lemma~\ref{lm:adv-crux}) to guarantee that our algorithm actually collects that value.

Because of the stopping condition $\guess_t>\kappa \cdot \guessold$, there can be at most $\Cel{\log_\kappa T}$ phases.  Therefore, rescaling the budget to $B_0/2\Cel{\log_\kappa T}$ guarantees that the algorithm consumes at most $B/2$ of the budget. We then argue that, with high-probability, the additional uniform exploration in each phase, consumes a budget of at most $B/2$ with high-probability. Thus, the algorithm never runs out of budget.
\end{proof}

\subsection{Full proof of Theorem~\ref{thm:AdvBwK-HP}}

\newcommand{\ratio}{\term{ratio}}

We now describe the full proof of Theorem~\ref{thm:AdvBwK-HP}, following the plan outlined in the proof sketch. We decompose the analysis into several distinct pieces, present them one by one, and then show how to put them together. Each piece is presented as a lemma, with appropriate notation and intuition. 

For clarity, most of the analysis is presented for an arbitrary parameter $\kappa>1$ in the algorithm, as long as it is an absolute constant, and an arbitrary parameter $\gamma$ in \refeq{eq:hp-explore}. We only plug in $\kappa=2$ and $\gamma\geq T^{-1/4}$ in the very end, in \refeq{eq:HighPFinal1} and right after.


\xhdr{Extended notation.}
To argue about a given phase, we extend some of our notation to refer to arbitrary time intervals, not just $[1,\tau]$. In what follows, fix time interval $[\tau_1,\tau_2]$, and let $\tDelta$$\tau = \tau_2-\tau_1+1$.
Let
\begin{align*}
\bvM_{\Brac{\tau_1, \tau_2}}
    &:= \frac{1}{\tDelta \tau}\,\sum_{t=\tau_1}^{\tau_2}\; \vM_t, \\
\bvM^{\ips}_{\Brac{\tau_1, \tau_2}}
    &:= \frac{1}{\tDelta \tau}\,\sum_{t=\tau_1}^{\tau_2}\; \vM^{\ips}_t
\end{align*}
be, resp., the average outcome matrix and its IPS-estimate on this time interval. Define
\begin{align}
\OBJ(\Brac{\tau_1, \tau_2})
    &:= \tDelta \tau \cdot \OPTLP(\bvM_{\Brac{\tau_1, \tau_2}}, B, \vec{\tDelta}\tau),	
    \label{eq:OBJ} \\
\OBJ^{\ips}(\Brac{\tau_1, \tau_2})
    &:=  \tDelta\tau \cdot \OPTLP(\bvM^{\ips}_{\Brac{\tau_1, \tau_2}}, B, \vec{\tDelta} \tau).
    \label{eq:OBJips}
\end{align}
We use short-hand
    $\OBJ(\tau_2) = \OBJ(\Brac{1, \tau_2})$
and
    $\OBJ^{\ips}(\tau_2) = \OBJ^{\ips}(\Brac{1, \tau_2})$.
We think of these quantities, resp., as the LP-objective given the stopping time at $\tau_2$, and the IPS-estimate thereof. Recall that
\begin{align}
\OPTLPfull[\tau] &:= \max_{t \in [\tau]}\;\OBJ(t). \label{eq:HP-analysis-OPTLPfull}\\
\guess_\tau &:= \max_{t \in [\tau]}\;\OBJ^{\ips}(t).
\label{eq:HP-analysis-guess}
\end{align}

\xhdr{Uniform exploration does not exhaust budget.}
The uniform exploration in  Algorithm~\ref{alg:LagrangianBwKAdvHighP} happens for at most $\gamma_0\,T$ rounds in expectation, and therefore for at most
    $\gamma_0\,T + 3 \sqrt{\gamma_0 T \ln (1/\delta)}$ rounds with probability at least $1-\delta$.%
\footnote{By an easy application of Chernoff-Hoeffding bounds (Lemma~\ref{lem:Chernoff}).}
It does not consume more than $B/2$ units of each resource,
    since $\gamma_0 = T^{-1/4}$ and $B>4\, T^{3/4}$.
	
\xhdr{IPS estimators are good.}  We argue that, essentially, the guess $\guess_\tau$ is close to $\OPTLPfull[\tau]$ with high probability. To this end, we prove that $\OBJ(\tau)$ is close to its IPS estimator, for any given $\tau\in [T]$.






\begin{lemma}\label{lem:hatLPwithActual}
With probability at least $1-d\delta T$ it holds that
\begin{equation}\label{eq:tailhatLP}
\forall \tau \in [T]\qquad
\left| \OBJ^{\ips}(\tau) - \OBJ(\tau) \right|
    \leq \Dev(\tau)
        := \left( 1+ \frac{2\OBJ(\tau)}{B} \right)\frac{K}{\gamma} \sqrt{2\tau \log \tfrac{T}{\delta}},
\end{equation}

If the event \eqref{eq:tailhatLP} holds, then $\guess_{\tau}$ and
$\OPTLPfull[\tau]$ are indeed close:
\begin{equation}\label{eq:tailhatLP-guess}
\forall \tau \in [T]\qquad
\left| \guess_{\tau} - \OPTLPfull[\tau] \right|
\leq \DevM :=
	\frac{KT}{\gamma B} \sqrt{18 T \log \tfrac{T}{\delta}}
\end{equation}
\end{lemma}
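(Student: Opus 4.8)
The plan is to establish \eqref{eq:tailhatLP} in two stages — a martingale concentration bound showing that each \emph{entry} of the IPS estimate $\bvM_\tau^{\ips}$ is close to the corresponding entry of $\bvM_\tau$, followed by an LP-sensitivity argument that upgrades this entry-wise closeness to closeness of the LP-values $\OBJ^{\ips}(\tau)$ and $\OBJ(\tau)$ — and then deduce \eqref{eq:tailhatLP-guess} from \eqref{eq:tailhatLP} by a routine ``maximum of nearby functions'' inequality.

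For the first stage, fix an arm $a$ and a coordinate $j$ (the reward or one of the $d$ resources). Conditioned on the history, the per-round contribution $\vM_t^{\ips}(a)_j = \indicator{a_t=a}\,\tfrac{1}{p_t(a)}\,\vM_t(a)_j$ is an unbiased estimate of $\vM_t(a)_j$, and by the uniform-exploration property \eqref{eq:hp-explore} it lies in $[0,1/\gamma]$. Hence $\{\vM_t^{\ips}(a)_j - \vM_t(a)_j\}_t$ is a bounded martingale difference sequence, and Azuma--Hoeffding (Lemma~\ref{lem:AzumaHoeffding}) gives $|\bvM_\tau^{\ips}(a)_j - \bvM_\tau(a)_j| \le \tfrac{1}{\gamma}\sqrt{\tfrac{2}{\tau}\log(2/\delta)}$ with failure probability $\delta$. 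A union bound over the $K$ arms, the $d{+}1$ coordinates, and the $T$ end-times $\tau$ (after adjusting the per-event confidence) yields, with probability at least $1-d\delta T$, a uniform entry-wise bound $\|\bvM_\tau^{\ips}-\bvM_\tau\|_\infty \le \varepsilon_\tau$ for all $\tau$, where $\varepsilon_\tau = O\!\left(\tfrac{1}{\gamma}\sqrt{\log(T/\delta)/\tau}\right)$.

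The second stage is where I expect the real difficulty, since the LP value is not directly Lipschitz in the constraint matrix: perturbing the coefficients can render an optimal solution infeasible. The key device is the null arm (zero reward, zero consumption). To bound $\OBJ(\tau)-\OBJ^{\ips}(\tau)$ I would take the optimal distribution $X^*$ for $\myLP{\bvM_\tau}{B}{\tau}$, note that under $\bvM_\tau^{\ips}$ its objective drops by at most $\varepsilon_\tau$ while each budget constraint is violated by at most $\varepsilon_\tau$, and then restore feasibility by shifting a mass $\eta \le \varepsilon_\tau\tau/B$ onto the null arm. Since $X^*$ attains per-round value $\OPTLP(\bvM_\tau,B,\tau)=\OBJ(\tau)/\tau$, this rescaling costs a multiplicative term $\eta\cdot\OPTLP \le \tfrac{\varepsilon_\tau}{B}\OBJ(\tau)$ plus the additive term $\varepsilon_\tau$ in the per-round objective; running the symmetric argument starting from the IPS-optimal solution gives the reverse inequality, and multiplying through by $\tau$ yields $|\OBJ^{\ips}(\tau)-\OBJ(\tau)| \lesssim \tfrac{\varepsilon_\tau\tau}{B}\OBJ(\tau) + \varepsilon_\tau\tau$. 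Since $\tfrac{\varepsilon_\tau\tau}{B}\OBJ(\tau) = O\!\left(\tfrac{\OBJ(\tau)}{\gamma B}\sqrt{\tau\log(T/\delta)}\right)$, the first term matches $\Dev(\tau)$. The point requiring care — and the main obstacle — is absorbing the purely additive term $\varepsilon_\tau\tau$ into the multiplicative form of $\Dev(\tau)$: this needs $\OBJ(\tau) = \Omega(B/K)$, which is exactly where the factor $K$ in $\Dev(\tau)$ and the working assumption \eqref{eq:HP-analysis-assn} (keeping the relevant LP-values large) supply the slack.

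Finally, \eqref{eq:tailhatLP-guess} follows from \eqref{eq:tailhatLP} with no further probabilistic content. Because $\guess_{\tau}$ and $\OPTLPfull[\tau]$ are maxima of $\OBJ^{\ips}(t)$ and $\OBJ(t)$ over $t\le\tau$, and pointwise $|\OBJ^{\ips}(t)-\OBJ(t)|\le\Dev(t)\le \max_{t\le\tau}\Dev(t)$, the elementary inequality $|\max_t f(t)-\max_t g(t)|\le \max_t|f(t)-g(t)|$ immediately gives $\left|\guess_{\tau}-\OPTLPfull[\tau]\right|\le \max_{t\le\tau}\Dev(t)$.
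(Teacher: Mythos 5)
Your proposal is correct, and it follows the same overall architecture as the paper's proof (martingale concentration, then an LP-perturbation argument exploiting the null arm and budget slack, then the elementary max-comparison for \eqref{eq:tailhatLP-guess}), but the two middle steps are executed along a genuinely different route. For concentration, the paper does not work entrywise: its Claim~\ref{lem:highProbabilityDeviations} applies Azuma--Hoeffding to the linear functionals $\sum_{t\in[\tau]}\vec{z}\cdot\bigl(\hat{\vec{r}}_t-\vec{r}_t\bigr)$ (and likewise for consumptions) for a \emph{fixed} $\vec{z}\in\Delta_K$, and then instantiates $\vec{z}$ at optimal solutions of suitably budget-shrunk LPs; this avoids your union bound over the $K(d+1)$ matrix entries. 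Your entrywise bound $\|\bvM^{\ips}_\tau-\bvM_\tau\|_\infty\le\varepsilon_\tau$ costs a $\log K$ inside the square root but buys uniformity over all of $\Delta_K$, which is a real advantage: in the paper's upper-tail step the fixed-$\vec{z}$ claim is applied to the optimizer of the \emph{IPS} LP, which is a random vector depending on the same realized randomness, a subtlety that your uniform bound sidesteps cleanly. For the LP-sensitivity step, the paper pre-shrinks the budget to $B\bigl(1-R_{\gamma,\delta}(\tau)/B\bigr)$, observes that the optimal solution of the shrunk LP remains feasible for the perturbed LP, and invokes the rescaling fact \eqref{eq:scaledBudgetOPTVals1}; your ``perturb first, then repair by shifting $O(\varepsilon_\tau\tau/B)$ mass onto the null arm'' is the mirror image of the same idea and yields the same multiplicative-plus-additive error structure. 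Finally, the obstacle you flag---absorbing the additive error into the multiplicative form of $\Dev(\tau)$, which in your accounting needs $\OBJ(\tau)=\Omega(B/K)$---is present in the paper's proof as well (there it needs $\OBJ(\tau)=\Omega(B)$, hidden in an underbrace annotation), so your treatment meets the paper's own bar of rigor on this point; note only that assumption \eqref{eq:HP-analysis-assn} lower-bounds $\OPTFD$ and $B$ rather than $\OBJ(\tau)$ for every $\tau$, so this absorption is a shared looseness of both arguments rather than something either route fully resolves.
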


The proof of this lemma is deferred Section~\ref{subsec:IPS}. It only relies on the uniform exploration property \eqref{eq:hp-guess} and the definition of IPS estimators, not on anything that the algorithm does. A somewhat subtle point is to derive concentration on the respective LP-values from concentration of the average outcome matrices.

\xhdr{IPS estimates do not change too fast.}
We use the phase-stopping condition in the algorithm to argue that algorithm's guesses $\guess_t$ and estimates $\OBJ^{\ips}(t)$ do not change too fast.


\begin{lemma}\label{lem:HP-stopping}
Consider a full phase in the execution of the algorithm. Let $\tau$ be the first round in this phase, let $\tau'$ be any other round in this phase, and let $\tau''$ be any round in the next phase. Then
\[ \guess_{\tau'}
        \leq   \kappa\cdot \guess_\tau< \guess_{\tau''}.\]
\end{lemma}

\begin{proof}
The first inequality holds because the phase-stopping condition did not fire at round $\tau'$. For the second inequality, let $t$ denote the first round in the next phase. Then
\begin{align*}
\guess_{\tau''}
&\geq \guess_t
    &\EqComment{since $(\guess_t)$ is monotone by \refeq{eq:HP-analysis-guess}}
    \\
&>  \kappa \cdot \guess_{\tau}
    &\EqComment{by the phase-stopping condition}. \qquad\qquad\qedhere
\end{align*}
\end{proof}


\begin{claim}\label{clm:aux-1}
$\guess_t = \OBJ^{\ips}(t) >\OBJ^{\ips}(t-1)$
for any round $t$ such that
    $\guess_t > \guess_{t-1}$.
The latter condition holds, in particular, if $t$ is the first round in some phase.
\end{claim}

\begin{proof}
The claim follows from \refeq{eq:HP-analysis-guess} and the phase-stopping condition. In particular, if $t$ is the first round in some phase and     $\guess_t = \guess_{t-1}$, then this phase would have started earlier.
\end{proof}

\begin{claim}\label{clm:singleStepJump}
For any round $t$, we have
$\guess_{t+1} - \guess_t
\leq \OBJ^{\ips}(t+1) - \OBJ^{\ips}(t)
    \leq K/\gamma$.
\end{claim}

\begin{proof}
Fix round $t$. If $\guess_{t+1} > \guess_t$, then
    $\guess_{t+1} = \OBJ^{\ips}(t+1)$
by Claim~\ref{clm:aux-1}. Since
    $\guess_t \geq \OBJ^{\ips}(t)$
by \refeq{eq:HP-analysis-guess}, it follows that 	
\begin{align}\label{eq:UPEq}	
\guess_{t+1} - \guess_t
    \leq  \OBJ^{\ips}(t+1) - \OBJ^{\ips}(t).
\end{align}

Let us analyze the right-hand side in \refeq{eq:UPEq}. Recall the IPS-estimate matrices defined in \refeq{eq:IPSEst} and \refeq{eq:IPSEst-ave}: the round-$t$ matrix $\vM_t^\ips$ and the time-average matrix $\bvM_t^\ips$. Let $\vec{X}^*$ denote the optimal solution to the LP induced by $\bvM_{t+1}^\ips$. This is the LP that determines $\OBJ^{\ips}(t+1)$, as per \refeq{eq:OBJips}. So,
    $\OBJ^{\ips}(t+1) = \sum_{\tau=1}^{t+1} \vec{X}^* \cdot  \vec{r}^{\ips}_\tau$,
where
     $\vec{r}^{\ips}_\tau$
is the reward vector in $\vM_{\tau}^\ips$.

Recall that the constraint in this LP is $\vec{X} \cdot \sum_{\tau=1}^{t+1} \vec{c}^{\ips}_{\tau} \leq \vec{B}$. Since $\vec{X}^*$ satisfies this constraint and $\vec{c}^{\ips}_{t+1} \geq \vec{0}$, we have  $\vec{X}^* \cdot \sum_{\tau=1}^{t} \vec{c}^{\ips}_{\tau} \leq \vec{B}$. So, $\vec{X}^*$ is also feasible to the LP induced by $\bvM_t^\ips$.
It follows that
    $\OBJ^{\ips}(t) \geq \sum_{\tau=1}^{t} \vec{X}^* \cdot \vec{r}^{\ips}_\tau$.

Putting this together, the right-hand side of \refeq{eq:UPEq} is at most
    $\vec{X}^* \cdot \vec{r}^{\ips}_{t+1}$.
This is at most $K/\gamma$, since the IPS-estimated reward of each arm is at most $1/\gamma$ by \refeq{eq:hp-explore}.
\end{proof}

%
%
%
%
%

\xhdr{Last full phase offers sufficient rewards.}
Recall that a phase in the execution of the algorithm is called \emph{full} if the stopping condition $\guess_t>\kappa \cdot \guessold$ has fired. We focus on the last full phase; let $\tau_{start}, \tau_{end}$ denote the first and last time-steps of this phase. We prove there is enough reward to be collected in this phase.

Let $\tau^*$ denote the maximizer in \refeq{eq:adv-stoppedLP} which we interpret as the \emph{optimal stopping time}. Essentially, we compare the LP value for the time interval $[\tau_{start}, \tau_{end}]$ with the LP value for the time interval $[1,\tau^*]$. The former is expressed as
    $\OBJ(\Brac{\tau_{start}, \tau_{end}})$
and the latter as $\OPTLPfull$. Note that the time horizon $T$ lies in the subsequent phase (so we can apply Lemma~\ref{lem:HP-stopping}).

\begin{lemma}\label{lem:remOPT}
Consider a run of the algorithm such that event \eqref{eq:tailhatLP} holds. Then
\begin{equation}\label{eq:fracRewardsHighP}	
\OBJ(\Brac{\tau_{start}, \tau_{end}}) \geq \left( \frac{1}{\kappa} - \frac{1}{\kappa^2} \right) \OPTLPfull - O(\DevM).
\end{equation}
\end{lemma}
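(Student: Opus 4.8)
The plan is to reduce the interval objective $\OBJ(\Brac{\tau_{start},\tau_{end}})$ to a \emph{difference of two prefix guesses}, and then invoke the stopping condition (Lemma~\ref{lem:HP-stopping}) to lower-bound that difference by a $\Paren{\tfrac1\kappa-\tfrac1{\kappa^2}}$ fraction of $\guess_T$, which is in turn close to $\OPTLPfull$ by Lemma~\ref{lem:hatLPwithActual}. The engine behind the reduction is a \emph{subadditivity} property of the LP-value over consecutive time intervals: for any split $\tau_1\le s\le \tau_2$ one has $\OBJ(\Brac{\tau_1,\tau_2})\le \OBJ(\Brac{\tau_1,s})+\OBJ(\Brac{s+1,\tau_2})$, and likewise for $\OBJ^{\ips}$. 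This holds because the optimal \emph{fixed} distribution $\vec{X}$ for the whole interval, which satisfies $\sum_{t}\sum_a X(a)\,c_{t,i}(a)\le B$, remains feasible on each sub-interval (consumptions are nonnegative, so the budget constraint only relaxes), while its total reward splits additively. This is exactly the observation used to justify \eqref{eq:adv-stoppedLP}, applied to arbitrary intervals rather than prefixes. Note that the reverse (monotonicity under extending the interval) can \emph{fail}, so the argument must use only subadditivity.

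Applying subadditivity at the split $\tau_{start}$ gives $\OBJ(\Brac{\tau_{start},\tau_{end}})\ge \OBJ(\tau_{end})-\OBJ(\tau_{start}-1)$, where I write $\OBJ(\tau):=\OBJ(\Brac{1,\tau})$. The key point is that both $\tau_{end}$ and $\tau_{start}-1$ are \emph{phase boundaries}, i.e. rounds at which the stopping condition $\guess_t>T_0/\kappa$ first fired. Since the threshold is fixed throughout a phase and $\guess_{t-1}$ lay below it, the running maximum must jump at the firing round; hence the endpoint prefix coincides with the running maximum, $\OBJ^{\ips}(\tau_{end})=\guess_{\tau_{end}}$ and $\OBJ^{\ips}(\tau_{start}-1)=\guess_{\tau_{start}-1}$. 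Combining this with the two-sided IPS concentration \eqref{eq:tailhatLP} of Lemma~\ref{lem:hatLPwithActual} (bounding $\OBJ(\tau_{end})\ge \OBJ^{\ips}(\tau_{end})-\Dev(T)$ and $\OBJ(\tau_{start}-1)\le \OBJ^{\ips}(\tau_{start}-1)+\Dev(T)$) yields
\[
 \OBJ(\Brac{\tau_{start},\tau_{end}}) \;\ge\; \guess_{\tau_{end}}-\guess_{\tau_{start}-1}-2\,\Dev(T).
\]

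It then remains to show $\guess_{\tau_{end}}-\guess_{\tau_{start}-1}\ge \Paren{\tfrac1\kappa-\tfrac1{\kappa^2}}\guess_T-O(K/\gamma)$. Here I use Lemma~\ref{lem:HP-stopping} across the boundary of the last full phase. Since $T$ lies in the phase immediately following $\Brac{\tau_{start},\tau_{end}}$, taking $\tau'=T$ gives $\guess_{\tau_{start}}\ge \tfrac1{\kappa^2}\guess_T$, so $\guess_{\tau_{start}}$ is within a factor $\kappa^2$ of the final guess; the geometric-growth (upper) half of \eqref{eq:startToFull}, applied at the next boundary, forces $\guess_{\tau_{end}}\ge \kappa\,\guess_{\tau_{start}}-O(K/\gamma)$, so the guess grows by roughly a factor $\kappa$ across the phase. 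Subtracting gives $(\kappa-1)\guess_{\tau_{start}}-O(K/\gamma)\ge \Paren{\tfrac1\kappa-\tfrac1{\kappa^2}}\guess_T-O(K/\gamma)$ (the distinction between $\guess_{\tau_{start}}$ and $\guess_{\tau_{start}-1}$ costs only an extra $K/\gamma$ from the single-step increment). Finally, $\guess_T\ge \OPTLPfull-\Dev(T)$ by \eqref{eq:tailhatLP-guess}, and because $\gamma\ge T^{-1/4}$ we have $K/\gamma\le KT^{1/4}=O(\Dev(T))$; absorbing all slack into $O(\Dev(T))$ yields \eqref{eq:fracRewardsHighP}.

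The main obstacle I expect is the phase-boundary bookkeeping in the last paragraph: reconciling the \emph{running-maximum} guesses $\guess_\tau$ with the \emph{endpoint} and \emph{interval} objectives, and extracting precisely the constant $\tfrac1\kappa-\tfrac1{\kappa^2}$ rather than something weaker. The conceptual crux that unlocks this is the stopping-round observation that a phase boundary is necessarily a round where the running maximum is freshly attained, so $\guess$ at a boundary equals the endpoint IPS objective; this is what lets subadditivity (which speaks only about endpoints) interface with the stopping condition (which speaks about the running maximum). Everything else—subadditivity, the IPS concentration of Lemma~\ref{lem:hatLPwithActual}, and the geometric relation of Lemma~\ref{lem:HP-stopping}—is then plugged in mechanically.
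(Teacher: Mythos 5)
Your proposal is correct and takes essentially the same route as the paper's proof: lower-bound the interval value by the difference of prefix values $\OBJ(\tau_{end})-\OBJ(\tau_{start}-1)$ (established exactly as you do, via feasibility of the interval-optimal solution on sub-intervals because consumptions are nonnegative), pass to IPS estimates via Lemma~\ref{lem:hatLPwithActual}, use the phase/stopping structure to extract a $\left(\frac{1}{\kappa}-\frac{1}{\kappa^2}\right)$ fraction of $\guess_T$, and finally relate $\guess_T$ to $\OPTLPfull$ by \eqref{eq:tailhatLP-guess}. The only differences are cosmetic: the paper works directly with the thresholds $\kappa^{j-1},\kappa^{j},\kappa^{j+1}$ rather than routing through the ratio bounds of Lemma~\ref{lem:HP-stopping} and per-round increments of the guess, and your claimed equality $\OBJ^{\ips}(\tau_{start}-1)=\guess_{\tau_{start}-1}$ need not hold in general, but only the direction $\OBJ^{\ips}(\tau_{start}-1)\le \guess_{\tau_{start}-1}$ (automatic from the running-maximum definition) is actually used, so this does not affect correctness.
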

The proof of this lemma is deferred Section~\ref{subsec:lastphase}.

\xhdr{Adversarial analysis of \MainALG.}
Let us plug in the adversarial analysis of \MainALG, as encapsulated in  Lemma~\ref{lm:adv-crux}. We focus on the last full phase in the execution. We interpret it as an execution of algorithm \MainALG with parameters $B_0,T_0$ on an instance of \AdvBwK with budget $B_0$ that starts at round $\tau_{start}$ of the original problem. Let $\guess = \guess_{\tau_{start}}$ be the guess at the first round of the phase. Then the parameters are $B_0 = B/\ratio$ and
    $T_0 = \guess/(\revedit{3}d\cdot \ratio)$,
where
    $\ratio = \Cel{\log_\kappa T}$.

We apply Lemma~\ref{lm:adv-crux} for round $\sigma=\tau_{end}-\tau_{start}+1$  in the execution of \MainALG. Restated in our notation, $f(\sigma)$ in Lemma~\ref{lm:adv-crux} becomes
\[ f(\sigma) = \OPTLP(\bvM_{[\tau_{start}, \tau_{end}]}, B_0, \sigma). \]
Thus, we obtain that with probability at least $1-\delta$ we have
\begin{equation} \label{eq:FromAdvGuarantees}
\REW \geq \sum_{t=\tau_{start}}^{\tau_{end}} r_t(a_t) \geq \min\Paren{\frac{\guess}{\revedit{3}d \lceil \log_{\kappa} T \rceil},\;
\sigma f(\sigma) - \frac{d \guess}{\revedit{3}d \lceil \log_{\kappa} T \rceil}} - \reg(T),
\end{equation}
where the regret term is
$\reg(T) := (1+\tfrac{T}{B})
    \left( R_{1,\,\delta/T}(T) + R_{2,\,\delta/T}(T) \right)$.

\xhdr{Rescaling the budget.} 
Since we use rescaled budget $B_0$, we need to connect the corresponding LP-values to those for the original budget $B$. We use the following general fact, observed in \citet{AgrawalDevanur-ec14}:
for any outcome matrix $\vM$, budget $B$, time horizon $T$, and rescaling factor $\psi \in (0, 1]$, 
\begin{align}\label{eq:scaledBudgetOPTVals1}
\OPTLP\Paren{\vM, \psi B, T}
    \geq \psi \cdot \OPTLP\Paren{\vM, B, T}.
\end{align}
This holds because an optimal solution $\vec{\mu}$ to $\LP_{\vM, B, T}$, the vector $\psi\,\vec{\mu} $ is feasible to $\LP_{\vM, \psi B, T}$.

\xhdr{Putting it all together.}
Let us show how to complete the proof of Theorem~\ref{thm:AdvBwK-HP} using the tools derived above.
Throughout, we condition on the high-probability events in Lemma~\ref{lem:hatLPwithActual} and \refeq{eq:FromAdvGuarantees}.

Recall that $\sbr{\tau_{start}, \tau_{end}}$ denotes the last full phase, and let $\guess$ denote the guess at the beginning of this phase. Recall that
     $\guess = \OBJ^{\ips}(\tau_{start})$.

From \refeq{eq:scaledBudgetOPTVals1} we have that $\sigma f(\sigma) \geq  \frac{1}{\ratio} \OBJ(\Brac{\tau_{start}, \tau_{end}})$ since $B_0 = \frac{B}{\ratio}$. Combining this with \refeq{eq:FromAdvGuarantees} we obtain
		\begin{equation}
			\label{eq:HighPSumCond}
			\REW = \sum_{t=\tau_{start}}^{\tau_{end}} r_t(a_t) \geq \frac{1}{\ratio} \min\Paren{\frac{\guess}{\revedit{3}d}, \OBJ(\Brac{\tau_{start}, \tau_{end}}) - \frac{d \guess}{\revedit{3}d}} - \reg(T).
		\end{equation}

By Lemma~\ref{lem:remOPT}, \asdelete{with $\kappa=2$,} we can re-write \refeq{eq:HighPSumCond} as		
		\begin{equation}
			\label{eq:partTwo}
		 \REW \geq \frac{1}{\ratio} \min\Paren{\frac{\guess}{\revedit{3}\,d}, \Paren{\frac{1}{\kappa}-\frac{1}{\kappa^2}} \OPTLPfull - \frac{\guess}{\revedit{3}}} - \reg(T) -\revedit{O(\DevM)}.
		\end{equation}


Let us characterize how the guess $\guess$ deviates from
$\OPTLPfull$:
\begin{align}\label{eq:guess-bounds} 	
\OPTLPfull\big/\kappa^2 -O(\DevM)
    \leq \guess \leq
\OPTLPfull\big/\kappa + O(\DevM).
\end{align}


To prove the upper bound in \refeq{eq:guess-bounds},
\begin{align*}
\guess
    &\leq \guess_{T}/\kappa
    &\EqComment{by Lemma~\ref{lem:HP-stopping}}
    \\
    &\leq \OPTLPfull\big/\kappa + \DevM/\kappa
    &\EqComment{by Lemma~\ref{lem:hatLPwithActual}}.
\end{align*}

For the lower bound in \refeq{eq:guess-bounds}, we observe that
\begin{align*}
\guess_T
&\leq \kappa \cdot \guess_{\tau_{end} + 1}
    &\EqComment{a phase didn't start at $T$}
    \\
&\leq \kappa \rbr{\guess_{\tau_{end}} + K/\gamma}
    &\EqComment{By Claim~\ref{clm:singleStepJump}}
    \\
&\leq \kappa \rbr{ \kappa \cdot \guess + K/\gamma}
    &\EqComment{a phase didn't start at $\tau_{end}$}
    \\
\guess
    &\geq \guess_T/\kappa^2 - \DevM/\kappa^2 - K/\gamma
    \\
    &\geq \OPTLPfull/\kappa^2 - O(\DevM)
    &\EqComment{by \refeq{eq:tailhatLP-guess}}.
\end{align*}
This completes the proof of \refeq{eq:guess-bounds}.

Plugging \refeq{eq:guess-bounds} back into \refeq{eq:partTwo} and using $\kappa = 2$ we get,
		
		\begin{equation}
			\label{eq:HighPFinal1}
			\textstyle \REW \geq \frac{1}{\ratio} \min\Paren{\frac{\OPTLPfull}{\revedit{12} d},  \frac{\OPTLPfull}{\revedit{12}}} - \reg(T) - O(\DevM).
		\end{equation}
		
Moreover, $\OPTLPfull \geq \OPTFD$ by \refeq{eq:adv-stoppedLP} . Plugging this into \refeq{eq:HighPFinal1} \asedit{and using $\gamma\geq T^{-1/4}$}, we obtain \refeq{eq:thm:AdvBwK-HP}, completing the proof of the theorem.

\subsection{Proof of Lemma~\ref{lem:remOPT}: last full phase offers sufficient rewards}
\label{subsec:lastphase}

First, we decompose the objective on a time interval as a difference between the interval's endpoints:
\begin{equation}\label{eq:diffOPTToFull}	
	\OBJ([T_1, T_2]) \geq \OBJ(T_2)-\OBJ(T_1-1).
\end{equation}
This step holds for any two rounds $T_1 < T_2\leq T$. It is proved similarly to Claim~\ref{clm:singleStepJump}.

		
\begin{proof}[Proof of \refeq{eq:diffOPTToFull}]
Let $\vec{\mu}$ denote the optimal solution to $\LP_{\bvM_{T_2}, B, T_2}$. In particular, it satisfies the consumption constraint
    $\vec{\mu} \cdot \sum_{t \in [T_2]} \vec{c}_{t, i} \leq B$.
Since resource consumption is always non-negative,
    $\vec{\mu} \cdot \sum_{t \in [T_1-1]} \vec{c}_{t, i} \leq B$,
which is the consumption constraint for $\LP_{\bvM_{T_1-1}, B, T_1-1}$. So, $\vec{\mu}$ is feasible for that LP as well. Consequently,
    \[ \textstyle \sum_{t \in [T_1-1]} \vec{\mu} \cdot \vec{r}_t \leq \OBJ(T_1-1).\]

Likewise $\vec{\mu}$ is also feasible for
    $\LP_{\bvM_{[T_1, T_2]}, B, [T_1, T_2]}$,
and consequently	
\begin{align*}
\OBJ([T_1, T_2])
    & \geq \sum_{t\in [T_1, T_2]} \vec{\mu} \cdot \vec{r}_t
	   = \sum_{t \in [T_2]} \vec{\mu} \cdot \vec{r}_t
        - \sum_{t \in [T_1-1]} \vec{\mu} \cdot \vec{r}_t
    \\
	& \geq \OBJ(T_2) - \OBJ(T_1-1).\qedhere
\end{align*}
\end{proof}	
		
The rest of the proof is specific to the time interval being the last full phase.
\begin{align*}	
\OBJ([\tau_{start}, \tau_{end}])
&\geq \OBJ(\tau_{end}) - \OBJ(\tau_{start}-1)
    &\EqComment{by \refeq{eq:diffOPTToFull}}
    \\
&\geq \OBJ^{\ips}(\tau_{end})
        - \OBJ^{\ips}(\tau_{start}-1) - 2\cdot \DevM
    &\EqComment{by Lemma~\ref{lem:hatLPwithActual}}
    \\
&\geq \OBJ^{\ips}(\tau_{end}+1)
        - \OBJ^{\ips}(\tau_{start}) - 2\cdot \DevM - K/\gamma.
\end{align*}
In the last inequality, we control
    $\OBJ^{\ips}(\tau_{start})$ and $\OBJ^{\ips}(\tau_{end})$
using, resp., Claim~\ref{clm:aux-1} and Claim~\ref{clm:singleStepJump}.

Let us transition from $\OBJ^{\ips}(\cdot)$ to guesses $\guess$, and use the machinery for comparing the guesses across time. For a more succinct notation, write $t = \tau_{end}+1$. Then
\begin{align*}	
\OBJ\rbr{[\tau_{start}, \tau_{end}]} +2\cdot \DevM + K/\gamma
&\geq \guess_t - \guess_{\tau_{start}}
    &\EqComment{by Claim~\ref{clm:aux-1}}
    \\
&> \guess_t\cdot (1-\nicefrac{1}{\kappa})
    &\EqComment{by Lemma~\ref{lem:HP-stopping}}
    \\
&\geq \guess_T \cdot (\nicefrac{1}{\kappa}-\nicefrac{1}{\kappa^2})
    &\EqComment{by Lemma~\ref{lem:HP-stopping}}
    \\
&\geq \rbr{\OPTLPfull[T] - \DevM}
    \cdot (\nicefrac{1}{\kappa}-\nicefrac{1}{\kappa^2})
    &\EqComment{by Lemma~\ref{lem:hatLPwithActual}}.
\end{align*}
Rearranging, we complete the proof of Lemma~\ref{lem:remOPT} as follows.
\begin{align*}
\OBJ\rbr{[\tau_{start}, \tau_{end}]}
&\geq \OPTLPfull[T] \cdot (\nicefrac{1}{\kappa}-\nicefrac{1}{\kappa^2})
    - \DevM\cdot (\nicefrac{1}{\kappa}-\nicefrac{1}{\kappa^2}+2) + K/\gamma.
\end{align*}

\subsection{Proof of Lemma~\ref{lem:hatLPwithActual} (IPS estimators are good)} \label{subsec:IPS}

	Recall that for every $t \in [T]$ and $a \in [K]$ we have that $p_t(a)$, the probability that arm $a$ is chosen at time $t$ is at least $\frac{\gamma}{K}$. We now prove Lemma~\ref{lem:highProbabilityDeviations} which relates linear sums of rewards and consumptions computed using the unbiased estimates and the true values. Denote $R_{\gamma, \delta}(\tau) := \frac{K}{\gamma} \sqrt{2 \tau \ln(T/\delta) }$.
	
	\begin{claim}
			\label{lem:highProbabilityDeviations}
			Let $\delta >0 $, $\gamma > 0$ used by the EXP3.P($\gamma$) be given parameters. Then we have the following statements for any fixed $\vec{z} \in \Delta_K$.
			\begin{align}
				&\textstyle \Pr \left[ \exists \tau \in [T] \quad \left| \sum_{t \in [\tau]} \vec{z} \cdot \Brac{\vec{r}^\ips_t - \vec{r}_t} \right| > R_{\gamma, \delta}(\tau) \right] \leq \delta \label{eq:rew}\\
				\forall i \in [d] \qquad &\textstyle  \Pr \left[ \exists \tau \in [T] \quad \left| \sum_{t \in [\tau]} \vec{z} \cdot \Brac{\vec{c}^\ips_{t, i} - \vec{c}_{t, i}(a)} \right| > R_{\gamma, \delta}(\tau)  \right] \leq \delta \label{eq:cons}
			\end{align}
		\end{claim}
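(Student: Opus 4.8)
The plan is to recognize each partial sum $\sum_{t \in [\tau]} \vec{z} \cdot (\hat{\vec{r}}_t - \vec{r}_t)$ as a martingale with bounded increments, and then apply the Azuma-Hoeffding inequality (Lemma~\ref{lem:AzumaHoeffding}) at each fixed $\tau$, followed by a union bound over $\tau \in [T]$. Recall that the IPS estimate has components $\hat{r}_t(a) = \indicator{a_t=a}\, r_t(a)/p_t(a)$, so for a fixed $\vec{z} \in \Delta_K$ the quantity $\vec{z}\cdot\hat{\vec{r}}_t = z(a_t)\, r_t(a_t)/p_t(a_t)$ is a scalar determined by the realized arm. Let $\mathcal{H}_{t-1}$ be the history through round $t-1$, so that $p_t$ is $\mathcal{H}_{t-1}$-measurable while $a_t \sim p_t$. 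The \emph{unbiasedness} of inverse propensity scoring, which is the one algebraic fact the whole argument rests on, is that
\[
\E[\vec{z}\cdot\hat{\vec{r}}_t \mid \mathcal{H}_{t-1}] = \sum_{a \in [K]} p_t(a)\,\frac{z(a)\, r_t(a)}{p_t(a)} = \sum_{a \in [K]} z(a)\, r_t(a) = \vec{z}\cdot\vec{r}_t .
\]
Hence the increments $X_t := \vec{z}\cdot(\hat{\vec{r}}_t - \vec{r}_t)$ form a martingale difference sequence and $S_\tau := \sum_{t\in[\tau]} X_t$ is a martingale.

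Next I would bound the increments using the uniform-exploration guarantee $p_t(a) \geq \gamma/K$. Since $z(a_t) \le 1$, $r_t(a_t) \in [0,1]$, and $1/p_t(a_t) \le K/\gamma$, we get $0 \le \vec{z}\cdot\hat{\vec{r}}_t \le K/\gamma$, while $0 \le \vec{z}\cdot\vec{r}_t \le 1 \le K/\gamma$; therefore $|X_t| \le K/\gamma$ almost surely. Applying Azuma-Hoeffding to $S_\tau$ with per-step bound $c = K/\gamma$ gives, for each fixed $\tau$,
\[
\Pr\!\left[\, |S_\tau| > \lambda \,\right] \leq 2\exp\!\Paren{-\frac{\lambda^2}{2\tau (K/\gamma)^2}} .
\]
Taking $\lambda = R_{\gamma,\delta}(\tau) = \tfrac{K}{\gamma}\sqrt{2\tau \ln(T/\delta)}$ makes the exponent exactly $-\ln(T/\delta)$, so the right-hand side is at most $2\delta/T$. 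A union bound over the $T$ possible values of $\tau$ then yields \eqref{eq:rew}, absorbing the harmless factor of two into the logarithm (or, equivalently, re-deriving with $\ln(2T/\delta)$ in the definition of $R_{\gamma,\delta}$).

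The consumption bound \eqref{eq:cons} follows by the identical argument with $r_t$ replaced by $c_{t,i}$ for each fixed resource $i$: the estimator $\hat{c}_{t,i}(a) = \indicator{a_t=a}\,c_{t,i}(a)/p_t(a)$ is again unbiased for $\vec{c}_{t,i}$ conditioned on $\mathcal{H}_{t-1}$, consumptions lie in $[0,1]$, so the same increment bound $K/\gamma$ and the same Azuma-Hoeffding-plus-union-bound argument apply verbatim. The point that genuinely matters — and the only real obstacle — is this increment bound: IPS estimators are heavy-tailed and blow up when some $p_t(a)$ is tiny, so the argument hinges entirely on the forced exploration $p_t(a) \geq \gamma/K$, which is precisely what caps $1/p_t(a_t)$ and produces the $K/\gamma$ factor in $R_{\gamma,\delta}(\tau)$. (A variance-based bound via Freedman's inequality would trade one factor of $K/\gamma$ for $\sqrt{K/\gamma}$, but the plain Hoeffding-type increment bound is all that is needed to match the stated $R_{\gamma,\delta}$.)
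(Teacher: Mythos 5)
Your proposal is correct and follows essentially the same route as the paper's own proof: identify $\vec{z}\cdot(\hat{\vec{r}}_t-\vec{r}_t)$ as a martingale difference sequence via IPS unbiasedness, bound the increments by $K/\gamma$ using the forced exploration $p_t(a)\geq\gamma/K$, apply Azuma-Hoeffding (Lemma~\ref{lem:AzumaHoeffding}), and union bound over $\tau\in[T]$ (and over resources for \eqref{eq:cons}). The extra detail you supply — the explicit exponent calculation and the handling of the factor of two in the tail bound — only fills in steps the paper leaves implicit.
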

		
			\begin{proof}
			The proof of this follows directly from the invocation of the Azuma-Hoeffding inequality. We will show this for Equation~\eqref{eq:rew}. Define $Y_t := \vec{z}\cdot \Brac{\vec{r}^\ips_t - \vec{r}_t}$ (like-wise for the lower-tail use $Y_t := \vec{z} \cdot (\vec{r}_t-\vec{r}^\ips_t)$). Note that this forms a martingale difference sequence since $\mathbb{E}[\vec{z} \cdot (\vec{r}^\ips_t - \vec{r}_t)~|~\mathcal{H}_{t-1}] = \vec{z} \cdot \Brac{\vec{r}_t-\vec{r}_t} = 0$. Here we used the fact that $\vec{z}$ is not random and fixed before the start of the algorithm. Also we have that $|Y_t| \leq \frac{K}{\gamma}$. Using Lemma~\ref{lem:AzumaHoeffding} and taking a union bound over all $\tau \in [T]$ we have the desired equation.
		\end{proof}

		We will now prove the two inequalities in \refeq{eq:tailhatLP}. We will first prove the first inequality in \refeq{eq:tailhatLP}.
		
		Let $\vec{\mu}^*$ denote the optimal solution to $\OPTLP\Paren{\bvM_{\tau}, B\Paren{1 - \frac{R_{\gamma, \delta}(\tau)}{B} }, \tau}$. Note this is valid whenever $B > \Omega\left( \frac{K}{\gamma} \sqrt{\tau \log \tfrac{T}{\delta}} \right)$. From Equation~\eqref{eq:cons} we have that with probability at least $1-\delta$ for every $ i \in [d]$,
		
		\begin{align}
			\textstyle \sum_{t \in [\tau]} \vec{\mu}^*\cdot \vec{c}^\ips_{t, i} &\textstyle \leq \sum_{t \in [\tau]} \vec{\mu}^* \cdot \vec{c}_{t, i} + R_{\gamma, \delta}(\tau). \nonumber \\
			&\textstyle \leq B \label{eq:HighPUTResource}
		\end{align}

		\refeq{eq:HighPUTResource} used the fact that $\sum_{t \in [\tau]} \vec{\mu}^* \cdot \vec{c}_{t, i} \leq B\Paren{1 -  R_{\gamma, \delta}(\tau)}$.

		Using Equation~\eqref{eq:rew}, we have that with probability at least $1-\delta$,
		\begin{equation*}
			\textstyle \sum_{t \in [\tau]} \vec{\mu}^* \cdot \vec{r}_t \leq \sum_{t \in [\tau]} \vec{\mu}^* \cdot \vec{r}^\ips_t + R_{\gamma, \delta}(\tau).
		\end{equation*}
		
		Using the fact that,
		\[
			\textstyle \sum_{t \in [\tau]} \vec{\mu}^* \cdot \vec{r}_t =  \OPTLP\Paren{\bvM_{\tau}, B\Paren{1 - \frac{ R_{\gamma, \delta}(\tau)}{B}}, \tau},
		\]
		we have the following.
		
		\begin{equation}
			\textstyle \OPTLP\Paren{\bvM_{\tau}, B\Paren{1 - \frac{R_{\gamma, \delta}(\tau)}{B}}, \tau} -  R_{\gamma, \delta}(\tau) \leq \sum_{t \in [\tau]} \vec{\mu}^* \cdot \vec{r}^\ips_t. \label{eq:HighPUTReward}
		\end{equation}

		From \refeq{eq:HighPUTResource} we have that $\vec{\mu}^*$ is feasible to $\OPTLP^{\ips}\Paren{\tau}$ and from \refeq{eq:HighPUTReward} this implies that
		
		\begin{equation}
			\label{eq:HighPWTLOPT}
			\textstyle \OBJ^{\ips}\Paren{\tau} \geq \OPTLP\Paren{\bvM_{\tau}, B\Paren{1 - \frac{ R_{\gamma, \delta}(\tau)}{B}}, \tau} -  R_{\gamma, \delta}(\tau).
		\end{equation}

		Finally from \refeq{eq:scaledBudgetOPTVals1} we have
		\begin{equation}
			\label{eq:HighPScaled1}
			\textstyle \OPTLP\Paren{\bvM_{\tau}, B\Paren{1 - \frac{ R_{\gamma, \delta}(\tau)}{B}}, \tau} \geq \Paren{1 - \frac{ R_{\gamma, \delta}(\tau)}{B}} \OBJ\Paren{\tau}.
		\end{equation}
		
		From \refeq{eq:HighPWTLOPT} and \refeq{eq:HighPScaled1} we have
		\[
				\textstyle \OBJ^{\ips}(\tau) \geq \OBJ(\tau) - \underbrace{R_{\gamma, \delta}(\tau)\Paren{1+ \frac{\OBJ(\tau)}{B}}}_{ \leq \left( 1+ \frac{2\OBJ(\tau)}{B} \right) \frac{K}{\gamma} \sqrt{2\tau \log \tfrac{T}{\delta}} },
		\]
		which gives the lower-tail in \refeq{eq:tailhatLP}.

	We will now prove the second inequality in \refeq{eq:tailhatLP} in a similar fashion. Let $\tilde{\vec{\mu}}^*$ denote the optimal solution to
	$\OBJ^{\ips}\Paren{\bvM_{\tau}, B\Paren{1 - \frac{R_{\gamma, \delta}(\tau)}{B} }, \tau}$.
		
	From Equation~\eqref{eq:cons} we have that with probability at least $1-\delta$ for every $ i \in [d]$,		
		\begin{align}
			\textstyle \sum_{t \in [\tau]} \tilde{\vec{\mu}}^* \cdot \vec{c}_{t, i} &\textstyle \leq \sum_{t \in [\tau]} \tilde{\vec{\mu}}^* \cdot \vec{c}^\ips_{t, i} + R_{\gamma, \delta}(\tau). \nonumber \\
			&\textstyle \leq B \label{eq:HighPLTResource}
		\end{align}

		\refeq{eq:HighPLTResource} used the fact that $\sum_{t \in [\tau]} \tilde{\vec{\mu}}^* \cdot \vec{c}^\ips_{t, i} \leq B\Paren{1 -  R_{\gamma, \delta}(\tau)}$.	
		
		Using Equation~\eqref{eq:rew}, we have that with probability at least $1-\delta$,
		\begin{equation*}
			\textstyle \sum_{t \in [\tau]} \tilde{\vec{\mu}}^* \cdot \vec{r}^\ips_t \leq \sum_{t \in [\tau]} \tilde{\vec{\mu}}^* \cdot \vec{r}_t + R_{\gamma, \delta}(\tau).
		\end{equation*}
		
		From the fact that
		\[
			\textstyle \sum_{t \in [\tau]} \tilde{\vec{\mu}}^* \cdot \vec{r}^\ips_t =  \OPTLP^{\ips}\Paren{\bvM^\ips_{\tau}, B\Paren{1 - \frac{ R_{\gamma, \delta}(\tau)}{B}}, \tau},
		\]
		we get the following.
		
		\begin{equation}
			\textstyle \OPTLP^{\ips}\Paren{\bvM^\ips_{\tau}, B\Paren{1 - \frac{R_{\gamma, \delta}(\tau)}{B}}, \tau} -  R_{\gamma, \delta}(\tau) \leq \sum_{t \in [\tau]} \tilde{\vec{\mu}}^* \cdot \vec{r}_t. \label{eq:HighPLTReward}
		\end{equation}

		From \refeq{eq:HighPLTResource} we have that $\tilde{\vec{\mu}}^*_{j}$ is feasible to $\OPTLP\Paren{\tau}$ and from \refeq{eq:HighPLTReward} this implies that
		
		\begin{equation}
			\label{eq:HighPWTUOPT}
			\textstyle \OPTLP^{\ips}\Paren{\bvM^\ips_{\tau}, B\Paren{1 - \frac{ R_{\gamma, \delta}(\tau)}{B}}, \tau} \leq \OBJ\Paren{\tau} +  R_{\gamma, \delta}(\tau).
		\end{equation}

		Finally from \refeq{eq:scaledBudgetOPTVals1} we have
		\begin{equation}
			\label{eq:HighPScaled2}
			\textstyle \OPTLP^{\ips}\Paren{\bvM^\ips_{\tau}, B\Paren{1 - \frac{ R_{\gamma, \delta}(\tau)}{B}}, \tau} \geq \Paren{1 - \frac{ R_{\gamma, \delta}(\tau)}{B}} \OBJ^{\ips}\Paren{\tau}.
		\end{equation}
		
		Combining \refeq{eq:HighPScaled2} and \refeq{eq:HighPWTUOPT} we get,
		
		\begin{equation}
			\label{eq:HighP2Scaled3}
			\OBJ^{\ips}(\tau) \leq \OBJ(\tau) + R_{\gamma, \delta}(\tau) + \frac{R_{\gamma, \delta}(\tau)}{B - R_{\gamma, \delta}(\tau)} \Paren{\OBJ(\tau) + R_{\gamma, \delta}(\tau)}.	
		\end{equation}
		
		Since $B > 2 R_{\gamma, \delta}(\tau)$ we get,
		
		\[
				\OBJ^{\ips}(\tau) \leq \OBJ(\tau) + \underbrace{ \frac{2R_{\gamma, \delta}(\tau)}{B} \Paren{\OBJ(\tau) + R_{\gamma, \delta}(\tau)}}_{\leq \left(1+\frac{2 \OBJ(\tau)}{B} \right) \frac{K}{\gamma} \sqrt{2 \tau \log \tfrac{T}{\delta} }},
		\]
		
		and thus we get the upper-tail in \refeq{eq:tailhatLP}.
		
		We will now prove \refeq{eq:tailhatLP-guess}. Recall that $\guess_{\tau} := \max_{t \in [\tau]} \OBJ^{\ips}(t)$. Moreover $\OPTLPfull[\tau] = \max_{t \in [\tau]} \OBJ(t)$.
		
		Consider $\guess_{\tau} - \OPTLPfull[\tau]$. We have,
		
		\begin{align*}
			\guess_{\tau} - \OPTLPfull[\tau] & = \max_{t \in [\tau]} \OBJ^{\ips}(t) - \OPTLPfull[\tau] \\
			& \leq \max_{t \in [\tau]} (\OBJ(t) + \Dev(t)) - \OPTLPfull[\tau] \\
			& \leq \max_{t \in [\tau]} \OBJ(t) + \max_{t \in [\tau]} \Dev(t) - \OPTLPfull[\tau]\\
			& = \max_{t \in [\tau]} \Dev(t).
		\end{align*}

		Now consider $\OPTLPfull[\tau]-\guess_{\tau}$. We have,
		
		\begin{align*}
			\OPTLPfull[\tau]-\guess_{\tau} &\leq 	\OPTLPfull[\tau] - \max_{t \in [\tau]} (\OBJ(t) - \Dev(t)) \\
			& \leq \OPTLPfull[\tau] - \max_{t \in [\tau]} \OBJ(t) + \max_{t \in [\tau]} \Dev(t) \\
			& = \max_{t \in [\tau]} \Dev(t).
		\end{align*}
		
	This completes the proof of Lemma~\ref{lem:hatLPwithActual}.

\section{Extensions}
\label{sec:ext}

\newcommand{\CorIID}{(C1)\xspace}
\newcommand{\CorAdv}{(C2)\xspace}

We obtain several extensions which highlight the modularity of \MainALG: we apply Theorem~\ref{thm:IID} and Theorem~\ref{thm:AdvBwK-main} with appropriately chosen primal algorithm $\ALG_1$, and immediately obtain strong performance guarantees.%
\footnote{For these theorems to hold, $\ALG_1$ needs to satisfy regret bound \eqref{eq:prelims-regret-weak} only against adaptive adversaries that arise in the repeated Lagrange game in the corresponding extension, not against \emph{arbitrary} adaptive adversaries.}
We tackle four well-known scenarios:
\begin{itemize}
\item \emph{full feedback} \citep[\eg][]{LittWarm94,FS97,AroraHK}: in each round, the algorithm chooses an action and observes the outcomes of all possible actions; this is a classic scenario in online machine learning.

\item \emph{combinatorial semi-bandits} \cite[\eg][]{Gyorgy-jmlr07,Kale-nips10,Audibert-colt11}: actions are feasible subsets of ``atoms". The atoms in the chosen action have individual outcomes that are observed and add up to the action's total outcome. Typical motivating example are subsets/lists of news articles, ads, or web search results.

\item \emph{contextual bandits with policy sets} \cite[\eg][]{Langford-nips07,policy_elim,monster-icml14}: before each round, a \emph{context} is observed, and the algorithm competes against the best policy (mapping from context to actions) in a given policy class. In a typical application scenario, the context includes known features of the current user.
\item \emph{bandit convex optimization} (starting from \citet{Bobby-nips04,FlaxmanKM-soda05}, with recent advances \citet{Bubeck-colt15,Hazan-nips14,bubeck2017kernel}). Here the set of actions is a convex set $\mX \subset \R^K$. For each round $t$, the adversary chooses a concave function
        $f_t: \mathcal{X}\to [0,1]$
    such that the reward for chosen action $\vec{x}\in \mX$ is $f_t(\vec{x})$.
\end{itemize}

\xhdr{Formalities.} To simplify the statements, we make the following assumptions without further mention:
\begin{itemize}
\item The dual algorithm, $\ALG_2$, is always Hedge, with the associated regret bound from \refeq{eq:prelims-regrets}. For high-probability regret bounds, $\delta = \tfrac{1}{T}$ is a fixed and known  failure probability parameter.
\item For \StochasticBwK, one resource is the dummy resource (with consumption $\tfrac{B}{T}$ for each arm). Algorithm \MainALG is run with parameters $B_0 = B$ and $T_0=T$.

\item For \AdvBwK, one of the arms is a \emph{null arm} that has zero reward and zero resource consumption. Algorithm~\ref{alg:LagrangianBwKAdv} is run with any $\kappa> 1$ and range $[\Gmin,\Gmax] = [\sqrt{T},T]$, as in Theorem~\ref{thm:AdvBwK-main}(b).
\end{itemize}

\xhdr{A typical corollary.} All our corollaries have the following shape, for some regret term $\reg$:
\begin{itemize}
\item[\CorIID] In the stochastic version, algorithm \MainALG achieves, with probability at least $1-\tfrac{1}{T}$,
\begin{align*}
\OPTDP - \REW \leq
    O\left( \tfrac{T}{B}\cdot \reg\right).
\end{align*}

\item[\CorAdv] In the adversarial version, Algorithm~\ref{alg:LagrangianBwKAdv} achieves
\begin{align*}
\frac{\E[\REW]}{\OPTFD}
    \geq  \frac{2}{(d+1)\;  \ln(T)} - O(\reg)
            \left( \tfrac{1}{\OPTFD} + \tfrac{1}{d B}\right).
\end{align*}
\end{itemize}

\OMIT{For the stochastic case, it is unsurprising given prior work, in all three extensions, that the regret bound in \CorIID is attained by some algorithm. The significance here is that we obtain this result as an immediate corollary.}

\noindent Corollaries similar to \CorAdv can be achieved for Algorithm~\ref{alg:LagrangianBwKAdvHighP}, too; we omit them for ease of exposition.

\subsection{BwK with full feedback}

In the full-feedback version of BwK, the entire outcome matrix $\vM_t$ is revealed to the algorithm after each round $t$. Accordingly, we can use Hedge as the primal algorithm $\ALG_1$. The effect is, essentially, that the dependence on $K$, the number of arms, in the regret term becomes logarithmic rather than $\sqrt{K}$.

\begin{corollary}\label{cor:fullFB}
Consider BwK with full feedback. Using Hedge as the primal algorithm, we obtain corollaries \CorIID and \CorAdv with regret term
    $\reg = \sqrt{T \ln \left( dKT\right) }$.
\end{corollary}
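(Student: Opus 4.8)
The plan is to recognize that full feedback is exactly what is needed to run Hedge as the primal algorithm, and then to invoke Theorem~\ref{thm:IID} and Theorem~\ref{thm:AdvBwK-main}(b) as black boxes, simplifying the logarithmic factors. First I would observe that in the full-feedback version the primal player sees the entire outcome matrix $\vM_t$ after round $t$, and therefore can reconstruct the full Lagrangian payoff vector $\mL_t(\cdot,i_t)$ over all arms (using the dual player's choice $i_t$, which is internal to \MainALG). Hence the primal player faces a genuine full-information adversarial online learning instance, so we may take $\ALG_1$ to be Hedge, which satisfies \eqref{eq:prelims-regret-weak} with $R_{1,\delta}(T) = O(\sqrt{T\log(K/\delta)})$ by \eqref{eq:prelims-regrets}. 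The dual algorithm $\ALG_2$ is Hedge by our standing assumption, with $R_{2,\delta}(T) = O(\sqrt{T\log(d/\delta)})$. The key gain over bandit feedback is that both regret terms now depend only \emph{logarithmically} on the sizes $K$ and $d$ of the respective action sets.

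For the stochastic bound \CorIID, I would plug these two regret terms into the black-box guarantee \eqref{eq:thm:IID-general}. Choosing the internal failure probability $\delta = \Theta(T^{-2})$ makes the success probability $1-O(\delta T) \geq 1-\tfrac1T$, and since $\delta/T = \Theta(T^{-3})$ every logarithm appearing in $R_{1,\delta/T}(T)$, $R_{2,\delta/T}(T)$, and $\sqrt{T\log(dT/\delta)}$ has argument polynomial in $d,K,T$; thus each of these three terms is $O(\sqrt{T\log(dKT)})$. Their sum is therefore $O(\reg)$ with $\reg = \sqrt{T\ln(dKT)}$, and \eqref{eq:thm:IID-general} collapses to $\OPTDP - \REW \leq O(\tfrac{T}{B}\,\reg)$, which is exactly \CorIID.

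For the adversarial bound \CorAdv, I would invoke Theorem~\ref{thm:AdvBwK-main}(b), whose internal failure probability is fixed at $\delta = T^{-2}$. Its regret term is $(1+\tfrac{\OPTFD}{dB})(R_{1,\delta/T}(T)+R_{2,\delta/T}(T))$; by the same logarithm bookkeeping, $R_{1,\delta/T}(T)+R_{2,\delta/T}(T) = O(\sqrt{T\log(dKT)}) = O(\reg)$. Dividing \eqref{eq:adv-thm-logT} through by $\OPTFD$ gives $\E[\REW]/\OPTFD \geq \big(1 - O(\reg)(\tfrac{1}{\OPTFD}+\tfrac{1}{dB}) - (\sqrt{T}+1)/\OPTFD\big)\big/\big(\tfrac12(d+1)\ln T\big)$. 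Finally I would absorb the stray additive term $(\sqrt{T}+1)/\OPTFD$ into $O(\reg)(\tfrac{1}{\OPTFD}+\tfrac{1}{dB})$; this is legitimate because $\reg = \sqrt{T\ln(dKT)} \geq \sqrt{T}+1$ for $T$ large, so $(\sqrt{T}+1)/\OPTFD \leq O(\reg)/\OPTFD$. The result is precisely the \CorAdv form.

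The only genuine subtlety, and the step I would be most careful about, is that the adversary faced by $\ALG_1$ inside the repeated Lagrangian game is \emph{adaptive}: the payoff vector $\mL_t(\cdot,i_t)$ depends on the dual player's choice $i_t$, which is a function of the history. This is exactly the regime flagged in the footnote to Section~\ref{sec:ext}. Fortunately, Hedge's full-information regret bound \eqref{eq:prelims-regrets} holds against adaptive adversaries, since the multiplicative-weights analysis is pathwise in the observed loss vectors; hence no additional work is required, and everything else is routine substitution and simplification of logarithmic factors.
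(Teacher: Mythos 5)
Your proposal is correct and follows exactly the paper's intended route: the paper treats this corollary as an immediate consequence of Theorem~\ref{thm:IID} and Theorem~\ref{thm:AdvBwK-main}(b) with Hedge substituted as the primal algorithm, which is precisely what you do, including the observation that full feedback lets the primal player compute $\mL_t(\cdot,i_t)$ for all arms and the caveat (flagged in the paper's footnote to Section~\ref{sec:ext}) that Hedge's regret bound must hold against the adaptive adversary arising inside the repeated Lagrangian game. Your logarithm bookkeeping and absorption of the $\sqrt{T}+1$ term are the routine simplifications the paper leaves implicit, so there is nothing to correct.
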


%

\AdvBwK with full feedback have not been studied before. For the stochastic version, the regret bound is unsurprising: one expects to obtain a similar improvement with each of the three other algorithms in the prior work on \StochasticBwK by tracing the ``confidence terms" through the analysis. The significance here is that we obtain this result as an immediate corollary.

\subsection{Combinatorial Semi-Bandits with Knapsacks}
\label{sec:ext-semi}

Following \cite{Karthik-aistats18}, we consider \emph{Combinatorial Semi-BwK}, a common generalization of BwK and \emph{combinatorial semi-bandits} \cite[\eg][]{Gyorgy-jmlr07,Kale-nips10,Audibert-colt11}. In this problem, actions correspond to subsets of some finite ground set $\Omega$ of size $n$, whose elements are called \emph{atoms}. There is a fixed family $\mF\subset 2^{\Omega}$ of feasible actions. For each round $t$, there is an outcome vector $\vec{o}_{t,e} \in [0,\tfrac{1}{n}]^{d+1}$ for each round atom $e\in \Omega$, with the same semantics as the actions' outcome vectors. If an action $S\subset \Omega$ is chosen, the outcome vectors $\vec{o}_{t,e}$ are observed for all atoms $e\in S$, and the action's outcome is the sum
    $\vM_t(S) = \sum_{e\in S} \vec{o}_{t,e} \in [0,1]^d$.
In the adversarial case, all outcome vectors $\vec{o}_{t,e}$, $t\in [T]$, $e\in \Omega$ are chosen by an adversary arbitrarily before round $1$. In the stochastic case, the atomic outcome matrix
    $(\vec{o}_{t,e}:\; e\in \Omega)$
is drawn independently in each round $t$ from some fixed distribution. Combinatorial semi-bandits, as studied previously, is a special case with no resource constraints ($d=0$).

Typical motivating examples involve ad/content personalization scenarios. Atoms can correspond to items news articles, ads, or web search results, and actions are subsets that satisfy some constraints on quantity, relevance, or diversity of items. One can also model ranked lists of atoms: then atoms are rank-item pairs, and each feasible action $S\subset \Omega$ satisfies a constraint that each rank between $1$ and $|S|$ is present in exactly one chosen rank-item pair.

A naive application of our main results suffers from regret terms that are proportional to $\sqrt{|\mF|}$, which may be exponential in the number of atoms $n$. Instead, the work on combinatorial semi-bandits features regret bounds that scale polynomially in $n$. This is what we achieve, too. We use an algorithm from \cite{neu2016importance} which solves combinatorial semi-bandits in the absence of resource constraints. This algorithm satisfies a high-probability regret bound \eqref{eq:prelims-regret-weak} against an adaptive adversary, with
    $R_\delta(T) = O(\sqrt{nT \log(1/\delta)})$.~
\footnote{\label{fn:ext-semi}Prior work \citep{neu2016importance,Karthik-aistats18} posits that atoms' per-round rewards/consumptions lie in the range $[0,1]$, rather than $[0,\tfrac{1}{n}]$, so their stated regret bounds should be recaled accordingly.}

\begin{corollary}\label{cor:SemiBwK}
Consider Combinatorial Semi-BwK with $n$ atoms. Using the algorithm from \cite{neu2016importance} as the primal algorithm, we obtain corollaries \CorIID and \CorAdv with regret term
    $\reg = \sqrt{nT \log T}$.
\end{corollary}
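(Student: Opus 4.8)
The plan is to instantiate the two modular corollaries \CorIID and \CorAdv by exhibiting the algorithm of \cite{neu2016importance} as a valid primal player $\ALG_1$ in \MainALG, with the claimed regret term. By the modularity guaranteed by Theorem~\ref{thm:IID} and Theorem~\ref{thm:AdvBwK-main}, everything reduces to a single claim: run as the primal algorithm over the action set $\mF$ of feasible subsets, with the Lagrangian payoffs induced by $\myLag{\vM_t}{B_0}{T_0}$, the algorithm of \cite{neu2016importance} satisfies the regret bound \eqref{eq:prelims-regret-weak} with $R_\delta(T) = O(\sqrt{nT\log(1/\delta)})$. The naive alternative --- treating each feasible subset as an opaque arm and invoking EXP3.P --- would incur a $\sqrt{|\mF|}$ factor, exponential in $n$; avoiding this is the whole content of the corollary.

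The key step is to recognize that the primal side of the repeated Lagrangian game is itself a combinatorial semi-bandit instance. Fix a round $t$ and let $i_t$ be the resource chosen by the dual algorithm $\ALG_2$. Because the action outcome is additive over atoms, $\vM_t(S)=\sum_{e\in S}\vec{o}_{t,e}$, the Lagrangian payoff decomposes as
\begin{equation*}
\mL_t(S,i_t) = r_t(S)+1-\tfrac{T_0}{B_0}\,c_{t,i_t}(S)
= 1+\sum_{e\in S}\Paren{r_{t,e}-\tfrac{T_0}{B_0}\,c_{t,i_t,e}}
=: 1+\sum_{e\in S} g_{t,e}.
\end{equation*}
The additive constant $1$ is the same for every $S$ and hence irrelevant to regret, so the primal player is solving exactly a semi-bandit problem with atomic payoffs $g_{t,e}$. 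Crucially, the feedback matches: in Combinatorial Semi-BwK the algorithm observes the atomic outcome vectors $\vec{o}_{t,e}$ for every $e$ in the chosen set, and $i_t$ is known, so each atomic payoff $g_{t,e}$ (for $e\in S$) can be computed and passed to $\ALG_1$ as semi-bandit feedback. Thus \MainALG, specialized to this primal feedback, is implementable with no additional observations.

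With this reduction in place, I would invoke the high-probability regret guarantee of \cite{neu2016importance}, which holds against an \emph{adaptive} adversary and gives $R_\delta(T)=O(\sqrt{nT\log(1/\delta)})$ after the rescaling noted in Footnote~\ref{fn:ext-semi}. The adaptivity is exactly what is needed, since the payoff sequence $\{g_{t,e}\}$ depends on the history through the dual player's choices $i_t$; this is the adaptive adversary flagged in the footnote to the Extensions section, not an arbitrary one. The payoff range $[\rmin,\rmax]$ (with $\rmax-\rmin=\tfrac{T_0}{B_0}+1$) is absorbed by the $(\rmax-\rmin)$ factor in \eqref{eq:prelims-regret-weak}, so only the unit-scale bound $R_\delta(T)$ enters. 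The point that warrants care --- and which I expect to be the main obstacle --- is verifying that the cited guarantee applies verbatim to the shifted, possibly negative atomic payoffs $g_{t,e}$ coming from the $-\tfrac{T_0}{B_0}c_{t,i_t,e}$ term, and to variable-size feasible sets, i.e. that one may pass to the loss/normalized form expected by \cite{neu2016importance} without disturbing either the benchmark (best fixed $S\in\mF$) or the $\sqrt{n}$ scaling.

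Finally, I would plug the two regret terms into the main theorems. Taking $\ALG_2$ to be Hedge over the $d$ resources gives $R_{2,\delta}(T)=O(\sqrt{T\log(d/\delta)})$, while $R_{1,\delta}(T)=O(\sqrt{nT\log(1/\delta)})$ dominates. Substituting into the general bound \eqref{eq:thm:IID-general} of Theorem~\ref{thm:IID} (with $\delta=\tfrac1T$) yields \CorIID, and substituting into Theorem~\ref{thm:AdvBwK-main}(b) yields \CorAdv, in both cases with regret term $\reg=\sqrt{nT\log T}$, as claimed.
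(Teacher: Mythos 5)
Your proposal is correct and follows essentially the same route as the paper: use the algorithm of \cite{neu2016importance} as the primal player $\ALG_1$, citing its high-probability regret bound \eqref{eq:prelims-regret-weak} against an adaptive adversary with $R_\delta(T)=O(\sqrt{nT\log(1/\delta)})$ (after the rescaling of Footnote~\ref{fn:ext-semi}), take Hedge as the dual, and plug both into Theorem~\ref{thm:IID} and Theorem~\ref{thm:AdvBwK-main}. The one thing you add beyond the paper's terse treatment is the explicit atomic decomposition $\mL_t(S,i_t)=1+\sum_{e\in S}\bigl(r_{t,e}-\tfrac{T_0}{B_0}c_{t,i_t,e}\bigr)$ showing the primal side is literally a semi-bandit instance with observable atomic payoffs; this is exactly the step the paper leaves implicit, so the approaches coincide.
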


The adversarial version of Combinatorial Semi-BwK has not been studied before.

The stochastic version has been studied in \cite{Karthik-aistats18} when the action set is a matroid, achieving regret
    \[ \tilde{O}\left( \OPTDP\,\sqrt{n/B} + \sqrt{T/n}+ \sqrt{\OPTDP} \right). \]
    This regret bound becomes
    $ \tilde{O}(\sqrt{nT})$
in the regime when $B$ and $\OPTDP$ are $\Omega(T)$ (see Footnote~\ref{fn:ext-semi}). We achieve the same regret bound for this regime, without the matroid assumption and without any extra work. However, the regret bound in \cite{Karthik-aistats18} can be substantially better than ours when $\OPTDP \ll T$.

\subsection{Contextual Bandits with Knapsacks}
\label{sec:ext-CB}

Following \cite{cBwK-colt14,CBwK-colt16}, we consider \emph{Contextual Bandits with Knapsacks} (\emph{cBwK}), a common generalization of BwK and \emph{contextual bandits with policy sets} \cite[\eg][]{Langford-nips07,policy_elim,monster-icml14}.
The only change in the protocol, compared to BwK, is that in the beginning of each round $t$ a context $x_t\in \mX$ arrives and is observed by the algorithm before it chooses an action. The context set $\mX$ is arbitrary and known. In the adversarial version (\emph{Adversarial cBwK}) both $x_t$ and the outcome matrix $\vM_t$ is chosen by an adversary. In the stochastic version (\emph{Stochastic cBwK}) the pair $(x_t,\vM_t)$ is chosen independently from some fixed but unknown distribution over such pairs.

In cBwK one is also given a finite set $\Pi$ of \emph{policies}: deterministic mappings from contexts to actions.%
\footnote{W.l.o.g. assume that $\Pi$ contains all \emph{constant policies}, \ie all policies that always evaluate to the same action.}
 Essentially, the algorithm competes with the best course of action restricted to these policies. For a formal definition, let us interpret cBwK as a BwK problem with action set $\Pi$, denote this problem as $\term{BwK}(\Pi)$. In other words, actions in $\term{BwK}(\Pi)$ are policies in cBwK. An algorithm for $\term{BwK}(\Pi)$ is oblivious to context arrivals. It chooses a policy $\pi_t\in \Pi$ in each round $t$, and receives an outcome for this policy: namely, the outcome for action $\pi(x_t)$. We are interested in the usual benchmarks for this problem, the best algorithm $\OPTDP$ and the best fixed distribution $\OPTFD$ (where both benchmarks are constrained to use policies in $\Pi$); denote them $\OPTDP(\Pi)$ and $\OPTFD(\Pi)$, respectively.

Without budget constraints (\ie with $B=T$), this is exactly contextual bandits with policy set $\Pi$. Both benchmarks then reduce to the standard benchmark of the best fixed policy.

\xhdr{Background: algorithm EXP4.P.} We use EXP4.P \citep{exp4p}, an algorithm for the contextual version of adversarial online learning with bandit feedback. The algorithm operates according to the protocol in Figure~\ref{prob:CB-adv}.

\begin{figure}[ht]
\begin{framed}
{\bf Given:} action set $[K]$, context set $\mX$, policy set $\Pi$, payoff range $[\rmin,\rmax]$.\\
In each round $t\in [T]$,
\begin{OneLiners}
\item[1.] the adversary chooses a context $x_t\in \mX$ and a payoff vector $\vec{g}_t\in [\rmin,\rmax]^K$;
\item[2.] the algorithm chooses a distribution $\vec{p}_t$ over $\Pi$ (without seeing $x_t$);
\item[3.] a policy $\pi_t\in \Pi$ is drawn independently from $\vec{p}_t$;
\item[4.] algorithm's chosen action is defined as $a_t = \pi_t(x_t) \in [K]$;
\item[5.] payoff $g_t(a_t)$ is received by the algorithm.
\end{OneLiners}
\vspace{-1mm}
\end{framed}
\caption{Adversarial contextual bandits}
\label{prob:CB-adv}
\end{figure}

We are interested in regret bounds for EXP4.P relative to the best fixed policy:
\[ \textstyle \OPT_\Pi = \max_{\pi\in \Pi} \sum_{t\in [T]} f_t(\pi(x_t)). \]
For each round $t$, the pair $(x_t,\vec{g}_t)$ induces a payoff vector $\vec{f}_t \in [\rmin,\rmax]^\Pi$ on policies:
\[ f_t(\pi) = g_t(\pi(x_t)) \qquad \forall \pi \in \Pi.  \]

\begin{theorem}[\cite{exp4p}]\label{thm:EXP4}
Fix failure probability $\delta>0$, policy set $\Pi$, and payoff range $[\rmin,\rmax]$.
Then algorithm EXP4.P (appropriately tuned) satisfies the following regret bound:
\begin{align}\label{eq:thm:EXP4}
\textstyle
\Pr\left[\;
  \OPT_\Pi
   \;-\;  \sum_{t\in [T]} f_t(\pi_t)  \leq (\rmax-\rmin)\, R_\delta(T)
\;\right] \geq 1-\delta,
 \end{align}
with regret term
    $ R_\delta(T) = O\left( \sqrt{\tau K \log (KT\, |\Pi|/\delta )} \right)$.
\end{theorem}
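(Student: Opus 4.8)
The plan is to follow the standard exponential-weights-over-experts analysis, treating each policy $\pi\in\Pi$ as an expert, but with the variance-correction term in the exponent that is the defining feature of EXP4.P and the source of its high-probability (rather than merely in-expectation) guarantee. First I would set up the reduction to importance-weighted estimates: the distribution $\vec{p}_t$ over policies induces an action distribution $q_t(a)=\sum_{\pi:\,\pi(x_t)=a} p_t(\pi)$, and since only $g_t(a_t)$ is observed, I would form the estimator $\hat{g}_t(a)=g_t(a)\,\mathbf{1}[a_t=a]/q_t(a)$, so that $\hat{f}_t(\pi):=\hat{g}_t(\pi(x_t))$ is unbiased for $f_t(\pi)=g_t(\pi(x_t))$ conditioned on the history. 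After rescaling payoffs to $[0,1]$, the algorithm maintains weights $w_{t+1}(\pi)=w_t(\pi)\exp(\eta(\hat{f}_t(\pi)+\beta\,\hat{v}_t(\pi)))$, where $\hat{v}_t(\pi)=1/q_t(\pi(x_t))$ upper-bounds the per-round conditional variance of $\hat{f}_t(\pi)$, and $\eta,\beta$ are tuning parameters to be fixed at the end.

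The second step is the potential argument. Writing $W_t=\sum_\pi w_t(\pi)$, I would upper-bound $\ln(W_{T+1}/W_1)$ via $e^x\le 1+x+x^2$ on the relevant range (which forces $\eta$ small enough that the exponent is controlled, using that $\sum_a q_t(a)\hat{f}_t(a)\le 1$ and $\sum_a q_t(a)\hat{v}_t(a)\le K$), yielding a bound of the form $\ln(W_{T+1}/W_1)\le \eta\sum_t\sum_\pi p_t(\pi)(\hat{f}_t(\pi)+\beta\hat{v}_t(\pi)) + (\text{lower order})$. Lower-bounding $\ln(W_{T+1}/W_1)\ge \eta\sum_t(\hat{f}_t(\pi^*)+\beta\hat{v}_t(\pi^*))-\ln|\Pi|$ for the optimal fixed policy $\pi^*$ and rearranging produces a regret inequality stated entirely in terms of the estimated quantities $\hat{f}_t$ and $\hat{v}_t$, together with the leading $\ln|\Pi|/\eta$ term.

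The crux, and the main obstacle, is converting this estimated-regret inequality into a bound on the true regret $\OPT_\Pi-\sum_t f_t(\pi_t)$ holding with probability $1-\delta$. For this I would invoke the concentration lemma underlying EXP4.P: for each fixed policy $\pi$, the sequence $\hat{f}_t(\pi)-f_t(\pi)$ is a martingale difference, and a Bernstein-type exponential-moment argument shows that with probability $1-\delta'$, $\sum_t f_t(\pi)\le \sum_t\hat{f}_t(\pi)+\beta\sum_t\hat{v}_t(\pi)+\ln(1/\delta')/\beta$. The variance bonus $\beta\hat{v}_t(\pi)$ in the exponent is chosen precisely so that it absorbs this deviation; a union bound over all $|\Pi|$ policies with $\delta'=\delta/|\Pi|$ is what generates the $\ln(KT|\Pi|/\delta)$ factor. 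The delicate point is that the single parameter $\beta$ must simultaneously keep the potential-argument error terms small and make the concentration bound valid uniformly over $\Pi$; once I verify these two requirements are compatible and optimize $\eta\asymp\sqrt{\ln|\Pi|/(KT)}$ and $\beta\asymp\sqrt{\ln(|\Pi|/\delta)/(KT)}$, the rate $O(\sqrt{TK\log(KT|\Pi|/\delta)})$ emerges. I would finish by intersecting the two high-probability events and reading off the stated regret term $R_\delta(T)$.
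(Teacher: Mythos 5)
Your proposal is correct, but it proves something the paper never proves: Theorem~\ref{thm:EXP4} is imported wholesale from the cited reference \citep{exp4p}, and the paper's only content specific to this statement is an interface-matching observation --- namely that EXP4.P, viewed as choosing a distribution $\vec{p}_t$ over the action set $\Pi$, conforms to the adversarial online learning protocol of Figure~\ref{prob:adv} (via $q_t(a)=\sum_{\pi:\,\pi(x_t)=a}p_t(\pi)$ and $\vec{f}_t\cdot\vec{p}_t=\vec{g}_t\cdot\vec{q}_t$), so that its regret bound can be plugged into the \MainALG{} machinery as the primal algorithm's bound \eqref{eq:prelims-regret-weak}. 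What you wrote is instead a faithful first-principles reconstruction of the Beygelzimer--Langford--Li--Reyzin--Schapire analysis itself: importance-weighted payoff estimates, the variance bonus $\beta\hat{v}_t(\pi)$ in the exponent, the potential argument with $e^x\le 1+x+x^2$, and the Freedman/Bernstein-type martingale inequality union-bounded over $|\Pi|$ policies with the tuning $\eta\asymp\sqrt{\ln|\Pi|/(KT)}$, $\beta\asymp\sqrt{\ln(|\Pi|/\delta)/(KT)}$. This buys self-containedness, and your resulting $\log(|\Pi|/\delta)$ factor is in fact slightly stronger than the $\log(KT|\Pi|/\delta)$ stated in the theorem. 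One simplification you missed: at the end you ``intersect two high-probability events,'' but only one is needed --- on the algorithm's side the identity $\sum_\pi p_t(\pi)\hat{f}_t(\pi)=g_t(a_t)=f_t(\pi_t)$ holds exactly (not merely in expectation), so the potential argument already bounds the \emph{realized} reward, and the sole probabilistic event is the union-bounded concentration for the fixed policies; this exactness is precisely why EXP4.P gives a high-probability guarantee without an extra Azuma step.
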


\xhdr{Our solution for cBwK.}
We solve cBwK by reducing it to $\term{BwK}(\Pi)$, and treating it as a BwK problem. A naive solution simply posits $|\Pi|$ arms and directly applies the machinery developed earlier in this paper. This results in $\sqrt{|\Pi|}$ dependence in regret bounds, which is unsatisfactory, as the policy set may be very large. Instead, we use EXP4.P as the primal algorithm ($\ALG_1$). We interpret EXP4.P as an algorithm for (non-contextual) adversarial online learning, as defined in Section~\ref{sec:prelims}, with action set $\Pi$. It is easy to see that Theorem~\ref{thm:EXP4} provides regret bound
\eqref{eq:prelims-regret-weak}
under this interpretation. Therefore, we obtain the following:

\begin{corollary}\label{cor:CB}
Consider contextual bandits with knapsacks, with policy set $\Pi$. Using EXP4.P as the primal algorithm,
we obtain corollaries \CorIID and \CorAdv with regret term
    $\reg = \sqrt{T K\ln \left( dKT\,|\Pi| \right)}$.
The benchmarks are
    $\OPTDP = \OPTDP(\Pi)$ and $\OPTFD = \OPTFD(\Pi)$.
\end{corollary}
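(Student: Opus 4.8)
The plan is to reduce cBwK to the non-contextual problem $\term{BwK}(\Pi)$, in which policies in $\Pi$ play the role of arms, and then to invoke Theorem~\ref{thm:IID} and Theorem~\ref{thm:AdvBwK-main}(b) with EXP4.P serving as the primal algorithm $\ALG_1$. Concretely, in each round $t$ of the repeated Lagrangian game the dual algorithm (Hedge) produces a resource $i_t$, and I would hand EXP4.P the payoff vector $g_t(a)=\mL_t(a,i_t)=r_t(a)+1-\tfrac{T_0}{B}\,c_{t,i_t}(a)$ over actions $a\in[K]$. Since EXP4.P only ever needs to observe the payoff of the action it actually plays, namely $g_t(a_t)=\mL_t(a_t,i_t)$, and this is exactly the bandit feedback already available to the primal player in \MainALG, the reduction is implementable. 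Under the identification $f_t(\pi)=g_t(\pi(x_t))=\mL_t(\pi(x_t),i_t)$, the guarantee of EXP4.P (Theorem~\ref{thm:EXP4}) becomes precisely the primal regret property \eqref{eq:prelims-regret-weak} for an online learner whose action set is $\Pi$, with regret term $R_{1,\delta}(T)=O(\sqrt{TK\log(KT|\Pi|/\delta)})$. The crucial gain is that this term scales with $\sqrt{K}$ and only logarithmically in $|\Pi|$, which is exactly what avoids the naive $\sqrt{|\Pi|}$ dependence that a direct $|\Pi|$-armed application would incur.

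Once EXP4.P is certified as meeting \eqref{eq:prelims-regret-weak} over the action set $\Pi$, the two corollaries follow by substitution into the respective black-box theorems. For the stochastic version I would use the \CorIID template: since $(x_t,\vM_t)$ is drawn i.i.d., the induced Lagrangian game over $\Pi\times[d]$ is a stochastic repeated game, so Theorem~\ref{thm:IID} applies verbatim with the $K$ arms replaced by the policy set $\Pi$ and $R_{1,\delta}(T)$ as above, while $\ALG_2=$ Hedge contributes $R_{2,\delta}(T)=O(\sqrt{T\log(d/\delta)})$. Plugging these into the black-box bound \eqref{eq:thm:IID-general} and setting $\delta=\tfrac1T$, the dominant regret term is $\sqrt{TK\ln(dKT|\Pi|)}$, yielding $\OPTDP(\Pi)-\REW\le O(\tfrac{T}{B})\,\reg$ with $\reg=\sqrt{TK\ln(dKT|\Pi|)}$; the benchmarks are automatically the policy-restricted ones because \eqref{eq:IID-benchmarks} applied to $\term{BwK}(\Pi)$ reads $T\cdot\OPTLP\ge\OPTDP(\Pi)\ge\OPTFD(\Pi)$. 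For the adversarial version I would invoke Theorem~\ref{thm:AdvBwK-main}(b) with the same $\ALG_1,\ALG_2$; the null arm it requires is supplied by the constant policy that always plays the null action, which lies in $\Pi$ by assumption. Substituting $R_{1,\delta/T}(T)+R_{2,\delta/T}(T)=O(\reg)$ into \eqref{eq:adv-thm-logT} then produces exactly the \CorAdv form.

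The step I expect to be the main obstacle is justifying EXP4.P's regret bound \emph{in the adaptive environment produced by the game} rather than merely against an oblivious adversary. The payoff vector $g_t$ that EXP4.P receives depends on the resource $i_t$ chosen by the dual player, which is itself a function of the history, so the primal learner effectively faces an adaptive adversary. I would resolve this exactly as flagged in the footnote to the Extensions section: the theorems only require \eqref{eq:prelims-regret-weak} against the specific adaptive adversaries that arise in the repeated Lagrangian game, and EXP4.P's high-probability bound \eqref{eq:thm:EXP4} is established through importance-weighted estimators together with martingale (Azuma-type) concentration, both of which remain valid against adaptive adversaries. A secondary point, relevant only in the stochastic case, is to confirm that Lemma~\ref{lem:learningGames} goes through with a contextual primal learner: because the expected game matrix $\vG=\E[\vG_t]$ is defined over $\Pi\times[d]$ and EXP4.P's average play $\bar{\vec{p}}_1$ is a distribution over $\Pi$, the identification $v^*=\OPTLP(\vM,B,T)$ from Lemma~\ref{lm:gameToLP} carries over unchanged, with $\vM$ now the expected outcome matrix over policies. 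Everything beyond these two checks is routine bookkeeping of the regret terms.
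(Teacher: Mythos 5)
Your proposal is correct and follows essentially the same route as the paper: reduce cBwK to $\term{BwK}(\Pi)$, interpret EXP4.P as a (non-contextual) adversarial online learner over the action set $\Pi$ satisfying \eqref{eq:prelims-regret-weak} via Theorem~\ref{thm:EXP4}, and plug it in as the primal algorithm in Theorems~\ref{thm:IID} and~\ref{thm:AdvBwK-main}(b). The two subtleties you flag — that the regret bound is only needed against the adaptive adversaries arising in the repeated Lagrangian game, and that the null arm is supplied by the constant null policy in $\Pi$ — are exactly the points the paper handles in its footnotes, so your write-up is simply a more detailed version of the paper's argument.
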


%

Adversarial cBwK has not been studied before. The regret bound for the adversarial case is meaningful only if $B>\sqrt{T}$. This is essentially inevitable in light of the lower bound in Theorem~\ref{thm:LB-CB}.

Stochastic cBwK has been studied in \cite{cBwK-colt14,CBwK-colt16}, achieving regret
\begin{align}\label{eq:ext-CB-regret-prior}
O(\reg)(1+\OPTDP(\Pi)/B),
\end{align}
where the $\reg$ term is the same as in Corollary~\ref{cor:CB}. Whereas the regret bound from Corollary~\ref{cor:CB} is
    $O(\reg\cdot T/B)$.
Note that we match \eqref{eq:ext-CB-regret-prior} in the regime
    $\OPTDP(\Pi) > \Omega(T)$.
Our regret bound is optimal, up to logarithmic factors, in the regime $B>\Omega(T)$. This is due to the
    $\min\left( T,\,\Omega(\sqrt{KT \log(|\Pi|)/\log(K)} \right)$
lower bound on regret, which holds for contextual bandits \citep{RegressorElim-aistats12}.

\xhdr{Discussion.}
Our algorithms are slow, as the per-round running time of EXP4.P is proportional to $|\Pi|$. The state-of-art approach to computational efficiency in contextual bandits is \emph{oracle-efficient algorithms}, which make only a small number of calls to an oracle that finds the best policy in $\Pi$ for a given data set. In particular, prior work for Stochastic cBwK \citep{CBwK-colt16} obtains an oracle-efficient algorithm with regret bound as in \eqref{eq:ext-CB-regret-prior}. To obtain oracle-efficient algorithms for cBwK in our framework, both for the stochastic and adversarial versions, it suffices to replace EXP4.P with an oracle-efficient algorithm for adversarial contextual bandits that obtains regret bound \eqref{eq:thm:EXP4}, possibly with a larger regret term $R_\delta$. Such algorithms \emph{almost} exist: a recent breakthrough \citep{Syrgkanis-AdvCB-icml16,Rakhlin-AdvCB-icml16,Syrgkanis-AdvCB-nips16} obtains algorithms with similar regret bounds, but only for expected regret.

\subsection{Bandit Convex Optimization with Knapsacks}
\label{sec:ext-BCO}

We consider \emph{Bandit Convex Optimization with Knapsacks} (\emph{BCOwK}), a common generalization of BwK and \emph{bandit convex optimization}. We define BCOwK as a version of BwK, where the action set $\mX$ is a convex subset of $\mathbb{R}^K$. For each round $t$, there is a concave function
        $f_t: \mathcal{X}\to [0,1]$
and convex functions
    $g_{t, i}: \mathcal{X} \rightarrow [0, 1]$, for each resource $i$, so that
the reward for choosing action $\vec{x}\in \mX$ in this round is $f_t(\vec{x})$ and consumption of each resource $i$ is $g_{t,i}(\vec{x})$.
In the stochastic version, the tuple of functions
    $(f_t; g_{t, 1} \LDOTS g_{t,d})$
is sampled independently in each round $t$ from some fixed distribution (which is not known to the algorithm). In the adversarial version, all these tuples are chosen by an adversary before the first round.

Neither stochastic nor adversarial version of BCOwK have been studied in prior work (but see the discussion of constrained online convex optimization in Section~\ref{sec:related}). Bandit convex optimization, as studied previously, is a special case with no resource constraints ($d=0$).

\OMIT{the algorithm chooses a point $\vec{x}_t \in \mX$. For a chosen action $\vec{x}_t$, the algorithm obtains reward $f_t(\vec{x}_t)$ and consumes $g_{t, 1}(\vec{x}_t), g_{t, 2}(\vec{x}_t), \ldots, g_{t, d}(\vec{x}_t)$ amount of each resource. We assume that the function $f_t: \mathcal{X} \rightarrow [0, 1]$ is concave and functions $g_{t, i}: \mathcal{X} \rightarrow [0, 1]$ are convex.\footnote{Since linear functions are both convex and concave, BwK is a special case of BCOwK.} The game stops at time $\tau$ if for some resource $j$, we have $\sum_{t \in [\tau]} g_{t, j}(x_t) > B$.

from a latent distribution with means $f, \{g_i\}_{i \in [d]}$ respectively. In the adversarial setting, the sequence of functions are arbitrary. In the stochastic version, we assume that there exists a point $x \in \mX$ such that $f(x) =0$ and $g_i(x) = 0$ for every $i \in [d]$. In the adversarial setting, we assume that there exists a point $x \in \mX$ such that for every $t \in [T]$, $f_t(x)=0$ and $g_{t, i}(x) = 0$ for every $i \in [d]$.

These requirements are stronger than the \emph{Slater's} condition required for constrained convex optimization.\footnote{The requirements implies that the point $x \in \mX$ satisfies the Slater's constraint. The converse is not necessarily true.} The algorithm competes with the usual benchmarks given the sequence of functions; the reward obtained by the best sequence of actions in $\mX$ ($\OPTDP$) and the best fixed distribution over points in $\mX$ ($\OPTFD$).
} 

The primal algorithm $\ALG_1$ in \MainALG faces an instance of BCO (with an adaptive adversary). This is because the Lagrange function~\eqref{eq:LagrangianGeneral} is a concave function of the action, as sum of concave functions. For our primal algorithm, we use a recent breakthrough on BCO due to \cite{bubeck2017kernel}. This algorithm satisfies the high-probability regret bound \eqref{eq:prelims-regret-weak} against an adaptive adversary, with regret term
\[ R_\delta(T) = O(K^{9.5} \log^{7}(T) \sqrt{T \log(1/\delta)}).\]

We assume the existence of a \emph{null arm}: a point $\vec{x}\in \mX$ such that
    $f_t(\vec{x})=g_{t,i}(\vec{x}) = 0$
for each resource $i$ except the ``dummy resource". (Recall that we posit the ``dummy resource" --  a resource whose consumption is $B/T$ for each arm -- for the stochastic version.) Unlike elsewhere in this paper, this assumption is not without loss of generality: indeed, the null arm should be ``embedded" into $\mX$ without breaking the convexity/concavity properties. Moreover, we assume that the null arm lies in the interior of $\mX$.

\begin{corollary}\label{cor:BCO}
	Consider BCOwK for a given convex set $\mX \subset \mathbb{R}^K$. Using the algorithm from \cite{bubeck2017kernel} as the primal algorithm, we obtain corollaries \CorIID and \CorAdv with regret term $\reg = K^{9.5} \log^{7.5}(T) \sqrt{T}$.
\end{corollary}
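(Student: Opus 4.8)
The plan is to exploit the \emph{modularity} of \MainALG established in Section~\ref{sec:ext}: both Theorem~\ref{thm:IID} (for the stochastic case, yielding \CorIID) and Theorem~\ref{thm:AdvBwK-main}(b) (for the adversarial case, yielding \CorAdv) are black-box in the primal algorithm $\ALG_1$, requiring only that $\ALG_1$ satisfy the high-probability regret bound \eqref{eq:prelims-regret-weak}. So the whole task reduces to (i) identifying the online-learning problem that the primal player faces in the repeated Lagrangian game for BCOwK, (ii) supplying an algorithm with a suitable regret term, and (iii) tracking that regret term through the two theorems.

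First I would observe that when the primal action set is the convex set $\mX\subset\R^K$, each round of the game presents the primal player with the payoff function $\vec{x}\mapsto \mL_t(\vec{x},i_t)=f_t(\vec{x})+1-\tfrac{T_0}{B_0}\,g_{t,i_t}(\vec{x})$, where $i_t$ is the resource chosen by the dual player. Since $f_t$ is concave and each $g_{t,i}$ is convex, this payoff is a concave function of $\vec{x}$, with range $[\rmin,\rmax]=[-\tfrac{T_0}{B_0}+1,2]$ exactly as in \eqref{eq:IID-Lagrange-t}. Hence the primal player faces an instance of bandit convex optimization, against an \emph{adaptive} adversary (the choice $i_t$ depends on history). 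This is the regime covered by the algorithm of \cite{bubeck2017kernel}, whose bound \eqref{eq:prelims-regret-weak} holds against adaptive adversaries with $R_\delta(T)=O(K^{9.5}\log^{7}(T)\sqrt{T\log(1/\delta)})$.

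Next I would feed this $R_{1,\delta}$, together with the Hedge bound $R_{2,\delta}(T)=O(\sqrt{T\log(d/\delta)})$ for the dual player (whose action set is the $d$ resources), into Theorems~\ref{thm:IID} and~\ref{thm:AdvBwK-main}. With the fixed failure probability $\delta=\tfrac1T$ dictated by the Formalities we have $\log(1/\delta)=O(\log T)$, so $R_{1,\,\delta/T}(T)=O(K^{9.5}\log^{7.5}(T)\sqrt{T})$, which dominates the dual term $R_{2,\,\delta/T}(T)$ and the residual $\sqrt{T\log(dT)}$. Substituting $\reg=K^{9.5}\log^{7.5}(T)\sqrt{T}$ into the conclusions of the two theorems reproduces \CorIID and \CorAdv verbatim, so this step is routine once the regret term is in hand.

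The main obstacle is not the calculation but checking that the black-box reduction is genuinely legitimate for a continuous, convex action set. Two points need care. First, the benchmark: the theorems compete against the best fixed distribution $\OPTFD$ over $\mX$, whereas \cite{bubeck2017kernel} controls regret relative to the best fixed \emph{point}. These agree here by concavity — for any distribution $p$ over $\mX$ with barycenter $\bar{\vec{x}}=\E_{\vec{x}\sim p}[\vec{x}]\in\mX$, Jensen gives $\E_{\vec{x}\sim p}[f_t(\vec{x})]\leq f_t(\bar{\vec{x}})$ and $\E_{\vec{x}\sim p}[g_{t,i}(\vec{x})]\geq g_{t,i}(\bar{\vec{x}})$, so the point mass at $\bar{\vec{x}}$ weakly dominates $p$ in reward and in every consumption constraint; consequently the linear-relaxation value is attained by a point and the primal inequality (which uses only $\sup_{\vec{x}\in\mX}\sum_t\mL_t(\vec{x},i_t)$, exactly what BCO bounds) goes through unchanged. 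Second, the null-arm used in Case~1 of Lemma~\ref{lm:adv-crux}, and the dummy resource in Theorem~\ref{thm:IID}, must be realized inside $\mX$: one needs a point $\vec{x}\in\mX$ with $f_t(\vec{x})=g_{t,i}(\vec{x})=0$ for every non-dummy resource $i$, lying in the \emph{interior} of $\mX$ so that the algorithm of \cite{bubeck2017kernel} can operate. Unlike elsewhere in the paper, embedding such a null arm is a genuine assumption rather than a reduction, since it must respect the convexity/concavity structure; I would state it explicitly as a hypothesis, after which all remaining steps are identical to the discrete case.
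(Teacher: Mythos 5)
Your choice of primal algorithm, the regret arithmetic (running with $\delta=\tfrac1T$ turns $\sqrt{T\log(1/\delta)}$ into an extra $\log^{1/2}(T)$ factor, giving $\reg = K^{9.5}\log^{7.5}(T)\sqrt{T}$), and the observation that the primal player faces bandit convex optimization with concave payoffs $\vec{x}\mapsto f_t(\vec{x})+1-\tfrac{T_0}{B_0}\,g_{t,i_t}(\vec{x})$ against an adaptive adversary all match the paper's treatment. The adversarial corollary \CorAdv does go through on your argument: Lemma~\ref{lm:adv-crux} only uses the primal regret inequality against $\sup_{\vec{x}\in\mX}\sum_t\mL_t(\vec{x},i_t)$ together with LP feasibility, and a supremum over points dominates the value of \emph{any} distribution trivially, so your Jensen step is not even needed there (it reconciles the two benchmarks in the direction that the analysis never uses).

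The gap is in the stochastic corollary \CorIID. Theorem~\ref{thm:IID} is not black-box in the \emph{action set}: its proof passes through Lemma~\ref{lem:learningGames} and Lemma~\ref{lm:gameToLP}, i.e., through the identity $v^*=\OPTLP$ between the minimax value of the expected Lagrangian game and the value of the relaxation, and that identity rests on the minimax property \eqref{eq:LagrangeMinMax} — strong duality. For finitely many arms this is automatic (LP duality); over a continuous convex set $\mX$ the relaxation is a constrained concave program, and strong duality can genuinely fail without a constraint qualification. This is precisely the condition the paper isolates: the framework extends to infinite action sets \emph{provided} \eqref{eq:LagrangeMinMax} holds, and for BCOwK this is verified via Slater's condition, whose witness is the null arm lying in the interior of $\mX$ with consumption $0<B/T$ in every non-dummy resource. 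Your proposal never verifies \eqref{eq:LagrangeMinMax}; the Jensen argument collapses distributions to points but does nothing to rule out a duality gap. Tellingly, you do impose the interior null arm, but for the wrong reason ("so that the algorithm of \cite{bubeck2017kernel} can operate") — its actual role is to make Slater's condition, and hence the stochastic analysis, valid. One could in principle avoid Slater by noting that Theorem~\ref{thm:IID} only needs $v^*\geq\OPTLP$ (weak duality) and $v^*\leq 1$ (from the dummy resource), with Sion's minimax theorem replacing von Neumann's inside Lemma~\ref{lem:learningGames}; but that is a reworking of those proofs, which your proposal does not undertake.
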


\begin{remark}\label{rem:ext-BCO-infinite}
\MainALG framework extends to infinite action sets: everything carries over, as long as \refeq{eq:LagrangeMinMax} holds. (Essentially, we never take union bounds over actions, and we can replace $\max$ and sums over actions with $\sup$ and integrals.) For BCOwK, \refeq{eq:LagrangeMinMax} is a statement about constrained convex optimization programs. According to \emph{Slater's condition} \citep[see Eq. (5.27) in][]{boyd2004convex}, it suffices to have a point $\vec{x}$ in the interior of $\mX$ such that $g_{t,i}(\vec{x}) < B/T$ for each resource $i \in [d]$ other than the dummy resource (or any other resource whose consumption is the same in all rounds).
One such point is the null arm.
\end{remark}


%
%

\section{Lower bounds}
\label{sec:LB}

We prove the lower bounds on the competitive ratio that we have claimed in Section~\ref{sec:intro}: the $\Omega(\log T)$ lower bound w.r.t. the best fixed distribution benchmark ($\OPTFD$), the $\Omega(T)$ lower bound w.r.t. the best dynamic policy benchmark ($\OPTDP$), and the $\Omega(K)$ lower bound w.r.t. the best fixed arm benchmark ($\OPTFA$). As a warm-up, we analyze the simple example from Section~\ref{sec:intro} that leads to the $\tfrac54$ lower bound w.r.t. $\OPTFD$. All lower-bounds are for a randomized algorithm against an oblivious adversary. We summarize all these lower bounds in the following theorem:


\begin{theorem}\label{thm:LB-main}
Consider \AdvBwK with a single resource ($d=1$), $K$ arms, budget $B$, and time horizon $T$. Consider any randomized algorithm for this problem, and let $\REW$ denote its reward. Then:
\begin{itemize}
\item[(a)]
    $\OPTFD/\E[\REW]\geq \tfrac54 - o(1)$ for some problem instance (from the example in the Introduction).\\
This holds even if $\OPTFD\geq T/4$ and $B = T/2$.

\item[(b)] $\OPTFD/\E[\REW]\geq \Omega(\log T)$ for some problem instance with $K=2$ arms.\\
This holds for any given budget
    $B\in [c_0\,\log^3(T), O(T^{1-\alpha})]$,
 even if $\OPTFD\geq B^2/T$. \\
 Here $\alpha\in (0,1)$ is an arbitrary absolute constant, and $c_0$ is any large enough absolute constant.

\item[(c)] $\OPTDP/\E[\REW]\geq T/B^2$ for some problem instance with $K=2$ arms. \\
This holds for any given budget $B<\sqrt{T}$, with $\OPTFD=B$.

\item[(d)] $\OPTFA/\E[\REW]\geq \Omega(K)$ for some problem instance.\\
This holds for any given budget $B$, with $\OPTFA > B/K^K$.
\end{itemize}
\end{theorem}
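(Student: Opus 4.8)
The plan is to establish all four bounds with one recipe. Since each statement concerns a randomized algorithm against an oblivious adversary, Yao's minimax principle lets me instead fix a distribution $\mathcal{D}$ over deterministic instances and lower-bound the competitive ratio of the best \emph{deterministic} algorithm against $\mathcal{D}$; a worst instance in the support then witnesses the claim. In every construction the engine is \emph{indistinguishability}: the instances in $\mathcal{D}$ share a common observable prefix (a common ``feedback signature''), so the algorithm must commit budget before it can tell which instance it faces, whereas each benchmark is evaluated with full knowledge of the realized instance.

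\textbf{Part (a).} I use the two instances $\mathcal{I}_{0},\mathcal{I}_{1}$ from the Introduction, which agree on the first half $[1,T/2]$ and differ only on $[T/2+1,T]$ (arm~$2$ has reward $0$ versus $1$). Because the instances are identical on the first half, the algorithm's expected consumption of arm~$2$ there, call it $x$, is the same in both; its first-half reward is then exactly $x/2$. In the second half it can earn at most an additional $T/2-x$ in $\mathcal{I}_{1}$ (rate $1$, residual budget $\le T/2-x$) and exactly $0$ in $\mathcal{I}_{0}$, so $\E[\REW]\le T/2-x/2$ and $\E[\REW]\le x/2$ respectively. A direct optimization gives $\OPTFD=3T/8$ in $\mathcal{I}_{1}$ and $\OPTFD=T/4$ in $\mathcal{I}_{0}$, so the two ratios are at least $\tfrac{3T/8}{T/2-x/2}$ and $\tfrac{T/4}{x/2}$; their maximum is minimized at $x=2T/5$, where both equal $\tfrac54$. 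The $o(1)$ absorbs the concentration of the realized consumption around its mean and integer rounding.

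\textbf{Part (b).} This is the crux. I build a multi-scale family with two arms (a null arm and an \emph{active} arm of consumption $1$ per play) and $L=\Theta(\log(T/B))=\Theta(\log T)$ phases; the hypothesis $B\le O(T^{1-\alpha})$ gives $T/B\ge T^{\alpha}$, leaving room for $L$ geometric levels. In phase $\ell$ the active arm has reward rate $\rho_\ell=\rho_0 2^{\ell}\in[\tfrac{B}{T},1]$, and crucially the phase \emph{lengths} grow geometrically as well (top phase longest, hence time-dominant) so that the best fixed distribution is not ``diluted'': exhausting its budget at the top available rate yields $\OPTFD(\mathcal{I}_J)=\Theta(B\rho_J)\ge\Omega(B^2/T)$. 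The distribution $\mathcal{D}$ draws a random kill-level $J$ from $\{1,\dots,L\}$; in $\mathcal{I}_J$ the active-arm reward is set to $0$ in every phase $\ell>J$ (revealed only when the arm is played). Since the pre-kill feedback (strictly increasing rewards) is identical across all instances with kill-level $\ge J$, the algorithm's expected number of active plays $\bar n_\ell$ in phases $\ell\le J$ is instance-independent, and $\E[\REW(\mathcal{I}_J)]=\sum_{\ell\le J}\bar n_\ell\rho_\ell$ with $\sum_\ell\bar n_\ell\le B$. To keep every ratio below $\lambda$ one needs $\sum_{\ell\le J}\bar n_\ell\rho_\ell\ge\Omega(B\rho_J/\lambda)$ for all $J$; because the rates double, each successive level then demands $\Omega(B/\lambda)$ \emph{fresh} expected budget, so defending all $L$ levels costs $\Omega(BL/\lambda)$, which cannot exceed $B$, forcing $\lambda=\Omega(L)=\Omega(\log T)$. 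The main obstacle is making this per-level commitment loss rigorous and controlling the lower-order approximations in $\OPTFD(\mathcal{I}_J)$; keeping these negligible requires every one of the $\Theta(\log T)$ phases to carry enough budget, which is where $B\ge c_0\log^3 T$ enters.

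\textbf{Parts (c) and (d).} Both are hidden-opportunity arguments. For (c) I take $T/B$ consecutive blocks of length $B$ and a secret block $b^{*}$; on the active arm (consumption $1$ per play) the reward is $1$ inside $b^{*}$ and $0$ outside. The best dynamic policy, knowing $b^{*}$, spends its whole budget there for $\OPTDP=B$. A deterministic algorithm cannot detect $b^{*}$ except by playing the active arm \emph{during} that block, and since $B<\sqrt{T}$ its budget buys at most $B$ such probes among $T/B$ blocks (here $T/B>B$); thus it locates $b^{*}$ with probability $O(B^{2}/T)$ and even then exploits it for at most $B$ reward, giving $\E[\REW]=O(B^{3}/T)$ and ratio $\OPTDP/\E[\REW]=\Omega(T/B^{2})$. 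The reward magnitudes are calibrated so that the reported instance additionally satisfies $\OPTFD=B$; verifying this normalization is a routine detail. For (d), comparing against the best fixed \emph{arm}, I hide one good arm $a^{*}$ uniformly among the $K$ arms and make its reward available only briefly, on a geometrically decaying schedule across $\Theta(K)$ phases so that the bound survives for \emph{every} budget (this pins $\OPTFA>B/K^{K}$ and removes the dependence on $B$ that would otherwise let exploration be amortized when $B$ is large). By symmetry any deterministic algorithm plays $a^{*}$ on only a $1/K$ fraction of the rounds that matter, so $\E[\REW]\le\OPTFA/\Omega(K)$. In (c) and (d) alike the obstacle is the same as throughout: converting ``the algorithm cannot know which instance it faces'' into a quantitative capture bound via Yao together with an averaging/symmetry argument over the hidden parameter.
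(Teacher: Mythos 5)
Parts (a) and (c) of your proposal are essentially the paper's own proofs. For (a), the paper uses the same two half-time instances, the same instance-independent first-half investment (your $x$, its $\alpha_1$), the same LP values $T/4$ and $3T/8$, and the same optimum at $2T/5$ giving ratio $\tfrac54$; like you, it then pays an $o(1)$ term to pass from the LP benchmark to $\OPTFD$ (its Lemma~\ref{lem:optfull-optfd}). For (c), the paper's Construction~\ref{con:LB-DP} is exactly your hidden-block family, and its counting argument ($\sum_i\alpha_i\le k\alpha_{\pi(1)}-k(k-1)/2\le B^2$ over $n=T/B$ blocks) is a sharper version of your ``at most $B$ probes among $T/B$ blocks'' bound; both give $\E[\REW]\le B^3/T$.

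Part (b) is correct but takes a genuinely different route. The paper uses an \emph{arithmetic} ramp (reward $\sigma B/T$ in phase $\sigma$, equal phase lengths $B$) and solves the resulting min--max exactly: complementary slackness forces the equalization recurrence $\alpha_j=\alpha_1/(2j)$, whose harmonic sum yields the sharp constant $\tfrac12\ln(T/B)+\zeta$. You instead use doubling reward rates with doubling phase lengths and a fresh-budget accounting. Your accounting is sound: writing $S_J=\sum_{\ell\le J}\bar n_\ell\rho_\ell$, the requirements $S_J\ge B\rho_J/\lambda$ for all $J$ together with $\rho_{J+1}=2\rho_J$ give
$\sum_\ell\bar n_\ell=\sum_J (S_J-S_{J-1})/\rho_J = S_L/\rho_L+\sum_{J<L}S_J/(2\rho_J)\ge (L+1)B/(2\lambda)$,
so $\lambda=\Omega(L)=\Omega(\log T)$. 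What the paper's version buys is the explicit constant; what yours buys is a more robust argument that does not require solving the equalization exactly.

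Part (d) has a genuine gap. Your construction hides a \emph{single} good arm $a^*$ uniformly among $K$ arms and argues ``by symmetry'' that the algorithm plays $a^*$ on only a $1/K$ fraction of the rounds that matter. Under bandit feedback this fails for large budgets, and part (d) must hold for \emph{every} $B$ (with $\OPTFA> B/K^K$, hence $\OPTFA$ growing with $B$). With one hidden good arm, identifying it is a search problem costing at most $K$ units of budget: probe each arm once inside any reward window; the probe that hits $a^*$ lands inside the very window it is trying to detect, so the algorithm immediately exploits the remainder of that window and of all later ones, matching the best fixed arm up to an additive $O(K)$. No reward schedule fixes this: geometric decay or brief windows shrink $\OPTFA$ by exactly as much as they shrink the exploiter's reward, so the ratio tends to $1$ once $B\gg K$. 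Feedback-based hiding only works when the search space exceeds the budget, which is precisely why your part (c) needs $B<\sqrt{T}$ and cannot be transplanted here. The paper's Construction~\ref{con:LB-FA} is instead a \emph{nested commitment} family, structurally like part (b): in instance $\mathcal{I}_j$ \emph{all} arms $A_1,\ldots,A_j$ are active, arm $A_{j'}$ only during phase $j'$ with reward $K^{-(K-j')}$, and $\mathcal{I}_j$, $\mathcal{I}_{j+1}$ are indistinguishable until phase $j+1$. The distinguishing information thus lies in the future rather than behind bandit feedback, so probing is useless; the per-phase expected allocations $\alpha_{j'}$ are instance-independent, and equalizing the $K$ ratios $r_kB/\sum_{j\le k}r_j\alpha_j$ subject to $\sum_k\alpha_k\le B$ forces ratio at least $K-1/K$. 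This is the same fresh-budget phenomenon you used in (b), but with $K$ levels at geometric ratio $K$, which is also what produces the stated $\OPTFA>B/K^K$ normalization. To repair your part (d), replace the hidden-arm symmetry argument with this nested family.
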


\begin{remark}
The lower bounds for parts (a,b,c) hold (even) for problem instances with $K=2$ arms, one of which is the ``null arm" with no rewards and no resource consumption. The lower bounds in parts (a,b) hold even for a much more permissive feedback model from the online packing literature, namely, when the algorithm observes the outcome vector for all actions in a given round, and moreover does it \emph{before} it chooses an arm in this round.
\end{remark}

We tweak our construction from Theorem~\ref{thm:LB-main}(c) to obtain a strong lower bound for the contextual version of \AdvBwK (a.k.a. \emph{Adversarial cBwK}), as studied in Section~\ref{sec:ext-CB}. This lower bound implies that Adversarial cBwK is essentially hopeless in the regime $B<\sqrt{T}$, complementing a strong positive result (Corollary~\ref{cor:CB}) for the regime $B>\tilde{\Omega}(\sqrt{T})$. It is proved in Section~\ref{sec:LB-DP}, along with Theorem~\ref{thm:LB-main}(c).

\begin{theorem}\label{thm:LB-CB}
Consider adversarial contextual bandits with knapsacks (Adversarial cBwK), with policy class $\Pi$, a single resource ($d=1$), $K=2$ arms, and any given budget $B<\sqrt{T}$. Consider any randomized algorithm for this problem, and let $\REW$ denote its reward. Then
 \[ \OPTFD(\Pi)/\E[\REW]\geq T/B^2\quad  \text{for some problem instance}. \]
\end{theorem}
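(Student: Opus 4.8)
The plan is to \emph{lift} the construction behind Theorem~\ref{thm:LB-main}(c) to the contextual setting by using time itself as the context. The single new idea is that, once the context reveals the round number, the best fixed distribution over policies $\OPTFD(\Pi)$ becomes as powerful as the best dynamic policy $\OPTDP$ of the underlying (non-contextual) instance: with contexts equal to round numbers, a policy is just an arbitrary map $[T]\to\{1,2\}$, i.e. an arbitrary deterministic action sequence. Thus the $T/B^2$ lower bound against the ``too strong'' benchmark $\OPTDP$ in Theorem~\ref{thm:LB-main}(c) turns into a lower bound against our benchmark $\OPTFD(\Pi)$. Accordingly I fix the policy class $\Pi$ to be the set of all deterministic maps $[T]\to\{1,2\}$, and set the context in round $t$ to be $x_t=t$. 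Since I only need the bound for \emph{some} policy class, exhibiting this $\Pi$ suffices (and is consistent with the positive result of Corollary~\ref{cor:CB}, which needs $B>\tilde\Omega(\sqrt T)$, whereas here $B<\sqrt T$).

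Given any randomized algorithm $\ALG$ for Adversarial cBwK, I first convert it into a randomized algorithm $\ALG'$ for the non-contextual \AdvBwK problem with $K=2$ and $d=1$: because $\ALG'$ knows the round number $t$, it can hand $\ALG$ the context $x_t=t$, play the arm that $\ALG$ selects, observe the outcome, and report it back to $\ALG$. Under a coupling of the internal randomness the two runs produce the same arms, outcomes, and stopping time, so $\E[\REW(\ALG')]=\E[\REW(\ALG)]$ on any pair of instances sharing the same outcome sequence. In other words, the context $x_t=t$ carries no information beyond what any online algorithm already has, so no cBwK algorithm can beat the non-contextual guarantee.

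By Theorem~\ref{thm:LB-main}(c) applied to $\ALG'$, there is a deterministic \AdvBwK instance $\mathcal I^*$ (with $d=1$, $K=2$, one null arm, budget $B<\sqrt T$) for which $\OPTDP(\mathcal I^*)/\E[\REW(\ALG',\mathcal I^*)]\geq T/B^2$. Let $\mathcal J^*$ be the Adversarial cBwK instance with the same outcome matrices and contexts $x_t=t$. On $\mathcal J^*$ I claim $\OPTFD(\Pi)=\OPTDP(\mathcal I^*)$. The inequality $\OPTFD(\Pi)\le \OPTDP(\mathcal I^*)$ is immediate, since any fixed distribution over policies is a particular (randomized) dynamic policy and knowing the context $x_t=t$ does not help a dynamic policy that already sees the round number. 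For the reverse, note that on a deterministic instance the optimal dynamic policy is realized by a single deterministic action sequence $a^*_1,\dots,a^*_T$; the policy $\pi^*(t):=a^*_t$ lies in $\Pi$, and a point mass on $\pi^*$ plays exactly this sequence with the same stopping rule, hence collects $\OPTDP(\mathcal I^*)$. (More generally, every $\pi\in\Pi$ realizes one deterministic action sequence, the reward of a distribution over $\Pi$ is the average of the corresponding sequence rewards, and the average is maximized by the best single sequence.) This is precisely the phenomenon noted in Theorem~\ref{thm:LB-main}(c): the non-contextual $\OPTFD$ is only $B$, while allowing time-dependence inflates the relevant benchmark all the way up to $\OPTDP$.

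Combining the two facts on $\mathcal J^*$ yields
\[
\frac{\OPTFD(\Pi)}{\E[\REW(\ALG,\mathcal J^*)]}
= \frac{\OPTDP(\mathcal I^*)}{\E[\REW(\ALG',\mathcal I^*)]}
\ge \frac{T}{B^2},
\]
which is the claimed bound for the instance $\mathcal J^*$. The step requiring the most care is the benchmark identity $\OPTFD(\Pi)=\OPTDP(\mathcal I^*)$: one must verify both that $\Pi$ is rich enough to contain the realized optimal dynamic policy (ensured here because contexts are round numbers and $\Pi$ is the full set of maps $[T]\to\{1,2\}$) and that no extra randomization over policies can help, so that the best fixed distribution exactly matches the best action sequence in hindsight. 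Everything else is a routine transfer from Theorem~\ref{thm:LB-main}(c), whose shared construction carries out all the quantitative work; if preferred, one can state that construction once and read off both theorems simultaneously.
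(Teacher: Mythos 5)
Your proof is correct and follows essentially the same route as the paper: set the context to be the round number, convert any cBwK algorithm into a non-contextual \AdvBwK algorithm by feeding it $x_t=t$, invoke Lemma~\ref{lm:dynamicPolicy} (Theorem~\ref{thm:LB-main}(c)) on Construction~\ref{con:LB-DP}, and close the loop with the benchmark identity $\OPTFD(\Pi)=\OPTDP$. The one substantive difference is your choice of $\Pi$: you take \emph{all} maps $[T]\to\{1,2\}$ (so $|\Pi|=2^T$), whereas the paper uses only the $T/B$ phase-indicator policies $\pi_\tau$ (play $A_1$ exactly in phase $\tau$, else the null arm). This matters for how much the theorem says: with $|\Pi|=2^T$ the benchmark $\OPTFD(\Pi)$ collapses to $\OPTDP$ by fiat, and the positive result it is meant to complement (Corollary~\ref{cor:CB}, whose regret scales as $\sqrt{\log|\Pi|}$) is vacuous for such a class, so your instance shows hardness only where no upper bound was ever claimed. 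The paper's small class, with $\log|\Pi|\le\log T$, shows the hardness persists even where Corollary~\ref{cor:CB} would be meaningful for $B>\tilde\Omega(\sqrt T)$, which is what makes the $B\approx\sqrt T$ threshold genuine; if you keep your argument, you should at least observe that it goes through verbatim with the restricted class $\{\pi_\tau\}$, since the optimal dynamic policy on each $\mI_\tau$ is realized by $\pi_\tau$ itself. One minor imprecision: your parenthetical claim that the reward of a distribution over $\Pi$ is ``the average of the corresponding sequence rewards'' is not accurate under the paper's benchmark (the distribution is re-sampled i.i.d.\ each round and the stopping rule makes the reward nonlinear in the distribution), but it is not load-bearing — the two inequalities you actually use (point mass on $\pi^*$ gives $\ge\OPTDP$, and any fixed distribution is a particular randomized algorithm so gives $\le\OPTDP$) are sound.
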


\noindent

\xhdr{Notation.} In the proof of lower-bounds below, we use the following notation. Given an instance $\mI$, we denote $\OPTFD(\mI)$, $\OPTFA(\mI)$ and $\OPTDP(\mI)$ to denote the optimal value of the best fixed distribution, best fixed arm and best dynamic policy respectively, for instance $\mI$. Likewise let $\OPTLPfull(\mI)$ denote the optimal $\LP$ value for instance $\mI$ and given an algorithm $\mA$ and an instance $\mI$, let $\E[\REW(\mA, \mI)]$ denote the expected reward obtained by $\mA$ on instance $\mI$, where the expectation is over the internal randomness of the algorithm.

\subsection{Warm-up: example from the Introduction}
\label{sec:LB-warmup}

As a warm-up, let us recap and analyze the example from the Introduction.

\begin{construction}\label{con:LB-simple}
There are two arms and one resource with budget $B=\tfrac{T}{2}$. Arm $1$ has zero rewards and zero consumption. Arm $2$  has consumption $1$ in each round, and offers reward $\tfrac12$ in each round of the first half-time ($\tfrac{T}{2}$  rounds). In the second half-time, arm $2$ offers either reward $1$ in all rounds, or reward $0$ in all rounds. More formally, there are two problem instances, call them $\mI_1$ and $\mI_2$, that coincide for the first half-time and differ in the second half-time.
\end{construction}

\begin{lemma}\label{thm:simpleLowerBound}
Any algorithm suffers
$\OPTFD/\E[\REW]\geq \tfrac54 - o(1)$
on some instance in  Construction~\ref{con:LB-simple}.
\end{lemma}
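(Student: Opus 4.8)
First I would pin down $\OPTFD$ for each instance. For $\mI_2$, arm $2$ yields reward only in the first half, so the best fixed distribution plays arm $2$ deterministically throughout the first half, consuming exactly $B=\tfrac{T}{2}$ by round $\tfrac{T}{2}$ and collecting $\tfrac12\cdot\tfrac{T}{2}=\tfrac{T}{4}$; hence $\OPTFD(\mI_2)=\tfrac{T}{4}$. For $\mI_1$, I would analyze the fixed distribution that plays arm $2$ with probability $p$: its per-round expected consumption is $p$, so with $p=\tfrac12$ the expected consumption over all $T$ rounds is exactly $B$, and the expected reward is $p\big(\tfrac{T}{2}\cdot\tfrac12 + \tfrac{T}{2}\cdot 1\big)=\tfrac{3pT}{4}=\tfrac{3T}{8}$; checking $p>\tfrac12$ (which stops early) gives reward $\tfrac{T}{2}-\tfrac{pT}{4}$, decreasing in $p$, so $p=\tfrac12$ is optimal. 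Thus $\OPTFD(\mI_1)\geq \tfrac{3T}{8}-o(T)$, where the $o(T)$ slack absorbs the small probability of early stopping due to fluctuations of the consumption around its mean $B$ (one may instead take $p$ slightly below $\tfrac12$).

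\textbf{Indistinguishability and the reward bounds.} The key step is that $\mI_1$ and $\mI_2$ present identical rewards, consumptions, and feedback during the first half-time, so the distribution of the algorithm's play on rounds $1,\dots,\tfrac{T}{2}$ is the same under both instances. Let $c:=\E[\,\text{number of times arm }2\text{ is played in the first half}\,]$; this quantity is well-defined and common to both instances, and equals the expected first-half consumption. Since each first-half play of arm $2$ earns reward $\tfrac12$, the expected first-half reward is $\tfrac{c}{2}$ under both instances. Under $\mI_2$ there is no further reward, so $\E[\REW(\mA,\mI_2)]=\tfrac{c}{2}$. Under $\mI_1$, every unit of reward in the second half costs one unit of consumption, so the second-half reward is at most the remaining budget $B-C_1=\tfrac{T}{2}-C_1$, where $C_1$ is the (random) first-half consumption; taking expectations gives $\E[\REW(\mA,\mI_1)]\leq \tfrac{c}{2}+\big(\tfrac{T}{2}-c\big)=\tfrac{T}{2}-\tfrac{c}{2}$.

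\textbf{Min-max over $c$.} It remains to show that no value of $c\in[0,\tfrac{T}{2}]$ makes both competitive ratios small. Writing the two ratios,
\[
\frac{\OPTFD(\mI_2)}{\E[\REW(\mA,\mI_2)]}\;\geq\;\frac{T/4}{c/2}=\frac{T}{2c},
\qquad
\frac{\OPTFD(\mI_1)}{\E[\REW(\mA,\mI_1)]}\;\geq\;\frac{3T/8-o(T)}{(T-c)/2}=\frac{(3T/4)-o(T)}{T-c}.
\]
The first ratio is below $\tfrac54$ only if $c>\tfrac{2T}{5}$, while the second is below $\tfrac54$ only if $c<\tfrac{2T}{5}$ (up to the $o(1)$ slack), and these conditions are mutually exclusive. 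At the crossover $c=\tfrac{2T}{5}$ both ratios equal exactly $\tfrac54$, so for every algorithm $\max$ of the two ratios is at least $\tfrac54-o(1)$, and the adversary presents whichever instance realizes this maximum. This completes the argument, and I would note along the way that the construction uses only $K=2$ arms (one being the null arm) and a single resource, and that the feedback-before-choosing model does not help since the two instances are identical through round $\tfrac{T}{2}$.

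\textbf{Main obstacle.} I expect the only delicate point to be the indistinguishability bookkeeping — formally arguing that $c$ is identical across the two instances because the algorithm's first-half transcript has the same distribution — together with the careful (but routine) handling of early-stopping fluctuations that produce the $o(1)$ term; the ``remaining-budget bounds second-half reward'' inequality and the crossover computation are straightforward once $c$ is fixed.
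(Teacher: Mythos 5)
Your proof is correct and takes essentially the same route as the paper's: both arguments rest on the indistinguishability of the two instances over the first half-time (so the expected number of first-half plays of arm $2$ is a single quantity $c$ common to both instances), the budget constraint tying first-half to second-half plays, the same benchmark values $T/4$ and $3T/8$, and the same crossover point $c=2T/5$ yielding the ratio $\tfrac54$. The only cosmetic difference is that you evaluate $\OPTFD$ directly (absorbing early-stopping fluctuations by taking $p$ slightly below $\tfrac12$), whereas the paper evaluates the LP benchmark $\OPTLPfull$ and then invokes Lemma~\ref{lem:optfull-optfd} to pass to $\OPTFD$, which is where its $o(1)$ term comes from.
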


The intuition is that given a random instance as input the algorithm needs to choose how much budget to invest in the first half-time, without knowing what comes in the second, and any choice (in expectation) leads to the claimed competitive ratio.

To prove Lemma~\ref{thm:simpleLowerBound} (as well we the main lower bound in Theorem~\ref{thm:LB-main}(b)) we compare algorithm's performance to $\OPTLPfull$, and invoke the following lemma:

\begin{lemma}\label{lem:optfull-optfd}
$\OPTFD \geq \OPTLPfull - O\Paren{\OPTLPfull \cdot \sqrt{\frac{\log dT}{B}}}.$
\end{lemma}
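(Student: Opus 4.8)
The plan is to produce a single fixed distribution over arms whose expected reward is at least $\OPTLPfull\,(1 - O(\sqrt{\log(dT)/B}))$; since $\OPTFD$ is by definition the supremum of expected reward over all fixed distributions, this immediately yields the claim. Let $\tau^*$ be the maximizer in \eqref{eq:adv-stoppedLP}, so that $\OPTLPfull = \tau^*\cdot\OPTLP(\bvM_{\tau^*},B,\tau^*)$, and let $\vec{X}^*$ be an optimal solution of $\myLP{\bvM_{\tau^*}}{B}{\tau^*}$. The two facts I would use about $\vec{X}^*$ are that, when played for $\tau^*$ rounds, its expected total reward equals $\sum_{t\in[\tau^*]}\vec{X}^*\cdot\vec{r}_t = \OPTLPfull$, while its expected total consumption of every resource $i$ is $\sum_{t\in[\tau^*]}\vec{X}^*\cdot\vec{c}_{t,i}\le B$ (this is exactly the LP constraint, rescaled by $\tau^*$). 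Playing $\vec{X}^*$ verbatim is risky, however: random fluctuations in realized consumption could push the algorithm over budget before round $\tau^*$ and cost us the tail of the reward.

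To create slack I would first rescale the budget. Fix a failure probability $\delta$ (to be chosen), put $\epsilon = c_1\sqrt{\log(dT/\delta)/B}$ for a suitable absolute constant $c_1$ (we may assume $B$ is large enough that $\epsilon<1$, else the right-hand side of the lemma is nonpositive and there is nothing to prove), and let $\vec{Y}$ be an optimal solution of the rescaled program $\myLP{\bvM_{\tau^*}}{(1-\epsilon)B}{\tau^*}$. By the rescaling inequality \eqref{eq:scaledBudgetOPTVals1}, its value is at least $(1-\epsilon)\OPTLP(\bvM_{\tau^*},B,\tau^*)$, so playing $\vec{Y}$ for $\tau^*$ rounds yields expected total reward at least $(1-\epsilon)\OPTLPfull$, while the expected total consumption of each resource is now at most $(1-\epsilon)B$.

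Now consider the algorithm that plays the fixed distribution $\vec{Y}$ in every round, and fix a resource $i$. Because the adversary is oblivious, the per-round consumptions $c_{t,i}(a_t)$ are independent across $t$, lie in $[0,1]$, and have total mean at most $(1-\epsilon)B$; crucially, their total variance is also at most $(1-\epsilon)B$, since $c_{t,i}(a_t)^2\le c_{t,i}(a_t)$. A multiplicative Chernoff bound (Lemma~\ref{lem:Chernoff}) therefore gives $C_i(\tau^*):=\sum_{t\in[\tau^*]}c_{t,i}(a_t)\le (1-\epsilon)B + O(\sqrt{B\log(dT/\delta)})$ with probability at least $1-\delta/d$. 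Choosing $c_1$ large enough that $\epsilon B$ dominates this deviation makes $C_i(\tau^*)\le B$; a union bound over the $d$ resources then shows that, with probability at least $1-\delta$, $C_i(\tau^*)\le B$ for all $i$ — call this event $\mathcal{E}$. Since consumption is nonnegative, $C_i(\cdot)$ is nondecreasing, so on $\mathcal{E}$ we in fact have $C_i(\tau)\le B$ for every $\tau\le\tau^*$ and every $i$, meaning the algorithm never exhausts any budget before round $\tau^*$ and hence collects the full reward $\sum_{t\in[\tau^*]}r_t(a_t)$. (Note that monotonicity is what spares us a union bound over all $\tau$.)

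To finish, I would write $\E[\REW(\vec{Y})]\ge \E[\mathbf{1}_{\mathcal{E}}\sum_{t\in[\tau^*]}r_t(a_t)] \ge (1-\epsilon)\OPTLPfull - T\,\Pr[\mathcal{E}^c]\ge (1-\epsilon)\OPTLPfull - T\delta$, using that $\sum_{t\in[\tau^*]}r_t(a_t)\le T$. Since $\OPTFD\ge\E[\REW(\vec{Y})]$, choosing $\delta=\mathrm{poly}(1/T)$ (so that $\log(dT/\delta)=O(\log(dT))$ and the additive $T\delta$ is negligible) yields $\OPTFD\ge\OPTLPfull-\epsilon\OPTLPfull-T\delta = \OPTLPfull-O(\OPTLPfull\sqrt{\log(dT)/B})$, as claimed. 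The one genuinely important step — and the one I would be most careful about — is the variance bound: applying Azuma--Hoeffding to $C_i(\tau^*)$ naively gives a deviation of order $\sqrt{\tau^*\log}$ and hence a useless relative error $\sqrt{\tau^*\log}/B$ when $\tau^*\gg B$. It is precisely the observation that the total \emph{variance} of the consumption is bounded by its mean, and thus by $B$, that produces the deviation $\sqrt{B\log}$ and the correct relative error $\sqrt{\log/B}$. The only remaining bookkeeping is absorbing the additive $T\delta$ term, which is routine except in the degenerate regime where $\OPTLPfull$ is tiny, where the bound is essentially vacuous anyway.
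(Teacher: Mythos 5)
Your proposal is correct and follows essentially the same route as the paper's proof: take the maximizer $\tau^*$, play the optimal solution of the budget-rescaled program $\myLP{\bvM_{\tau^*}}{(1-\epsilon)B}{\tau^*}$ with $\epsilon = \Theta\bigl(\sqrt{\log(dT)/B}\bigr)$, use the Chernoff--Hoeffding bound (whose deviation scales with the sum of the means, i.e.\ with $\sqrt{B\log(dT)}$, exactly the variance observation you highlight) plus a union bound over resources to show no budget is exhausted before $\tau^*$, and then invoke the rescaling inequality \eqref{eq:scaledBudgetOPTVals1} to relate the rescaled value to $\OPTLPfull$. Your handling of the failure event via the indicator and the $-T\delta$ correction is in fact slightly more careful than the paper's conditioning argument, but the substance is identical.
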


\begin{proof}
Let $\tau^*$ denote the time-step at which $\OPTLPfull$ is maximized. Let $\vec{p}$ denote the optimal solution to $\tau^* \cdot \OPTLP(\bvM_{\tau^*}, B(1-\epsilon), \tau^*)$ where $\epsilon=\sqrt{\frac{\log dT}{B}}$. Note that $\OPTFD$ is at least as large as the expected total reward obtained by the distribution $\vec{p}$. From the Chernoff-Hoeffding bounds (Lemma~\ref{lem:Chernoff}), with probability at least $1-dT^{-2}$ we have	
		\[
				\textstyle \forall i \in [d] \qquad \sum_{t \in [\tau^*]} \vec{p} \cdot \vec{c}_{t, i} \leq B.
		\]
Conditioning on this event the expected total reward obtained by $\vec{p}$ is
\[ \textstyle \sum_{t \in [\tau^*]} \;\vec{p} \cdot \vec{r}_t = \tau^* \cdot \OPTLP(\bvM_{\tau^*}, B(1-\epsilon), \tau^*).\] Thus the expected total reward obtained by $\vec{p}$ is at least $\tau^* \cdot \OPTLP(\bvM_{\tau^*}, B(1-\epsilon), \tau^*)$.~ \footnote{With probability $T^{-2}$ we assume that $\vec{p}$ has an expected reward of $0$.} Moreover from \refeq{eq:scaledBudgetOPTVals1} we have that
\begin{align*}
\OPTFD
    &\geq  \tau^* \cdot \OPTLP(\bvM_{\tau^*}, B(1-\epsilon), \tau^*) \\
    &\geq (1-\epsilon) \tau^* \cdot \OPTLP(\bvM_{\tau^*}, B, \tau^*) \\
    &\geq \textstyle \OPTLPfull - O\Paren{\OPTLPfull \sqrt{\frac{\log dT}{B}}}. \qedhere
\end{align*}
\end{proof}

\begin{proof}[Proof of Lemma~\ref{thm:simpleLowerBound}]
Denote the two arms by $A_1$ and $A_0$ where $A_0$ denotes the null arm. The consumption for arm $A_1$ is $1$ for all rounds in both $\mI_1$ and $\mI_2$. Thus the only difference between the two instances is the rewards obtained for playing arm $A_1$ in each round. The instances have two phases where each phase lasts for $\tfrac{T}{2}$ rounds. In phase $1$, in both $\mI_1$ and $\mI_2$ playing arm $A_1$ fetches a reward $\tfrac{1}{2}$. In the second phase, in $\mI_1$, the reward for playing arm $A_1$ is $0$ while in $\mI_2$ the reward for playing arm $A_1$ is $1$. Thus the \emph{outcome} matrix for the first $\tfrac{T}{2}$ time-steps is the same in instances $\mI_1$ and $\mI_2$.
			
			Consider a randomized algorithm $\mA$. Let $\alpha_1$ be the expected number of times arm $A_1$ is played by $\mA$ in the first $\tfrac{T}{2}$ rounds on instances $\mI_1$ and $\mI_2$. Note since the outcome matrix is same, the expected number of times the arm is played should be same in both the instances. Let $\alpha_{2, 1}$, $\alpha_{2, 2}$ denote the expected number of times arm $A_1$ is played in the second phase in instances $\mI_1$ and $\mI_2$ respectively.

			Recall that in this section we are interested in a lower-bound on the competitive ratio $\OPTFD/\E[\REW]$ for every instance.
			Consider $\OPTLPfull(\mI_1)$, the optimal value of the best fixed distribution on $\mI_1$.  Using \refeq{eq:adv-stoppedLP} with $\tau=\tfrac{T}{2}$ this equals $\frac{T}{2} \cdot \OPTLP\Paren{\bvM_{\tfrac{T}{2}}, B, \tfrac{T}{2}}$ which evaluates to $\tfrac{T}{4}$.
			Likewise $\OPTLPfull(\mI_2)$ equals $T \cdot \OPTLP\Paren{\bvM_{T}, B, T}$, which evaluates to $\tfrac{3T}{8}$.
			Consider the performance of $\mA$ on $\mI_1$. We have,
						\begin{equation}
				\label{eq:simple-lower-cr1}
				\textstyle \tfrac{\OPTLPfull(\mI_1)}{\E[\REW(\mA, \mI_1)]} \geq \Paren{\tfrac{T}{4}}/\Paren{ \tfrac{\alpha_1}{2} }.
			\end{equation}
			
			Likewise on $\mI_2$ we have,
			\begin{equation}
				\label{eq:simple-lower-cr2}
				\textstyle \tfrac{\OPTLPfull(\mI_2)}{\E[\REW(\mA, \mI_2)]}  \geq \Paren{ \tfrac{3T}{8} }/\Paren{ \tfrac{\alpha_1}{2} + \alpha_{2, 2} }.	
			\end{equation}

			Thus the competitive ratio of $\mA$ is at least the maximum of the ratios in \refeq{eq:simple-lower-cr1} and \refeq{eq:simple-lower-cr2}. Thus we want to minimize this maximum and is achieved when the two ratios are equal to each other.
			
			Notice that the term $\alpha_{2, 1}$ does not appear in \refeq{eq:simple-lower-cr1} and \refeq{eq:simple-lower-cr2}. By setting the term in \refeq{eq:simple-lower-cr1} equal to the term in \refeq{eq:simple-lower-cr2} and re-arranging,
						\begin{equation}
				\label{eq:otherEq-simple-lower}
				\textstyle	\alpha_1 = 4 \alpha_{2, 2}.
			\end{equation}

			Moreover we have $\alpha_1 + \alpha_{2, 2} \leq B$. Combining this with \refeq{eq:otherEq-simple-lower} we get $\alpha_1 \leq \tfrac{4B}{5} = \tfrac{2T}{5}$ and the corresponding competitive ratio is at least $\Paren{ \tfrac{T}{4} }/\Paren{ \tfrac{\alpha_1}{2} } \geq \frac{5}{4}$.
			By Lemma~\ref{lem:optfull-optfd} with $d=1$, for every $j \in [2]$,
			\[
					\textstyle \OPTFD(\mI_j)/\E[\REW(\mA, \mI_j)] \geq \frac{5}{4} - O\Paren{\frac{\OPTLPfull}{T} \sqrt{\frac{\log T}{B}}}. \qedhere
			\]
\end{proof}

\subsection{The main lower bound: proof of Theorem~\ref{thm:LB-main}(b)}
\label{sec:LB-main}

To obtain the $\Omega(\log T)$ lower bound in Theorem~\ref{thm:LB-main}(b), we extend Construction~\ref{con:LB-simple} to one with $\Omega(\log T)$ phases rather than just two. As before, the algorithm needs to decide how much budget to save for the subsequent phases; without knowing whether they would bring higher rewards or nothing. The construction is as follows, see Figure~\ref{fig:lowerBound} for a pictorial representation:

\begin{construction}\label{con:LB-main}
There is one resource with budget $B$, and two arms, denoted $A_0,A_1$. Arm $A_0$ is the ``null arm'' that has zero reward and zero consumption. The consumption of arm $A_1$ is $1$ in all rounds. The rewards of $A_1$ are defined as follows. We partition the time into $\tfrac{T}{B}$ phases of duration $B$ each (for simplicity, assume that $B$ divides $T$). We consider $\tfrac{T}{B}$ problem instances; for each instance
    $\mI_\tau$, $\tau \in \Brac{\tfrac{T}{B}}$
arm $A_1$ has positive rewards up to and including phase $\tau$; after that all rewards are $0$. In each phase $\sigma\in [\tau]$, arm $A_1$ has reward $\sigma B/T$ in each round.
\end{construction}

	\begin{figure}[h]
    \centering
    \includegraphics[scale=1]{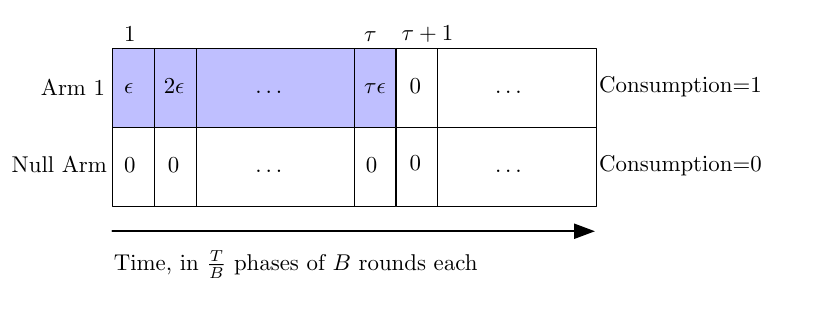}
    \caption{The lower-bounding construction for the $\Omega(\log T)$ lower bound. Here $\epsilon = \frac{B}{T}$.}
    \label{fig:lowerBound}
	\end{figure}

The lower bound holds for a wide range of budgets $B$, as expressed by the following lemma:

\begin{lemma} \label{lm:LB-main}
For any budget $B$ and any algorithm there is a problem instance in Construction~\ref{con:LB-main} such that
\begin{align}\label{eq:lm:LB-main}
    \frac{\OPTFD}{\E[\REW]}
        \geq \frac{1}{2} \cdot \ln ( \lfloor T/B \rfloor )
        + \zeta
        -  O\Paren{\frac{\log^{1.5} T}{\sqrt{B}}},
    \end{align}
where $\zeta = 0.577...$ is the Euler-Masceroni constant, and $\OPTFD\geq B^2/T$.
\end{lemma}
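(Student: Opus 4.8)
The plan is to first lower-bound the competitive ratio against $\OPTLPfull$ on the family $\mI_1,\dots,\mI_N$ (here $N=\lfloor T/B\rfloor$ is the number of phases and $\epsilon=B/T$), and only at the very end convert to $\OPTFD$ via Lemma~\ref{lem:optfull-optfd}. For the benchmark, note that in $\mI_\tau$ arm $A_1$ always consumes $1$ and earns $\sigma\epsilon$ per round during each phase $\sigma\le\tau$ (and $0$ afterwards); the two-arm LP puts weight $\min(1,B/\tau')$ on $A_1$, so $\tau'\cdot\OPTLP(\bvM_{\tau'},B,\tau')=\min(\tau',B)\,\bar r_{\tau'}$, where $\bar r_{\tau'}$ is the average reward of $A_1$ up to round $\tau'$. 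Since per-round rewards increase across phases, this is maximized at the end of phase $\tau$, giving $\OPTLPfull(\mI_\tau)=\tfrac{B^2(\tau+1)}{2T}\ge B^2/T$; through Lemma~\ref{lem:optfull-optfd} this also delivers the side claim $\OPTFD\ge B^2/T$ up to the lower-order factor.

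The crucial structural point is that $\mI_\tau,\mI_{\tau+1},\dots,\mI_N$ are identical through the end of phase $\tau$, so the algorithm behaves identically on them up to that point. Hence the expected number $n_\sigma$ of times $A_1$ is played during phase $\sigma$ is common to all instances $\mI_{\tau'}$ with $\tau'\ge\sigma$. Only $A_1$ consumes (one unit per play) against budget $B$, so on $\mI_N$ the total number of plays is at most $B$, giving $\sum_{\sigma=1}^{N} n_\sigma\le B$. Since reward in $\mI_\tau$ accrues only during phases $\sigma\le\tau$, we get $\E[\REW(\mA,\mI_\tau)]=\epsilon\sum_{\sigma\le\tau}\sigma\,n_\sigma$, so the ratio against the benchmark equals $\tfrac{B(\tau+1)}{2S_\tau}$ with $S_\tau:=\sum_{\sigma\le\tau}\sigma\,n_\sigma$.

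This reduces everything to a deterministic optimization inequality. Let $\rho=\max_\tau \tfrac{B(\tau+1)}{2S_\tau}$ be the worst ratio for a fixed algorithm; then $S_\tau\ge \tfrac{B(\tau+1)}{2\rho}$ for every $\tau$. Writing $n_\sigma=(S_\sigma-S_{\sigma-1})/\sigma$ and summing by parts,
\[ \sum_{\sigma=1}^{N} n_\sigma=\frac{S_N}{N}+\sum_{\sigma=1}^{N-1}\frac{S_\sigma}{\sigma(\sigma+1)}\ \ge\ \frac{B}{2\rho}\left(\frac{N+1}{N}+H_{N-1}\right)=\frac{B}{2\rho}\,(1+H_N), \]
where $H_N:=\sum_{\sigma=1}^{N} 1/\sigma$ is the harmonic number. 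Combined with $\sum_\sigma n_\sigma\le B$, this forces $\rho\ge \tfrac12(1+H_N)$.

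To finish, I would use that $H_N-\ln N$ decreases monotonically to $\zeta$, so $H_N\ge\ln N+\zeta$ for every $N$, and since $\zeta<1$ this gives $\tfrac12(1+H_N)\ge\tfrac12\ln N+\tfrac12(1+\zeta)\ge\tfrac12\ln N+\zeta$. Thus some instance has $\OPTLPfull/\E[\REW]\ge\tfrac12\ln\lfloor T/B\rfloor+\zeta$. Converting to $\OPTFD$ via Lemma~\ref{lem:optfull-optfd} (with $d=1$) scales the ratio by $1-O(\sqrt{\log T/B})$; since the ratio is $O(\log T)$, the incurred loss is $O(\log^{1.5}T/\sqrt B)$, yielding \eqref{eq:lm:LB-main}. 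I expect the core optimization step to be the main obstacle: extracting the harmonic number — and with it the sharp $\tfrac12\ln(T/B)$ leading constant and the $\zeta$ additive term — hinges on the summation-by-parts identity and on the observation that a cost-minimizing algorithm saturates each $S_\tau$ at its lower envelope; by comparison the LP value computation, the indistinguishability argument, and the $\OPTFD$ conversion are comparatively routine.
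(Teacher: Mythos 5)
Your proof is correct, and it follows the same overall skeleton as the paper's argument: compute the benchmark $\OPTLPfull(\mI_\tau) = \tfrac{B^2(\tau+1)}{2T}$, use indistinguishability of $\mI_\tau,\mI_{\tau+1},\ldots$ up to phase $\tau$ to define phase-wise expected play counts common to all later instances, reduce to a deterministic optimization over those counts under the budget constraint, and convert from $\OPTLPfull$ to $\OPTFD$ at the end via Lemma~\ref{lem:optfull-optfd}. Where you genuinely depart from the paper is the crux optimization step. The paper argues that the min-max over allocations $(\alpha_j)$ is attained when all ratios are equalized (justified by an appeal to complementary slackness), then derives the recurrence $\alpha_j = \alpha_1/(2j)$ by induction and sums it to get $\alpha_1 \le 2B/(1+H_N)$. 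You avoid characterizing the optimal allocation altogether: defining $\rho$ as the worst ratio, you turn the bound $S_\tau \ge \tfrac{B(\tau+1)}{2\rho}$ for \emph{all} $\tau$ into $\sum_\sigma n_\sigma \ge \tfrac{B}{2\rho}(1+H_N)$ via Abel summation, which together with $\sum_\sigma n_\sigma \le B$ forces $\rho \ge \tfrac12(1+H_N)$ directly. This buys two things: it is fully rigorous where the paper's equalization step is only sketched, and it cleanly retains the $+1$ term, so that $\tfrac12(1+H_N)\ge \tfrac12\ln N + \tfrac{1+\zeta}{2} \ge \tfrac12 \ln N + \zeta$ (using $\zeta<1$) gives the additive constant $\zeta$ exactly as stated; the paper's displayed final step ($\tfrac12 H_N \ge \tfrac12\ln N + \tfrac12\zeta$) is in fact slightly loose on this point. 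Your version of the benchmark computation (running average of a nondecreasing reward sequence is maximized at the last positive-reward round) is also a tidier route to Lemma~\ref{lem:optimalSolutionLB} than the paper's case analysis over stopping times.

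One minor blemish: the side claim $\OPTFD \ge B^2/T$ should not be routed through Lemma~\ref{lem:optfull-optfd}, since that only yields the bound up to a $1-o(1)$ factor, whereas the lemma asserts it exactly. It follows directly: the fixed distribution putting all mass on $A_1$ plays it for the first $B$ rounds of phase $1$ in every instance $\mI_\tau$, collecting reward exactly $B\cdot\tfrac{B}{T} = B^2/T$. With that one-line substitution your argument establishes the lemma in full.
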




In the rest of this subsection we prove Lemma~\ref{lm:LB-main}.
Fix any randomized algorithm $\mA$. As before in this sub-section we are interested in the ratio $\OPTFD/\mathbb{E}[\REW(\mA)]$. We argue that it has the claimed competitive ratio on at least one instance $\mI_\tau$ in the construction~\ref{con:LB-main}. The proof proceeds in two parts. We first argue about the solution structure of the optimal distribution for the construction~\ref{con:LB-main} (we prove this formally in Lemma~\ref{lem:optimalSolutionLB}). Next we characterize the expected number of times arm $A_1$ is played if $\mA$  optimal algorithm in each of the phases. Combining the two we get Lemma~\ref{lm:LB-main}.

\xhdr{Structure of the optimal solution.} Define \(\OPTLP(\bvM_{\tau^*}, B, \tau^*)\) to be the optimal value of $\LP$~\ref{lp:primalAbstract} on the instance \(\mI_\tau\). Then we have the following Lemma.

	\begin{lemma}
		\label{lem:optimalSolutionLB}
		For a given instance $\mI_{\tau}$ we have \(\OPTLP(\bvM_{\tau^*}, B, \tau^*) = \frac{\epsilon B (\tau + 1)}{2} \) .
	\end{lemma}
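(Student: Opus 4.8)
The plan is to collapse LP~\ref{lp:primalAbstract} to a one-dimensional maximization over the stopping time and then locate the maximizer explicitly. First I would record the structure of the LP for this two-arm, single-resource instance. Writing a distribution as $\vec{X}=(X_0,X_1)$, the null arm $A_0$ contributes nothing, while $A_1$ consumes exactly $1$ in every round, so $c^{\bvM_s}(A_1)=1$ and the only binding constraint at horizon $s$ is $X_1\le B/s$. Since the reward of $A_1$ is nonnegative, the optimum places as much mass as possible on $A_1$, i.e.\ $X_1=\min(1,B/s)$, which gives
\begin{equation}
s\cdot \OPTLP(\bvM_s,B,s)=\min(s,B)\cdot \bar r_s,\qquad \bar r_s:=\tfrac1s\,R(s),
\end{equation}
where $R(s):=\sum_{t\le s} r_t(A_1)$ is the cumulative reward of $A_1$ up to round $s$. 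Thus $\OPTLPfull(\mI_\tau)=\max_{s\in[T]}\min(s,B)\cdot\bar r_s$, and computing this is the whole task.

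Next I would evaluate $R(s)$ for $\mI_\tau$. Each phase lasts $B$ rounds; phase $\sigma\le\tau$ yields per-round reward $\sigma\epsilon$ (with $\epsilon=B/T$) and later phases yield $0$. Hence at a phase boundary $s=kB$ we have $R(kB)=B\epsilon\sum_{\sigma=1}^{\min(k,\tau)}\sigma$; in particular $R(\tau B)=B\epsilon\,\tfrac{\tau(\tau+1)}{2}$, and $R$ is constant for $s\ge\tau B$.

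The heart of the argument is to show the maximizer of $\min(s,B)\cdot\bar r_s$ is $\tau^*=\tau B$, which I would do by splitting on $s\le B$ versus $s>B$. For $s\le B$ the expression equals $R(s)$, which is nondecreasing, so its best value is $R(B)=B\epsilon$. For $s>B$ the expression equals $B\cdot\bar r_s$, so it suffices to maximize the running average $\bar r_s$. Here I would argue $\bar r_s$ is nondecreasing on $[1,\tau B]$ and decreasing afterwards: at the start of any phase $\sigma\le\tau$ the running average is $\bar r_{(\sigma-1)B}=\tfrac{\epsilon\sigma}{2}$, which is at most the per-round reward $\sigma\epsilon$ collected throughout phase $\sigma$, so every round of that phase pulls the average (weakly) up; once $s>\tau B$ the rewards are $0$, so $\bar r_s=R(\tau B)/s$ decreases. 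Therefore $\bar r_s$ is maximized at $s=\tau B$ with $\bar r_{\tau B}=\tfrac{\epsilon(\tau+1)}{2}$, giving $\max_{s>B}\, B\cdot\bar r_s=\tfrac{\epsilon B(\tau+1)}{2}$. Comparing the two regimes, $\tfrac{\epsilon B(\tau+1)}{2}\ge B\epsilon$ for every $\tau\ge1$, so the global maximum is attained at $\tau^*=\tau B$ (note $\tau B\le T$ since $\tau\in[T/B]$) and equals $\tfrac{\epsilon B(\tau+1)}{2}$, which is exactly the claimed value.

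The only delicate point is the monotonicity of the running average $\bar r_s$ on $[1,\tau B]$; everything else is a direct reading of the LP optimum and an arithmetic-series evaluation. I expect that running-average step to be the main thing to state carefully, since it is what rules out an interior point of a phase or a later phase boundary and pins the optimal stopping time to $\tau^*=\tau B$.
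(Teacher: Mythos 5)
Your proposal is correct: it establishes, just as the paper does, that the maximizing stopping time is $\tau^* = B\tau$ and that the total LP value there equals $\tfrac{\epsilon B(\tau+1)}{2}$, but it gets there by a genuinely different and arguably cleaner argument. The paper rules out all other stopping times with two separate proof-by-contradiction computations (interior of a phase versus end of that phase, and earlier phase end versus later phase end), both of which manipulate explicit phase sums and implicitly take the LP optimum at horizon $s$ to put mass $B/s$ on arm $A_1$ --- an expression that is only valid for $s \ge B$, so the $s < B$ regime, where the probability cap $X(1) \le 1$ binds rather than the budget, is glossed over. Your reduction of the LP to the one-dimensional objective $\min(s,B)\,\bar r_s$ makes the two regimes explicit, and your single monotonicity observation --- a running average can only rise while the appended values are at least the current average (the per-round reward $\sigma\epsilon$ in phase $\sigma$, versus a running average that never exceeds $\sigma\epsilon$), and can only fall once the rewards drop to zero --- replaces both of the paper's contradiction arguments at once. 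What you gain is brevity, correct handling of the $s \le B$ boundary case, and robustness: your argument works verbatim for any reward profile that is nondecreasing and then zero. What the paper's route provides is the explicit phase-sum algebra, in the same style it continues to use in the rest of the lower-bound section. As you flag yourself, the one step to write out carefully is the within-phase monotonicity of $\bar r_s$, which requires noting that all rewards observed through phase $\sigma$ are at most $\sigma\epsilon$, so the running average stays below the value being appended.
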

	
	\begin{proof}
	Let $\mP(t)$ denote the non-zero reward on arm $A_1$ at time-step $t$ (\ie $\mP(t) = \lceil \tfrac{t}{B} \rceil \epsilon$). It suffices to prove that the optimal stopping time $\tau^* = B \tau$. Indeed, given that the stopping time is $B\tau$, the optimal solution is to set $X(1) = \frac{1}{\tau}$ and $X(0) = 1-\frac{1}{\tau}$ thus obtaining a total reward of $\frac{1}{\tau} \sum_{t \in [B \tau]} \mP(t) \epsilon$. From the definition of $\mP(t)$ we have that $\frac{1}{\tau} \sum_{t \in [B \tau]} \mP(t) \epsilon = \frac{1}{\tau} \sum_{j \in [\tau]} \epsilon B j$. Using the fact that $\sum_{j \in [\tau]} j = \frac{\tau (\tau+1)}{2}$ we get the statement of the Lemma. Thus it remains to prove that the optimal stopping time $\tau^* = B \tau$.
	
	First it is easy to prove that $\tau^* \leq B\tau$. Since there are no rewards after time-step $\tau^*$, we have
	\[
			\textstyle \forall t' > 0 \qquad  \OPTLP(\bvM_{\tau*+t'}, B, \tau^*+t') = \frac{1}{\tau+t'} \sum_{t \in [\tau^*]} \mP(t) \epsilon < \frac{1}{\tau} \sum_{t \in [B \tau]} \mP(t) \epsilon.
	\]
	
	Now we will argue that the optimal stopping time cannot be strictly lesser than $\tau^*$. To do so, first we argue that for two stopping times $t_1 < t_2$ within the same phase, the maximum objective is achieved for the stopping time $t_2$. This implies that the optimal stopping time has to be the last time step of some phase.
	
		Consider times $t_1 < t_2$ such that $\mP(t_1) = \mP(t_2) = \tau$. Then we want to claim that
		\[
			\textstyle \frac{B}{t_1} \Paren{\sum_{t \in [t_1]} \mP(t) \epsilon} \leq \frac{B}{t_2} \Paren{\sum_{t \in [t_2]} \mP(t) \epsilon}.
		\]	
		For contradiction assume the inequality does not hold. Then we have the following.
		
		\[
			\textstyle \sum_{t \in [t_1]} \mP(t) > \frac{t_1}{t_2} \Paren{\sum_{t \in [t_2]} \mP(t)}.
		\]
	
		Note that $\sum_{t \in [B(\tau-1)]}\mP(t) = \sum_{t' \in [\tau]} B t' = \frac{B(\tau-1)\tau}{2}$. Thus we have
		\begin{align*}
			\textstyle \sum_{t \in [t_1]} \mP(t) & \textstyle = \frac{B(\tau-1)\tau}{2} + (t_1-B(\tau-1))\tau,\\
			\textstyle \sum_{t \in [t_2]} \mP(t) & \textstyle = \frac{B(\tau-1)\tau}{2} + (t_2-B(\tau-1))\tau.
		\end{align*}
			
		Therefore we have,
		\[
			\textstyle \frac{B(\tau-1)\tau}{2} + (t_1-B(\tau-1))\tau > \frac{t_1B(\tau-1)\tau}{2t_2} + \tfrac{t_1}{t_2} \cdot (t_2-B(\tau-1))\tau.
		\]
		
		Further re-arranging, we get $B(\tau-1) > t_1$. This is a contradiction since $t_1$ is in phase $\tau$, so $t_1\geq B(\tau-1)$.
	
		Next we argue that the optimal value is achieved when the stopping time is in the last \emph{non-zero rewards phase}. Consider two phases $\tau_1 < \tau_2$. Thus the ending times are $B \tau_1$ and $B \tau_2$. To prove that the optimal value increases by stopping at $B \tau_2$, as opposed to $B \tau_1$, we want to show that
		
		\[
				\textstyle \frac{1}{\tau_1} \sum_{t \in [\tau_1]} B t \epsilon \leq \frac{1}{\tau_2} \sum_{t \in [\tau_2]} B t \epsilon.
		\]
		
		As before assume for a contradiction that this doesn't hold. Then re-arranging we get,
		$\frac{\tau_1(\tau_1 + 1)}{2} > \frac{\tau_1 (\tau_2 + 1)}{2}$,
which implies $\tau_1 > \tau_2$, contradiction. We conclude that the stopping time is $\tau^* = B\tau$.
	\end{proof}

\xhdr{Expected behavior of the optimal algorithm.} Consider any randomized algorithm $\mA$. The performance of $\mA$ is then as follows. From the definition of $\OPTLPfull$ we have,
	\begin{equation}
		\label{eq:lowerBoundMain}
		\frac{\OPTLPfull}{\E[\REW(\mA)]} = \max_{1 \leq \tau \leq T/B} \frac{B \tau \cdot \OPTLP(\bvM_{B\tau}, B, B \tau)}{\E[\REW(\mA)]}.
	\end{equation}
	
	
	We will now show that for any algorithm $\mA$, there exists an instance $j \in \Brac{\frac{T}{B}}$,
	\begin{equation}
		\label{eq:MainEqLBMain}
				\frac{\OPTLPfull(\mI_j)}{\E[\REW(\mA, \mI_j)]} \geq \Omega(\log T).
	\end{equation}
	
	Consider two consecutive instances \(\mI_{\tau}\) and \(\mI_{\tau+1}\). The outcome matrices in the phases $1, 2 \LDOTS \tau$ look identical in both these instances. This implies that any randomized algorithm cannot distinguish the two instances (in expectation). Thus, the expected number of times arm $A_1$ is chosen by algorithm $\mA$ in phases $1, 2 \LDOTS \tau$ is identical. Let \(\alpha_{\tau}\) denote the expected number of times $\mA$ plays arm \(A_1\) in phase \(\tau\). Note that this is the same for all instances $\mI_{\tau}, \mI_{\tau+1}, \LDOTS \mI_{T/B}$, as just argued. Thus, we can write
	\begin{equation}
		\label{eq:ExpressionREW}
		\E[\REW(\mA, \mI_{\tau})] = \sum_{j \in [\tau]} j \epsilon \alpha_j.
	\end{equation}
	
		Note that the expected number of times arm \(A_1\) is played in phase \(\tau\) on instances $\mI_{1}, \mI_{2} \LDOTS \mI_{\tau-1}$ does not appear in this expression and thus, is irrelevant for our purposes. Additionally, WLOG we only consider algorithms that exhaust its budget $B$. Indeed, an algorithm can instead choose only arm $A_1$ when the number of steps remaining equals its residual budget, without any degradation in the total reward.
 	Combining Eq.~\eqref{eq:ExpressionREW} with Lemma~\ref{lem:optimalSolutionLB}, the LHS in Eq.~\eqref{eq:lowerBoundMain} can be lower-bounded by,

		\begin{equation}
			\label{eq:MainLBOPTP3}	
				\frac{\OPTLPfull}{\E[\REW(\mA)]} \geq
				\frac{\epsilon B}{2} \cdot \left( \min_{\substack{\vec{\alpha} \geq \vec{0}:\\ \langle \vec{\alpha}, \vec{1} \rangle = B}} \max_{1 \leq \tau \leq T/B} \frac{\tau + 1}{\sum_{j \in [\tau]} j \epsilon \alpha_j} \right).
		\end{equation}
		
We can characterize the optimal solution $\vec{\alpha}$ in Eq.~\eqref{eq:MainLBOPTP3} as follows. Since the objective is a minimum over $\tfrac{T}{B}$ convex functions with a single equality constraint on the sum of the variables, from complementary slackness condition the minimum is attained when
			\begin{equation}
			\label{eq:EqualLowerBound}
\text{for each $\tau \in [T/B]$, the expression
    $\textstyle \left( \sum_{j \in [\tau]} j \alpha_j \right) \cdot \frac{1}{\tau +1}$
is the same .}
	\end{equation}
			
We will now prove that Eq.~\eqref{eq:EqualLowerBound} leads to the following recurrence for the maximizing values of
\(\alpha_j\).
\begin{equation}
	\label{eq:lowerBoundRecurrence}	
	\forall j \geq 2 \qquad \alpha_j = \frac{\alpha_1}{2j}.
\end{equation}

We will prove the recurrence \refeq{eq:lowerBoundRecurrence} via induction.
The base case is when \(j=2\). By \refeq{eq:EqualLowerBound},
\[
	\frac{1}{\alpha_1} = \frac{3/2}{\alpha_1 + 2\alpha_2},
\]
which implies that \(\alpha_2 = \frac{1}{4}\alpha_1\), and we are done. Now consider the inductive case; let all \(\alpha\) up to \(\alpha_{\tau}\) satisfy the recurrence \refeq{eq:lowerBoundRecurrence}. Consider the instance $\mI_{\tau}$ and $\mI_{\tau+1}$. From \refeq{eq:EqualLowerBound} we have,
\[
	 \frac{\alpha_1 + \sum_{j=2}^\tau \alpha_1/2}{\tau + 1} =  \frac{\alpha_1 + \sum_{j=2}^{\tau} \alpha_1/2 + (\tau+1) \alpha_{\tau+1}}{\tau+2}.
\]

\noindent Rearranging,
\(\alpha_{\tau+1} = \tfrac{1}{2(\tau+1)} \alpha_1\). This completes the inductive step, and the proof of \refeq{eq:lowerBoundRecurrence}.

We complete the proof of the lemma as follows. As argued in Eq.~\eqref{eq:lowerBoundRecurrence}, for the minimum value of $\{ \alpha_j \}_{j \in [T/B]}$, the expression $\frac{\epsilon B}{2} \cdot \frac{\tau +1}{\sum_{j \in [\tau]} j \epsilon \alpha_j}$, which is the RHS in Eq.~\eqref{eq:MainLBOPTP3}, is the same for all $\tau$ and in particular for $\tau = 1$. Substituting $\tau=1$, this evaluates to  $B/\alpha_1$. Since
\(\alpha_1 (1 + 1/4 + 1/6 + \ldots + B/2T) \leq B\) it follows that
\(\alpha_1 \leq 2B / H(\tfrac{T}{B}) \), where $H(n)$ denotes the $n^{th}$ Harmonic number. So, the right-hand side of \refeq{eq:lowerBoundMain} is at least $\tfrac12 H(\tfrac{T}{B})$.
Finally, $H(n) \geq \ln(n) +\zeta$, where $\zeta = 0.577...$ is the Euler-Masceroni constant. Combining this with Lemma~\ref{lem:optfull-optfd} we obtain Eq.~\eqref{eq:lm:LB-main}.

\subsection{Best dynamic policy: proof of Theorem~\ref{thm:LB-main}(c) and Theorem~\ref{thm:LB-CB}}
\label{sec:LB-DP}

	Consider the following construction of the lower-bound example.
		
	\begin{construction}\label{con:LB-DP}
		There is one resource with budget $B$, and two arms, denoted $A_0, A_1$. Arm $A_0$ is the `null arm' that has zero reward and zero consumption. The consumption of arm $A_1$ is $1$ in all rounds. The rewards of $A_1$ are defined as follows. We partition the time into $\tfrac{T}{B}$ phases of duration $B$ each (for simplicity, assume that $B$ divides $T$). We consider $\tfrac{T}{B}$ problem instances; for each instance $\mI_\tau$, $\tau \in \Brac{T/B}$ arm $A_1$ has $0$ reward
			in all phases except phase $\tau$; in phase $\tau$ it has a reward of $1$ in each round.
	\end{construction}
	
\begin{lemma}\label{lm:dynamicPolicy}
Consider Construction~\ref{con:LB-DP} with any given time horizon $T \geq 2$ and budget $B \leq \sqrt{T}$. Let \ALG be an arbitrary randomized algorithm for BwK. Then for one of the problem instances,
\begin{align}\label{eq:lm:dynamicPolicy}
    \OPTDP/\E[\REW] \geq T/B^2.
\end{align}
\end{lemma}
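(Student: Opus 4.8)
The plan is to show that, averaged over the $T/B$ instances, the algorithm's expected reward is small, so that some instance must force the claimed ratio. First I would pin down the benchmark: on each instance $\mI_\tau$ the best dynamic policy plays $A_1$ in every one of the $B$ rounds of phase $\tau$, collecting reward $B$ while consuming exactly the budget $B$. Since any reward requires playing $A_1$ during phase $\tau$ and each such play costs one unit, no policy can beat this, so $\OPTDP(\mI_\tau)=B$. It therefore suffices to exhibit a $\tau$ with $\E[\REW(\mA,\mI_\tau)]\le B^3/T$, since then $\OPTDP/\E[\REW]\ge B/(B^3/T)=T/B^2$.

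Next I would introduce a reference instance $\mI_0$ in which $A_1$ yields reward $0$ in \emph{every} phase, and let $M_\sigma$ be the random number of times $\mA$ plays $A_1$ during phase $\sigma$ when run on $\mI_0$. Because every play of $A_1$ consumes one unit and the budget is $B$, we have $\sum_\sigma M_\sigma\le B$ surely, hence $\sum_\sigma \E[M_\sigma]\le B$. Writing $q_\sigma:=\Pr[M_\sigma\ge 1]$ for the probability that $\mA$ samples $A_1$ at least once in phase $\sigma$, the inequality $\indicator{M_\sigma\ge1}\le M_\sigma$ gives $\sum_\sigma q_\sigma\le B$.

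The crux is a coupling argument tying $\mI_\tau$ to $\mI_0$. I would run $\mA$ with the same internal random seed on both instances: in all phases before $\tau$ arm $A_1$ pays $0$ on each instance, and within phase $\tau$ the two runs produce identical feedback up to the first round in which $\mA$ actually plays $A_1$ (where $\mI_\tau$ reveals reward $1$ while $\mI_0$ reveals $0$). Consequently the event ``$\mA$ plays $A_1$ at least once in phase $\tau$'' has the same probability $q_\tau$ under both instances. Now I bound the reward on $\mI_\tau$, which equals $\E[N_\tau]$ where $N_\tau$ is the number of $A_1$-plays in phase $\tau$: when the discovery event fails, $N_\tau=0$, and whenever it occurs, $N_\tau\le B$ simply because phase $\tau$ consists of only $B$ rounds. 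Hence $\E[\REW(\mA,\mI_\tau)]=\E[N_\tau]\le B\,q_\tau$.

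Summing over the $T/B$ instances then gives $\sum_\tau \E[\REW(\mA,\mI_\tau)]\le B\sum_\tau q_\tau\le B^2$, so by averaging over the $T/B$ instances some $\mI_\tau$ satisfies $\E[\REW(\mA,\mI_\tau)]\le B^2/(T/B)=B^3/T$, and combined with $\OPTDP=B$ this yields the bound; the assumption $B\le\sqrt T$ is exactly what makes $T/B^2\ge 1$ nontrivial. I expect the one genuinely delicate step to be the coupling: one must argue carefully that the ``discovery'' behavior in phase $\tau$ is not influenced by the hidden reward before the algorithm first samples $A_1$ there, so that $q_\tau$ transfers cleanly from $\mI_0$, and that post-discovery exploitation is capped purely by the phase length $B$ rather than requiring any finer accounting of the residual budget.
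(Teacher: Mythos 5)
Your proof is correct, and it takes a genuinely different route from the paper's. The paper first invokes Yao's minimax principle to reduce to deterministic algorithms against a uniformly random instance, then characterizes the structure of the optimal deterministic algorithm (probe $A_1$ only in the first round of a phase, exploit for the rest of the phase iff the probe pays $1$), and finally counts: since instances $\mI_{\pi(i-1)}$ and $\mI_{\pi(i)}$ are indistinguishable until the probe in phase $\pi(i-1)$, consecutive nonzero per-instance reward counts decrease by exactly one, giving $\sum_\tau \alpha_\tau = k\alpha_{\pi(1)} - k(k-1)/2 < B^2$ and hence average reward below $B^3/T$. You instead keep the algorithm randomized and arbitrary, introduce the auxiliary all-zero-reward instance $\mI_0$, and use a same-seed coupling to show that the ``discovery'' probability $q_\tau$ of playing $A_1$ in phase $\tau$ transfers from $\mI_0$ to $\mI_\tau$; the budget constraint on $\mI_0$ gives $\sum_\tau q_\tau \le B$, and the phase length caps the conditional reward at $B$, so $\sum_\tau \E[\REW(\mA,\mI_\tau)] \le B^2$ — the same key quantity the paper reaches, but by counting probabilities rather than plays of a structurally optimal algorithm. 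Your route buys some robustness: it sidesteps Yao's lemma entirely and, more importantly, avoids the paper's assertion about the form of the optimal deterministic algorithm, which the paper states without a detailed justification; your coupling step is the precise substitute for that assertion and you identify it correctly as the delicate point. The paper's approach, in exchange, exposes the structure of (near-)optimal play and gives the marginally sharper count $k\alpha_{\pi(1)} - k(k-1)/2$, though this sharpening is immaterial to the stated bound.
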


	\begin{proof}
Let $n = \tfrac{T}{B}$ be the number of phases in Construction~\ref{con:LB-DP}. Let \ALG be a deterministic algorithm. Let $\REW$ denote its total reward, and let $\E_\tau[\cdot]$ denote the expectation over the uniform-at-random choice of the problem instance $\mI_\tau$. We claim that
\begin{align}\label{eq:lm:dynamicPolicy:pf}
    \textstyle \OPTDP/\E_\tau[\REW] \geq T/B^2.
\end{align}
Assume that \ALG maximizes $\E_\tau[\REW]$ (over deterministic algorithms). Then it satisfies the following:
\begin{itemize}
\item  Within each phase, if \ALG ever chooses to play arm $A_1$, it does so in the first round of the phase. If it receives a reward of $1$ in this round, it plays $A_1$ for the rest of the phase. Else, it never plays $A_1$ for the rest of this phase.
		\end{itemize}
		
For each $\tau\in [n]$, let $\alpha_{\tau}$ denote the number of times \ALG chooses arm $A_1$ in phase $\tau$ in problem instance $\mI_{\tau}$.
The expected reward of \ALG over the uniform-at-random choice of the problem instance $\mI_\tau$ is
    $ \E[\REW] = \tfrac{1}{n} \sum_{i\in [n]} \alpha_i$.
Let
    $(\alpha_{\pi(1)}, \alpha_{\pi(2)} \LDOTS \alpha_{\pi(k)})$
be the subsequence of $(\alpha_1 \LDOTS \alpha_n)$ which contains all elements with non-zero values.

The key observation is as follows. The problem instances $\mI_{\pi(\tau-1)}$ and $\mI_{\pi(\tau)}$ are identical until phase $\pi(\tau-1)-1$. Since the feedback received by \ALG until the first time it chooses arm $A_1$ in phase $\pi(\tau-1)$ is identical, it follows that $\alpha_{\pi(\tau-1)} - \alpha_{\pi(\tau)} = 1$.
Therefore,		
\[ \textstyle
    \sum_{i\in[n]} \alpha_i
    = \sum_{i\in[k]} \alpha_{\pi(i)}
	= k\cdot \alpha_{\pi(1)} - \frac{k (k-1)}{2}.
\]
Noting that $\alpha_1\leq B$ and $k\leq \min(B,n) = B$ , we have:
\[
\textstyle
\E[\REW]
    \leq \tfrac{1}{n} \sum_{i\in[n]} \alpha_i
    < B^2/n = B^3/T.
\]
Since $\OPTDP=B$ for every problem instance $\mI_\tau$, \refeq{eq:lm:dynamicPolicy:pf} holds for $\ALG$, and therefore for any other deterministic algorithm. By Yao's lemma, for every randomized algorithm \ALG there exists a problem instance $\mI_\tau$ such that \eqref{eq:lm:dynamicPolicy} holds.
	\end{proof}

We now use the same construction to prove Theorem~\ref{thm:LB-CB}.
\begin{proof}[Proof sketch of Theorem~\ref{thm:LB-CB}]
		We prove the Theorem by contradiction. Let $B \leq \sqrt{T}$. For contradiction, consider an algorithm $\ALG$ for cBwK on a policy set $\Pi$ such that $\OPTFD(\Pi)/\E[\REW(\ALG)] < T/B^2$. We will now use $\ALG$ to construct an algorithm $\mA$ for the Construction~\ref{con:LB-DP} such that $\OPTDP/\E[\REW(\mA)] < T/B^2$ for every instance. This contradicts Lemma~\ref{lm:dynamicPolicy}.
		
		Consider a policy set $\Pi$ with $|n|$ policies. Every policy $\pi \in \Pi$ maps contexts in the range $[1, T]$ to the action set $\{A_1, A_0\}$. In particular, a policy $\pi_{\tau} \in \Pi$ maps contexts that lie in the range $[B*(\tau-1)+1, B*\tau]$ to arm $A_1$ and all other contexts to $A_0$. $\mA$ invokes $\ALG$ as a sub-routine with the policy set $\Pi$. At each time-step $t$, $\mA$ gives the context $x_t =t$ to $\ALG$ and plays the arm chosen by $\ALG$.
		
		For an instance $\mI_{\tau}$ in Construction~\ref{con:LB-DP}, $\OPTFD(\Pi)$ is the total reward obtained by choosing the action given by $\pi_{\tau}$ in all time-steps. The total reward obtained is $B$, which equals $\OPTDP(\mI_{\tau})$. Therefore, $\OPTFD(\Pi)/\E[\REW(\ALG)] < T/B^2$ implies we have $\OPTDP/\E[\REW(\mA)] < T/B^2$ for every instance $\mI_{\tau}$, which is a contradiction.
	\end{proof}

 \subsection{Best fixed arm: proof of Theorem~\ref{thm:LB-main}(d)}
 \label{sec:LB-FA}

We use the following construction for the lower-bound.
	
	\begin{construction}\label{con:LB-FA}
		There is one resource with budget $B$, and $K$ arms denoted by $A_1, A_2 \LDOTS A_K$. Arm $A_K$ is the `null arm' that has zero reward and zero consumption. There are $K$ instances in the family. In each instance, the time-steps are divided into $T/K$ equally spaced phases. In instance $\mI_j$, all arms $A_{j'}$ where $j' > j$ have $0$ reward and $0$ consumption in all time-steps. Consider an instance $\mI_j$ for some $j \in [K-1]$ and an arm $j' \leq j$. Arm $A_{j'}$ has a reward of $\tfrac{1}{K^{K-j'}}$ and consumption of $1$ in all time-steps in phase $j'$ and has a reward of $0$ and consumption of $0$ in every other time-step. Thus the rewards and consumption are bounded in the interval $[0, 1]$ for every arm and every time-step in all instances in this family. 		
	\end{construction}
	
	\begin{lemma}
		\label{thm:bestFixedArmLB}
		Let $T \geq 2$, $2 \leq B \leq T$, $K \geq 3$ be given parameters of the $\AdversarialBwK$ problem. We show that there exists a family of instances with $d=1$ shared resource such that for every randomized algorithm $\mA$ we have $\tfrac{\OPTFA}{\E[\REW(\mA)]}$ is at least $\Omega(K)$ on one of these instances.
	\end{lemma}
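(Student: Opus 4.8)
The plan is to fix an arbitrary randomized algorithm $\mA$ and exhibit a single instance $\mI_j$ on which the ratio $\OPTFA(\mI_j)/\E[\REW(\mA,\mI_j)]$ is $\Omega(K)$, arguing directly in expectation over $\mA$'s internal randomness (so that Yao's principle is not even needed). Two structural facts drive the argument. First, the rewards across phases grow geometrically: arm $A_{j'}$ pays $K^{-(K-j')}$ per round in its phase, so the benchmark $\OPTFA(\mI_j)$ grows by a factor of $K$ as $j$ increases. Second, the instances are \emph{nested}: $\mI_j$ and $\mI_{j'}$ for $j'>j$ present identical outcome matrices through the end of phase $j$, since the only difference between them is the behavior of arms $A_{j+1},A_{j+2},\ldots$, which are idle (zero reward, zero consumption) until their own, strictly later, phases.

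First I would record the benchmark and the indistinguishability. On $\mI_j$ the best fixed arm is $A_j$ (it has the largest per-round reward among the arms active in $\mI_j$, and is played throughout), so $\OPTFA(\mI_j)=\mu\cdot K^{-(K-j)}$, where $\mu$ is the number of rewarding rounds $A_j$ can afford inside its phase, i.e.\ $\mu=\min(B,\text{phase length})$; in the budget-binding regime $\mu=B$. By the nesting fact, the distribution of the history observed by $\mA$ through phase $\sigma$ is identical on every instance $\mI_j$ with $j\ge\sigma$, so the expected number $\alpha_\sigma$ of rounds in which $\mA$ plays $A_\sigma$ during phase $\sigma$ is a single quantity shared by all these instances. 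Since $A_\sigma$ is the only arm paying reward (and the only one consuming budget) during phase $\sigma$, linearity of expectation gives
\begin{align*}
\E[\REW(\mA,\mI_j)] = \sum_{\sigma=1}^{j} \alpha_\sigma\, K^{-(K-\sigma)},
\end{align*}
while the stopping rule (consumption $1$ per play of an active arm) forces $\sum_{\sigma=1}^{K-1}\alpha_\sigma \le B$.

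The crux is then a purely arithmetic argument. Suppose, for contradiction, that $\OPTFA(\mI_j)/\E[\REW(\mA,\mI_j)] < \rho$ for every $j\in[K-1]$; equivalently, writing $P_j=\E[\REW(\mA,\mI_j)]$, that $P_j > \tfrac{\mu}{\rho}K^{-(K-j)}$ for all $j$. I would substitute $\alpha_\sigma K^{-(K-\sigma)}=P_\sigma-P_{\sigma-1}$ and evaluate the budget sum by Abel summation,
\begin{align*}
\sum_{\sigma=1}^{K-1}\alpha_\sigma
= \sum_{\sigma=1}^{K-1}(P_\sigma-P_{\sigma-1})\,K^{K-\sigma}
= P_{K-1}\,K + (K-1)\sum_{\sigma=1}^{K-2} P_\sigma\, K^{K-\sigma-1},
\end{align*}
and then plug in the cumulative lower bounds $P_\sigma > \tfrac{\mu}{\rho}K^{\sigma-K}$, which collapse each summand to $\tfrac{\mu}{\rho K}$ and yield
\begin{align*}
\sum_{\sigma=1}^{K-1}\alpha_\sigma > \frac{\mu}{\rho}\cdot\frac{(K-1)(K-2)}{K}.
\end{align*}
Combined with $\sum_\sigma\alpha_\sigma\le B$ this gives $\rho > \tfrac{\mu}{B}\cdot\tfrac{(K-1)(K-2)}{K}$, which is $\Omega(K)$ once the budget is binding ($\mu=B$), contradicting the assumption and establishing the lemma. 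The geometric weighting is precisely what makes the individual increments $\alpha_\sigma$ useless to bound one at a time, so the cumulative Abel-summation step is the main technical point. The remaining subtleties I would be careful about are the passage from the stopping-time statement (actual consumption $\le B$) to the clean expected-budget inequality $\sum_\sigma\alpha_\sigma\le B$, and the verification that $\OPTFA(\mI_j)$ is genuinely realized by the single arm $A_j$ rather than by any mixture or earlier arm.
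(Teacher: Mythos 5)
Your proposal is correct, and its skeleton coincides with the paper's proof: the same family of instances, the same indistinguishability observation (instances $\mI_j$ with $j\ge\sigma$ present identical outcome matrices through phase $\sigma$, so the expected number $\alpha_\sigma$ of plays of $A_\sigma$ in its phase is a single quantity shared by all of them), the same reward identity $\E[\REW(\mA,\mI_j)]=\sum_{\sigma\le j}\alpha_\sigma K^{-(K-\sigma)}$, the same budget inequality $\sum_{\sigma\in[K-1]}\alpha_\sigma\le B$, and the same direct-expectation framing that makes Yao's principle unnecessary. Where you genuinely diverge is the final optimization over the $\alpha_\sigma$'s. The paper treats \eqref{eq:bestFixedArmRatio} as a min--max problem, asserts that the minimum is attained when all the ratios are equalized, and then derives the recurrence \eqref{eq:FArecurrence} by induction, concluding with $\alpha_1\le B/(K-\tfrac1K)$ and a ratio of $K-\tfrac1K$ on instance $\mI_1$. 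You instead assume for contradiction that the ratio is below $\rho$ on every instance, convert the resulting per-instance lower bounds $P_j>\tfrac{\mu}{\rho}K^{-(K-j)}$ into a lower bound on $\sum_\sigma\alpha_\sigma$ via Abel summation (each weighted term collapsing to $\mu/(\rho K)$), and contradict the budget constraint once $\rho\le\tfrac{(K-1)(K-2)}{K}$. Your route buys a concrete advantage: it sidesteps the equalization claim, which the paper states without justification (the analogous step in Lemma~\ref{lm:LB-main} is likewise only informally attributed to complementary slackness), so your argument is self-contained and arguably more rigorous, at the cost of a slightly worse constant (roughly $K-3$ versus $K-\tfrac1K$). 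One caveat applies to both proofs equally: the claim $\OPTFA(\mI_j)=B\cdot K^{-(K-j)}$ implicitly requires the budget to fit inside a single phase; you at least surface this assumption explicitly through $\mu=\min(B,\text{phase length})$, whereas the paper leaves it silent, but neither proof handles the regime where $B$ exceeds the phase length.
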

	
	\begin{proof}
		First note that the best fixed arm in instance $\mI_j$ is to pick arm $A_j$ which yields a total reward of $\frac{B}{K^{K-j}}$.
		
		Consider a randomized algorithm $\mA$. Observe that in the first $j$ phases, the instances $\mI_{j-1}$ and $\mI_j$ have identical outcome matrices. Thus the expected number of times any arm $A_k$ for $k \in [K]$ is chosen in phases $\Set{1, 2 \LDOTS j}$ should be the same in both the instances. Let $\alpha_k$ denote the expected number of times arm $k$ is played by $\mA$ in phase $k$ on instances $\mI_k, \mI_{k+1} \LDOTS \mI_{K-1}$ \footnote{This has to be the same in all instances since the outcome matrix is identical until phase $k$ in all these instances}. Moreover we have that $\alpha_1 + \alpha_2 \LDOTS \alpha_{K-1} \leq B$.
		
		To show the lower-bound we want to minimize the competitive ratio on every instance for all possible values of $\alpha_1, \alpha_2 \LDOTS \alpha_{K-1}$. For ease of notation denote $r_j := \frac{1}{K^{K-j}}$. Let $\alpha_{\mathcal{B}}$ denote the set of values to $\Set{\alpha_k}_{k \in [K-1]}$ such that $\sum_{k \in [K-1]} \alpha_k \leq B$. Thus,
				\begin{equation}
			\label{eq:bestFixedArmRatio}
			\textstyle	\tfrac{\OPTFA}{\E[\REW(\mA)]} \geq \min_{\alpha_{\mathcal{B}}} \frac{r_k B}{\sum_{j \in [k]} r_j \alpha_j}.
		\end{equation}
		
		The ratio is minimized when all ratios in \refeq{eq:bestFixedArmRatio} are equal. We will show via induction that this yields the following recurrence,
		\begin{equation}
			\label{eq:FArecurrence}
			\textstyle \forall k \geq 2 \qquad \alpha_k = \Paren{1-\frac{r_{k-1}}{r_k}} \alpha_1.
		\end{equation}
		
		Combining this with the condition that $\sum_{k \in [K-1]} \alpha_k \leq B$, this yields the condition $\alpha_1 \leq \tfrac{B}{K-\tfrac{1}{K}}$. Moreover the minimizing value in \refeq{eq:bestFixedArmRatio} is $K-\tfrac{1}{K}$ which proves Lemma~\ref{thm:bestFixedArmLB}.
		
		We will now prove the recurrence \refeq{eq:FArecurrence}. Consider the base case with $k=2$. Equalizing the first two terms in \refeq{eq:bestFixedArmRatio} we get
		\[
			\textstyle	\frac{r_1 B}{r_1\alpha_1} = \frac{r_2 B}{r_1 \alpha_1 + r_2 \alpha_2}.
		\]
		
		Re-arranging we obtain that $\alpha_2 = \Paren{1- \frac{r_1}{r_2}} \alpha_1$. We will now prove the inductive case. Let the recurrence be true for all $1 \leq k \leq k'$. Consider the case $k=k'+1$. Setting the $k'$ and $k'+1$ ratios in \refeq{eq:bestFixedArmRatio} equal, we obtain
		\begin{equation}
			\label{eq:FAinductive}
			\textstyle \frac{r_{k'} B}{\sum_{j \in [k']} r_j \alpha_j} = \frac{r_{k'+1} B}{\sum_{j \in [k' + 1]} r_j \alpha_j}.
		\end{equation}

		Moreover from the inductive hypothesis we have $\alpha_j = \Paren{1-\tfrac{r_{j-1}}{r_j}} \alpha_1$ for every $j \leq k'$. Thus we have
		\begin{align*}
				\textstyle \sum_{j \in [k']} r_j \alpha_j & = r_{k'} \alpha_1\\
				\textstyle \sum_{j \in [k' + 1]} r_j \alpha_j &= r_{k'} \alpha_1 + r_{k'+1} \alpha_{k'+1}.
		\end{align*}
	
		Plugging this back in \refeq{eq:FAinductive} we get
		\[
				\textstyle \frac{r_{k'} B}{r_{k'} \alpha_1} = \frac{r_{k'+1} B}{r_{k'} \alpha_1 + r_{k'+1} \alpha_{k'+1}}.
		\]
		
		Rearranging we get $\alpha_{k'+1} = \Paren{1-\tfrac{r_k'}{r_{k'+1}}} \alpha_1$. This completes the induction.
	\end{proof}

\section{Open Questions and Follow-Up Work}
\label{sec:conclusions}

We use essentially the same algorithm, \MainALG, to solve both stochastic and adversarial version of bandits with knapsacks. Yet, we use it with different parameter $T_0$ (randomly guessed in the adversarial version) and a slightly different definition of the outcome matrices.%
\footnote{Recall that in the stochastic setting there a `dummy resource' with strictly positive consumption for all arms, whereas in the adversarial version the null arm must have zero consumption for all resources.}
Can we solve both versions with \emph{exactly} the same algorithm? One  concrete goal would be to achieve $O(\log T)$ competitive ratio in the adversarial version, and $o(T)$ regret for the stochastic version in the regime $\min(B,\OPTFD)\geq \Omega(T)$. A similar ``best of both worlds" result has been obtained for bandits without budget/supply constraints: one algorithm that achieves optimal regret rates for both adversarial bandits and stochastic bandits, without knowing which environment it is in \citep{BestofBoth-colt12,BestofBoth-icml14,Auer-colt16}. Further developments focused on mostly stochastic environments with a small amount of adversarial behavior \citep{BestofBoth-icml14,Seldin-colt17,Thodoris-stoc18,Haipeng-colt18}; similar questions are relevant to BwK as well.


\OMIT{ 
	Two more extensions would follow from plausible advances in other areas of multi-armed bandits. First, consider the contextual version of BwK (cBwK), as defined in Section~\ref{sec:ext-CB}. Our framework yields algorithms and regret bounds for cBwK, by using a suitable ``primal algorithm" in \MainALG. Namely, we need an algorithm for adversarial contextual bandits (see Figure~\ref{prob:CB-adv} on page \pageref{prob:CB-adv}), with high-probability regret bounds. We use one such algorithm to derive Corollary~\ref{cor:CB}, but it is computationally inefficient. Computationally efficient (more specifically, oracle-efficient) algorithms for adversarial contextual bandits exist
\cite{Syrgkanis-AdvCB-icml16,Rakhlin-AdvCB-icml16,Syrgkanis-AdvCB-nips16},
but only with expected regret bounds. Extending these results to high-probability regret bounds would imply oracle-efficient algorithms for cBwK.
Second, consider BwK problems with large action-dependent reward ranges, \eg the dynamic pricing problem with price range $[0, p_{\max}]$. (In this problem, actions are prices, and each price $p$ yields reward $0$ or $p$.) The current results, both ours and in the prior work on \StochasticBwK, give regret terms that are proportional to $p_{\max}$. A recent advance in multi-armed bandits replaces this dependence on $p_{\max}$ with a similar dependence on $p^*$, the best price \cite{MultiScaleMAB-ec17}. We could use the algorithm from \cite{MultiScaleMAB-ec17} as the primal algorithm in \MainALG, and obtain (with a little extra work) a similar improvement for BwK, if the regret bounds \cite{MultiScaleMAB-ec17} were extended from expected regret to high-probability regret.
} 

\OMIT{
We interpret the adversarial analysis of \MainALG (Lemma~\ref{lm:adv-crux}) as an extension of the ``regret minimization in games" framework to adversarial repeated games. Indeed, we apply this framework and derive a convergence result for rewards
}

Given our upper and lower bounds, the competitive ratio $\frac{\OPTFD-\reg}{\E[\REW]}$ can potentially be improved in several regimes. Some concrete questions left open by our paper are as follows:
\begin{itemize}
\item obtain \emph{constant} competitive ratio in the regime $B=\Omega(T)$.
\item obtain \emph{sublinear} dependence of the competitive ratio on $d$, the number of resources.
\item obtain \emph{constant} competitive ratio for problem instances with ``large enough" $\OPTFD$.
\item obtain \emph{optimal} constant competitive ratio when $\OPTFD$ is known up to a constant factor.
\item obtain competitive ratio in Theorem~\ref{thm:AdvBwK-main} \emph{uniformly} over $\Gmin$ (\ie for all $\Gmin$ simultaneously). Equivalently, obtain  competitive ratio $O(\ln(T/\OPTFD))$.
\end{itemize}

Several of these questions have been resolved in follow-up work. \cite{Singla-colt20} resolve the optimal dependence on $d$, achieving competitive ratio $O(\log(d)\log(T))$ via a more careful analysis of \MainALG (among other results), and prove a matching lower bound. \citet{Castiglioni-icml22} use a version of \MainALG to obtain $\nicefrac{T}{B}$ competitive ratio, which is a mere constant when $B=\Omega(T)$.
Interestingly, they use fixed parameters $(B_0,T_0) =(B,T)$, without the random guessing in Algorithm~\ref{alg:LagrangianBwKAdv} or the multi-phased ``meta-algorithm" in Algorithm~\ref{alg:LagrangianBwKAdvHighP}; moreover, their result holds with high probability. \citet{Castiglioni-icml22} also analyze the very same algorithm in the stochastic setting, matching our regret bound from Theorem~\ref{thm:IID} and therefore achieving a ``best of both worlds" result. Their version of \MainALG optimizes the dual vector $\vec{\lambda}\in [0,\nicefrac{T}{B}]^d$, whereas ours optimizes $\vec{\lambda}$ over all distributions.

Can one still achieve meaningful regret bounds for \AdvBwK, without the competitive ratio? One way to interpret our impossibility results is that the fixed-distribution benchmark is just too harsh. It could be productive to define a weaker (and perhaps fairer) benchmark for the algorithm to compete against, so as to achieve competitive ratio of $1$ relative to this benchmark. \citet{Gaitonde-auctions-arxiv22} achieve one such result for the special case of budget-constrained bidding in a repeated auction.


In terms of extensions to ``richer" application scenarios, as in Section~\ref{sec:ext}, \citet{Castiglioni-icml22} spell out two more extensions: to repeated Stackelberg games and to repeated first-price auctions. The main open question is to achieve similar results using a ``stochastic" primal algorithm, \ie a primal algorithm designed (only) for the stochastic version of a particular application scenario.

\OMIT{ 
Given the importance of online learning in repeated stochastic  zero-sum games, online learning in repeated \emph{adversarial} zero-sum games is of independent interest. While we made use of it in the context of \AdvBwK, it is tantalizing to study it in more general contexts, and use it in a fruitful way in other problems. After the initial publication of this paper on {\tt arxiv.org}, \citet{Jake-icml19} obtained a powerful result in this direction, as discussed in Section~\ref{sec:related-simultaneous}.
} 


\addcontentsline{toc}{section}{References}

\begin{small}
\bibliographystyle{plainnat}
\bibliography{bib-abbrv,bib-AGT,bib-bandits,bib-ML,bib-random,bib-slivkins,refs}
\end{small}

\newpage

\appendix
\section{Standard tools}
\label{app:LP}

Our exposition in the body of the paper relies on some tools that are either known or can easily be derived using standard techniques. We state (and sometimes derive) these tools in this appendix.

\subsection{Concentration Inequalities}
\label{sec:techLemmas}

	\begin{lemma}[Azuma-Hoeffding inequality]
			\label{lem:AzumaHoeffding}
			Let $Y_1, Y_2, \ldots, Y_T$ be a martingale difference sequence (\emph{i.e.,} $\mathbb{E}[Y_t ~|~Y_1, Y_2 \LDOTS Y_{t-1}] = 0$). Suppose $|Y_t| \leq c$ for all $t \in \{ 1, 2, \ldots, T\}$. Let $R_{0, \delta}(T) := \sqrt{2 T c^2 \ln(1/\delta)}$. Then for every $\delta > 0$,
			
			\[
					\textstyle \Pr \left[ \sum_{t \in [T]} Y_t > R_{0, \delta}(T) \right] \leq \delta.
			\]	
	\end{lemma}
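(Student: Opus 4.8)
The plan is to use the standard exponential-moment (Chernoff) method, adapted to the martingale setting. First I would fix a parameter $\theta>0$ and apply Markov's inequality to the nonnegative random variable $\exp\!\big(\theta\sum_{t\in[T]}Y_t\big)$, which yields, for any threshold $\lambda>0$,
\[
\Pr\left[\sum_{t\in[T]} Y_t > \lambda\right] \;\le\; e^{-\theta\lambda}\;\E\!\left[\exp\!\Big(\theta\sum_{t\in[T]}Y_t\Big)\right].
\]
The task then reduces to controlling the moment generating function (MGF) of the martingale sum.

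Next I would bound this MGF by peeling off one increment at a time via the tower property of conditional expectation. Writing $S_t=\sum_{j\le t}Y_j$ and $\mathcal{F}_{t}$ for the natural filtration, I would condition on $\mathcal{F}_{T-1}$ and factor $\E[e^{\theta S_T}]=\E\!\big[e^{\theta S_{T-1}}\,\E[e^{\theta Y_T}\mid \mathcal{F}_{T-1}]\big]$. The crux of the argument is controlling the inner conditional expectation $\E[e^{\theta Y_T}\mid\mathcal{F}_{T-1}]$, and here I would invoke \emph{Hoeffding's lemma}: a zero-mean random variable supported in an interval of length $\ell$ has MGF at most $e^{\theta^2\ell^2/8}$. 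Since each $Y_t$ is a martingale difference (conditional mean zero) with $|Y_t|\le c$, the relevant interval $[-c,c]$ has length $2c$, so that $\E[e^{\theta Y_t}\mid\mathcal{F}_{t-1}]\le e^{\theta^2 c^2/2}$ almost surely. Iterating this bound across all $T$ increments gives $\E[e^{\theta S_T}]\le e^{T\theta^2 c^2/2}$.

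Combining the two steps, I would obtain $\Pr[S_T>\lambda]\le \exp\!\big(-\theta\lambda + T\theta^2 c^2/2\big)$ for every $\theta>0$. I would then optimize the exponent over $\theta$; the minimizer is $\theta=\lambda/(Tc^2)$, producing the clean sub-Gaussian tail
\[
\Pr\left[\sum_{t\in[T]}Y_t>\lambda\right] \;\le\; \exp\!\Big(-\tfrac{\lambda^2}{2Tc^2}\Big).
\]
Finally, substituting the stated threshold $\lambda = R_{0,\delta}(T)=\sqrt{2Tc^2\ln(1/\delta)}$ makes the exponent equal to $-\ln(1/\delta)$, so the right-hand side is exactly $\delta$, as claimed.

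The only genuinely technical ingredient is Hoeffding's lemma itself, which follows from a convexity argument (bounding $e^{\theta x}$ on $[-c,c]$ by its chord and applying a second-order Taylor estimate to the resulting log-MGF); everything else is routine bookkeeping with conditional expectations and a one-variable optimization. I would treat Hoeffding's lemma as a black box, since it is entirely standard, and so I expect no real obstacle beyond keeping the conditioning precise when iterating the MGF bound.
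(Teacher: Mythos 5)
Your proof is correct: the Chernoff exponential-moment argument, the conditional application of Hoeffding's lemma to each martingale increment (yielding $\E[e^{\theta Y_t}\mid\mathcal{F}_{t-1}]\le e^{\theta^2c^2/2}$ almost surely), the iteration via the tower property, and the optimization $\theta=\lambda/(Tc^2)$ all check out, and the substitution $\lambda=\sqrt{2Tc^2\ln(1/\delta)}$ indeed gives tail probability exactly $\delta$. Note that the paper itself gives no proof of this lemma --- it is cited as a standard tool from the reference [MR95] --- so there is nothing to compare against; your argument is precisely the canonical textbook proof that the citation points to.
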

			
	\begin{lemma}[Chernoff-Hoeffding bounds]
			\label{lem:Chernoff}
			Let $X_1, X_2, \ldots, X_T$ be a sequence of independent random variables such that $|X_t| \leq c$ for all $t \in \{ 1, 2, \ldots, T\}$. Let $Z _t:= \E[X_t]$. Then for every $\delta > 0$,
			
			\[
					\textstyle \Pr \left[ \left| \sum_{t \in [T]} X_t - Z_t \right| > 3 \sqrt{ \Paren{\sum_{t \in [T]}Z_t} c^2 \ln(1/\delta)} \right] \leq \delta.
			\]		
		\end{lemma}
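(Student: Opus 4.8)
The statement is a standard Chernoff--Hoeffding (multiplicative) tail bound, and since it is quoted from \cite{MR95}, the plan is simply to reproduce the textbook derivation via the exponential moment method. The distinctive feature is that the fluctuation scale is $\sqrt{\mu}$, with $\mu := \sum_{t\in[T]} Z_t$, rather than the coarser $\sqrt{T}$; extracting this variance-dependent (rather than range-dependent) scale is the heart of the argument and relies on the nonnegativity implicit in writing $\sqrt{\mu}$ under the root.

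First I would normalize, writing $Y_t := X_t/c \in [0,1]$ and $\mu' := \sum_t \E[Y_t] = \mu/c$, so that it suffices to control $S := \sum_t Y_t$. For any $\lambda > 0$, Markov's inequality applied to the nonnegative variable $e^{\lambda S}$, together with independence of the $Y_t$, gives $\Pr[S > (1+\epsilon)\mu'] \le e^{-\lambda(1+\epsilon)\mu'}\prod_t \E[e^{\lambda Y_t}]$. Bounding each factor by $\E[e^{\lambda Y_t}] \le \exp(\E[Y_t](e^{\lambda}-1))$ (using the chord bound $e^{\lambda y}\le 1+y(e^\lambda-1)$ for $y\in[0,1]$, then $1+x\le e^x$) and optimizing over $\lambda$ yields the two classical inequalities
\[
\Pr[S > (1+\epsilon)\mu'] \le e^{-\mu'\epsilon^2/3}
\quad\text{and}\quad
\Pr[S < (1-\epsilon)\mu'] \le e^{-\mu'\epsilon^2/2},
\qquad \epsilon \in (0,1].
\]

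Taking $\epsilon = \sqrt{3\ln(1/\delta)/\mu'}$ makes the (binding) upper tail at most $\delta$, and the corresponding absolute deviation is $\epsilon\mu' = \sqrt{3\mu'\ln(1/\delta)}$; undoing the normalization (multiplying by $c$ and using $\mu'=\mu/c$) turns this into $\sqrt{3c\mu\ln(1/\delta)}$. Since $\sqrt{3c\mu\ln(1/\delta)} \le 3\sqrt{\mu c^2\ln(1/\delta)}$ whenever $c\ge \tfrac13$ (in particular for the standard normalization $c=1$), the larger threshold stated in the lemma is reached well inside the tail, so the probability there is at most $\delta$. The generous constant $3$ absorbs both the factor-$2$ union bound over the two tails and the large-deviation regime $\epsilon>1$, which is handled separately by the standard large-deviation form of the Chernoff bound. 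The lower tail is in fact the easier of the two, as $e^{-\mu'\epsilon^2/2}$ has no range correction.

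The only delicate point --- and the main obstacle if one insists on the exact clean constant $3$ --- is the bookkeeping needed to confirm that this single constant survives simultaneously across the moderate-deviation regime ($\epsilon\le1$, sub-Gaussian behavior), the large-deviation regime ($\epsilon>1$, sub-exponential behavior), and the union bound over the two tails. Conceptually, however, there is no new idea beyond choosing the variance proxy $\E[Y_t^2]\le \E[Y_t]$ that produces the $\sqrt{\mu}$ scaling; everything else is the routine Chernoff optimization.
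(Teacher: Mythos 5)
The paper itself offers no proof of Lemma~\ref{lem:Chernoff}: it is stated in the ``Standard tools'' appendix purely as a citation to \cite{MR95}, so the only question is whether your derivation stands on its own. Its core is fine and is exactly the textbook argument behind that citation: the normalization $Y_t = X_t/c$ (using the implicit nonnegativity, which you correctly flag), the chord/MGF bound, the choice $\epsilon=\sqrt{3\ln(1/\delta)/\mu'}$, and the comparison $\sqrt{3c\mu\ln(1/\delta)}\le 3\sqrt{\mu c^2\ln(1/\delta)}$ valid for $c\ge\tfrac13$. This proves the lemma in the sub-Gaussian regime $\epsilon\le 1$, i.e.\ when $\sum_t Z_t \ge 3c\ln(1/\delta)$, for $c \ge \tfrac13$.

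The genuine gap is the final claim that the large-deviation regime $\epsilon>1$ is ``routine'' and ``absorbed by the generous constant $3$.'' It cannot be, because in that regime the inequality as stated is simply false, so no bookkeeping completes the proof. Concretely, take $c=1$ and $X_t$ i.i.d.\ Bernoulli$(p)$ with $\mu:=\sum_t Z_t = Tp = \tfrac{1}{100}\ln(1/\delta)$ and $T$ large. The stated threshold is $3\sqrt{\mu\ln(1/\delta)} = 0.3\ln(1/\delta)$, yet already the single point mass at $k=\lceil 0.32\ln(1/\delta)\rceil$ satisfies, by Stirling,
\begin{align*}
\Pr\left[\textstyle\sum_t X_t = k\right]
\;\ge\; e^{-\mu}\,\mu^k/k!\cdot(1-o(1))
\;=\; \delta^{\,0.78+o(1)} \;\gg\; \delta ,
\end{align*}
so the one-sided tail beyond the threshold exceeds $\delta$ by a polynomial factor. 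The same scale mismatch is why your argument needs $c\ge\tfrac13$: with $c^2$ rather than $c$ under the root, the statement is not invariant under $X_t\mapsto \alpha X_t$, and for small $c$ it again asserts more than is true. A correct general statement requires either a Bernstein-type additive term $+\Theta\bigl(c\ln(1/\delta)\bigr)$ in the deviation, or the hypothesis $\sum_t Z_t \ge \Omega\bigl(c\ln(1/\delta)\bigr)$. Every invocation in the paper is in that benign regime (the relevant means are of order $\gamma_0 T = T^{3/4}$ in Section~\ref{sec:HP}, or of order $B$ in Lemma~\ref{lem:optfull-optfd}, both assumed large compared to $\log(T/\delta)$), so the paper's uses are unaffected; but a blind proof of the lemma as literally stated must add such a hypothesis --- the step you deferred as bookkeeping is where the statement itself breaks.
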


\subsection{Lagrangians: proof of Lemma~\ref{lm:gameToLP}}
	\label{app:eq:LagrangeMinMaxNew}
	
Assume one of the resources is the dummy resource, and one of the arms is the null arm. Consider the linear program $\myLP{\vM}{B}{T}$, for some outcome matrix $\vM$. Let
    $\mL = \myLag{\vM}{B}{T}$
denote the Lagrange function.

\begin{lemma}[Lemma~\ref{lm:gameToLP}, restated]\label{lem:app-gameToLP-Nash}
Suppose $(\vec{X}^*, \vec{\lambda}^*)$ is a mixed Nash equilibrium for the Lagrangian game. Then $\vec{X}^*$ is an optimal solution for the linear program \eqref{lp:primalAbstract}. Moreover, the minimax value of the Lagrangian game equals the LP value:
    $\mL(\vec{X}^*, \vec{\lambda}^*) = \OPTLP$.
\end{lemma}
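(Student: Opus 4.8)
The expected payoff when the primal player plays $\vec{X}\in\Delta_K$ and the dual player plays $\vec{\lambda}\in\Delta_d$ is exactly the Lagrange function $\mL(\vec{X},\vec{\lambda})$ from \eqref{eq:LagrangianGeneral}, \emph{restricted} to dual vectors on the simplex $\Delta_d$ rather than on all of $\mathbb{R}^d_{\geq0}$. Since the Lagrangian game is a finite zero-sum game, von Neumann's minimax theorem supplies a value
\[
v^* \;=\; \max_{\vec{X}\in\Delta_K}\min_{\vec{\lambda}\in\Delta_d}\mL(\vec{X},\vec{\lambda})
     \;=\; \min_{\vec{\lambda}\in\Delta_d}\max_{\vec{X}\in\Delta_K}\mL(\vec{X},\vec{\lambda}),
\]
and every mixed Nash equilibrium $(\vec{X}^*,\vec{\lambda}^*)$ has $\mL(\vec{X}^*,\vec{\lambda}^*)=v^*$ with $\vec{X}^*$ attaining the outer maximum on the left. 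The lemma then reduces to two claims: (i) $v^*=\OPTLP$, i.e.\ restricting the dual to the simplex does not alter the max-min value in \eqref{eq:LagrangeMinMax}; and (ii) any maximin primal strategy $\vec{X}^*$ is LP-optimal. Throughout I would write $g_i(\vec{X}):=1-\tfrac{T}{B}\sum_{a}X(a)\,c^{\vM}_i(a)$, so that $\mL(\vec{X},\vec{\lambda})=\sum_a X(a)\,r^{\vM}(a)+\sum_i\lambda_i\,g_i(\vec{X})$ and, minimizing a linear function over the simplex, $\min_{\vec{\lambda}\in\Delta_d}\mL(\vec{X},\vec{\lambda})=\sum_a X(a)\,r^{\vM}(a)+\min_i g_i(\vec{X})$.

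One inequality is immediate: because $\Delta_d\subseteq\mathbb{R}^d_{\geq0}$, minimizing over the smaller set can only raise the value, so $v^*\geq\max_{\vec{X}}\min_{\vec{\lambda}\geq0}\mL(\vec{X},\vec{\lambda})=\OPTLP$ by \eqref{eq:LagrangeMinMax}. The real work is the reverse bound $v^*\leq\OPTLP$, and this is exactly where the two structural assumptions enter. First, the dummy resource $i_0$ has $c^{\vM}_{i_0}(a)=B/T$ for every arm, so $g_{i_0}(\vec{X})=1-\tfrac{T}{B}\cdot\tfrac{B}{T}=0$ for all $\vec{X}\in\Delta_K$, whence $\min_i g_i(\vec{X})\leq0$. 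Second, to bound $\sum_a X(a)r^{\vM}(a)+\min_i g_i(\vec{X})$ above by $\OPTLP$, I would fix any $\vec{X}$, set $m:=\min_i g_i(\vec{X})\leq0$, and form the mixture $\vec{X}':=(1-\theta)\vec{X}+\theta\,\vec{e}_{\mathrm{null}}$ with $\theta=\tfrac{-m}{1-m}\in[0,1)$, exploiting that the null arm has zero reward and zero consumption of every \emph{true} resource. A direct computation gives $g_i(\vec{X}')\geq0$ for each true resource (and trivially for the dummy one), so $\vec{X}'$ is LP-feasible, while $\sum_a X'(a)r^{\vM}(a)=\tfrac{1}{1-m}\sum_a X(a)r^{\vM}(a)$. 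The elementary inequality $\tfrac{R}{1-m}\geq R+m$, valid whenever $R+m\leq1$ (which holds since rewards lie in $[0,1]$, so $R:=\sum_a X(a)r^{\vM}(a)\leq1$ and $m\leq0$), then yields $\OPTLP\geq\sum_a X'(a)r^{\vM}(a)\geq R+m$. Taking the maximum over $\vec{X}$ gives $v^*\leq\OPTLP$, finishing claim (i).

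For claim (ii), a maximin strategy $\vec{X}^*$ satisfies $R^*+m^*=v^*=\OPTLP$, where $R^*:=\sum_a X^*(a)r^{\vM}(a)$ and $m^*:=\min_i g_i(\vec{X}^*)$. Since the chain $\OPTLP\geq\tfrac{R^*}{1-m^*}\geq R^*+m^*$ from the previous paragraph now collapses to equalities, the equality case of $\tfrac{R^*}{1-m^*}=R^*+m^*$ forces $m^*(1-R^*-m^*)=0$; combined with $m^*\leq0$ and $R^*\leq1$ this can only hold with $m^*=0$. Thus $\min_i g_i(\vec{X}^*)=0$, which simultaneously certifies feasibility of $\vec{X}^*$ (every $g_i(\vec{X}^*)\geq0$) and optimality ($R^*=\OPTLP$), so $\vec{X}^*$ solves \eqref{lp:primalAbstract}. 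I expect the main obstacle to be claim (i), specifically the reverse inequality: the null-arm rescaling and its one-variable companion inequality must be tuned so they land \emph{exactly} at $\OPTLP$ rather than at a lossy constant factor. Once that tight chain is in place, the equality-case bookkeeping in claim (ii) is routine.
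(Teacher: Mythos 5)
Your proposal is correct, and it takes a genuinely different route from the paper's proof. The paper works directly with the saddle-point inequalities of the equilibrium pair $(\vec{X}^*,\vec{\lambda}^*)$: it establishes primal feasibility and complementary slackness (its Claim on $1-\tfrac{T}{B}c_i(\vec{X}^*)\geq 0$ and $\lambda_i^*>0\Rightarrow$ tightness) via local deviation arguments — shifting dual mass onto a violated resource or onto the dummy resource, and shifting primal mass onto the null arm — and then concludes optimality from the chain $r(\vec{X}^*)=\mL(\vec{X}^*,\vec{\lambda}^*)\geq\mL(\tilde{\vec{X}},\vec{\lambda}^*)\geq r(\tilde{\vec{X}})$ for any feasible $\tilde{\vec{X}}$. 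You instead invoke von Neumann's minimax theorem, compute the inner minimization over the simplex in closed form as $R+\min_i g_i(\vec{X})$, and prove the value identity $v^*=\OPTLP$ by a global rescaling argument: mixing any $\vec{X}$ with the null arm at rate $\theta=\tfrac{-m}{1-m}$ to restore feasibility, together with the elementary inequality $\tfrac{R}{1-m}\geq R+m$ (valid since $R\leq 1$, $m\leq 0$); optimality of $\vec{X}^*$ then falls out of the equality-case analysis forcing $m^*=0$. Your route has the merit of directly proving the assertion the paper makes informally after Lemma~\ref{lm:gameToLP} — that the max-min identity \eqref{eq:LagrangeMinMax} survives restricting $\vec{\lambda}$ to distributions — and of isolating exactly where each structural hypothesis enters (dummy resource gives $\min_i g_i\leq 0$; null arm enables the rescaling). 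The paper's route yields complementary slackness of the dual equilibrium strategy as an explicit by-product, which your value-based argument does not surface. Both arguments use the same three ingredients (dummy resource, null arm, rewards in $[0,1]$), so neither is more general; yours trades the paper's case analysis for a short computation.
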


In what follows we prove Lemma~\ref{lem:app-gameToLP-Nash}. Writing out the definition of the mixed Nash equilibrium,
	\begin{equation}
		\label{eq:Saddle}
		 \mL(\vec{X}^*, \vec{\lambda}) \geq \mL(\vec{X}^*, \vec{\lambda}^*) \geq  \mL(\vec{X}, \vec{\lambda}^*) \qquad \forall \vec{X} \in \Delta_K, \vec{\lambda} \in \Delta_d.
	\end{equation}

For brevity, denote
    $r(\vec{X}^*) = \sum_{a\in[K]} \vec{X}^*(a)\; r(a)$
and
    $c_i(\vec{X}^*) = \sum_{a\in[K]} \vec{X}^*(a)\; c_i(a)$.

We first state and prove the complementary slackness condition for the Nash equilibrium.

\begin{claim}
	\label{cl:sadHelpLemma}
For every resource $i \in [d]$ we have,
	\begin{itemize}
		\item[(a)] $1-\tfrac{T}{B}\, c_i(\vec{X}^*) \geq 0$,
		\item[(b)] $\lambda_i^* > 0 \implies 1-\tfrac{T}{B}\,c_i(\vec{X}^*) = 0$.
	\end{itemize}
\end{claim}
\begin{proof}
{\bf Part (a).}
For contradiction, consider resource $i$ that minimizes the left-hand side in (a), and assume that the said left-hand side is strictly negative.  We have two cases: either $\lambda_{i}^* < 1$ or $\lambda_{i}^*=1$. When $\lambda_i^* < 1$, consider another distribution $\tilde{\vec{\lambda}} \in \Delta_d$  such that $\tilde{\lambda}_i = 1$ and $\tilde{\lambda}_{i'}=0$ for every $i' \neq i$. Note that we have, $\mL(\vec{X}^*, \tilde{\vec{\lambda}}) < \mL(\vec{X}^*, \vec{\lambda}^*)$. This contradicts the first inequality in \eqref{eq:Saddle}.
	
Consider the second case, when $\lambda^*_i = 1$. Then $\mL(\vec{X}^*, \vec{\lambda}^*)
    = r(\vec{X}^*) + 1 - \frac{T}{B}c_i(\vec{X}^*)$.
 Consider any arm $a \in [K]$ such that $X^*(a) \neq 0$. Let $\tilde{\vec{X}}\in \Delta_K$ be another distribution such that $\tilde{X}(a) := 0$ and $\tilde{X}(\nop) := X^*(\nop) + X^*(a)$ and $\tilde{X}(a') = X^*(a')$ for every $a' \not \in \{a, \nop\}$. Note that $\tilde{X}(\nop) \leq 1$. Since $(\vec{X}^*, \vec{\lambda}^*)$ is a Nash equilibrium, we have that $\mL(\tilde{\vec{X}}, \vec{\lambda}^*) \leq \mL(\vec{X}^*, \vec{\lambda}^*)$. This implies that $-X^*(a) r(a) + X^*(a) \frac{T}{B} c_i(a) \leq 0$. Re-arranging we obtain, $\frac{T}{B}c_i(a) \leq r(a) \leq 1$. Thus, we have $1-\frac{T}{B}c_i(a) \geq 0$.

Since this holds for every $a \in [K]$ with $X^*(a) \neq 0$,
we obtain a contradiction:
	\[
\textstyle
1-\tfrac{T}{B}\, c_i(\vec{X}^*)
= \sum_{a \in [K]} X^*(a) \left( 1-\frac{T}{B}c_i(a) \right) \geq 0.
	\]

{\bf Part (b).}
For contradiction, assume the statement is false for some resource $i$. Then, by part (a), $\lambda_i^* > 0$ and $1-\tfrac{T}{B} c_i(\vec{X}^*) > 0$, and consequently
    $\mL(\vec{X}^*, \vec{\lambda}^*) > r(\vec{X}^*)$.
Now, consider distribution $\tilde{\vec{\lambda}}$ which puts probability $1$ on the dummy resource. We then have $\mL(\vec{X}^*, \tilde{\vec{\lambda}}) = r(\vec{X}^*) < \mL(\vec{X}^*, \vec{\lambda}^*)$,
contradicting the first inequality in \refeq{eq:Saddle}.
\end{proof}

Let $\tilde{\vec{X}}$ be some feasible solution for the linear program \eqref{lp:primalAbstract}. Plugging the feasibility constraints into the definition of the Lagrangian function,
    $\mL(\tilde{\vec{X}}, \vec{\lambda}^*) \geq r(\tilde{\vec{X}})$.
Claim~\ref{cl:sadHelpLemma}(a) implies that $\vec{X}^*$ is a feasible solution to the linear program \eqref{lp:primalAbstract}. Claim~\ref{cl:sadHelpLemma}(b) implies that
    $\mL(\vec{X}^*, \vec{\lambda}^*) = r(\vec{X}^*)$.
Thus,
	\[
	 r(\vec{X}^*) = \mL(\vec{X}^*, \vec{\lambda}^*) \geq \mL(\tilde{\vec{X}}, \vec{\lambda}^*) \geq r(\tilde{\vec{X}}).
		\]
So, $\vec{X}^*$ is an optimal solution to the $\LP$. In particular,
    $\OPTLP = r(\vec{X}^*) = \mL(\vec{X}^*, \vec{\lambda}^*) $.

\subsection{The stopped LP for \AdvBwK: proof of \refeq{eq:adv-stoppedLP}}
	\label{sec:appxAdversarial}

The proof is similar to prior work~\cite{BwK-focs13,DevanurJSW-ec11}.
Denote $\mathcal{D}_{\tau}$ to be the set of all distributions over the arms such that for every $\vec{p} \in \mathcal{D}_{\tau}$ we have the following: for every $i \in [d]$ we have $\sum_{t \in [\tau]} \vec{p}\, \cdot\, \vec{c}_{t, i} \leq B$. In other words, $\mathcal{D}_\tau$ denotes the set of distributions whose expected stopping time is at least $\tau$. Thus it immediately follows that $\OPT_{\LP}(\tau) \geq \max_{\vec{p} \in \mathcal{D}_{\tau}} \sum_{t \in [\tau]} \vec{p}\, \cdot\, \vec{r}_t$ since for any given $\vec{p} \in \mathcal{D}_{\tau}$ it is feasible to $\LP(\tau)$. Thus $\OPT_{\LP}(\tau)$ is at least the value of any feasible solution $\vec{p} \in \mathcal{D}_{\tau}$. Note that for every fixed distribution $\vec{p} \in \Delta_K$, there exists a $\tau$ such that either $\vec{p} \in \mathcal{D}_\tau$ and $\vec{p} \not \in \mathcal{D}_{\tau+1}$ or $\vec{p} \in \mathcal{D}_T$. Moreover the total expected reward we can obtain using $\vec{p}$ is $\sum_{t \in [\tau]} \vec{p} \cdot \vec{r}_t$. Thus $\max_{1 \leq \tau \leq T} \OPT_{\LP}(\tau) \geq \OPTFD$.

\subsection{Regret minimization in games: proof of Lemma~\ref{lem:learningGames}}
\label{app:pf-games}

Let us revisit adversarial online learning, as per Figure~\ref{prob:adv}. Denote the benchmark in \refeq{eq:prelims-regret-weak} as
\[
\optAOL(T) := \textstyle \max_{a\in A} \sum_{t\in [T]} f_t(a).
\]
Recall that $[\rmin,\rmax]$ is the payoff range, and denote $\sigma = \rmax-\rmin$.
\begin{lemma}
\label{lem:standard-to-non-standard}
Suppose an algorithm for adversarial online learning satisfies \eqref{eq:prelims-regret-weak} for some $\delta>0$. Then
\begin{align}
\textstyle
\Pr\left[\; \forall \tau\in[T]\;
  		\optAOL(\tau)
  		 \;-\;  \sum_{t\in [\tau]} \vec{f}_t\cdot \vec{p}_t  \leq \sigma\cdot\,\left( R_{\delta/T}(T) + \sqrt{2T \log (T/\delta)} \right)
		\;\right] \geq 1-2\delta. \label{eq:appx:otherRegBound}
 \end{align}
\end{lemma}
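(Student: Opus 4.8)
```latex
\textbf{Proof plan.} The statement to prove, Lemma~\ref{lem:standard-to-non-standard}, upgrades the single-horizon high-probability regret bound \eqref{eq:prelims-regret-weak}, which controls the gap between the best fixed action's payoff and the \emph{realized} payoff $\sum_{t} f_t(a_t)$, into a bound that holds simultaneously for all prefixes $\tau \in [T]$ and compares against the \emph{expected} payoff $\sum_{t} \vec{f}_t \cdot \vec{p}_t$ under the algorithm's chosen distributions. The plan is to combine two ingredients: the ``zeroing-out'' trick described in the footnote after \eqref{eq:prelims-regret-strong}, which converts \eqref{eq:prelims-regret-weak} into the uniform-over-$\tau$ version \eqref{eq:prelims-regret-strong}, and an Azuma-Hoeffding argument (Lemma~\ref{lem:AzumaHoeffding}) to pass from realized payoffs $f_t(a_t)$ to their conditional expectations $\vec{f}_t \cdot \vec{p}_t$.

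First I would apply the zeroing-out trick: for any fixed target prefix length $\tau$, consider a modified adversary that plays the true payoff vectors $\vec{f}_1 \LDOTS \vec{f}_\tau$ and then sets all payoffs after round $\tau$ to a constant (say $\rmin$). For this modified instance, $\regAOL(T) = \regAOL(\tau)$, so \eqref{eq:prelims-regret-weak} applied at horizon $T$ yields, with probability at least $1-\delta/T$, that $\optAOL(\tau) - \sum_{t \in [\tau]} f_t(a_t) \leq \sigma\, R_{\delta/T}(T)$. Taking a union bound over all $\tau \in [T]$ gives this prefix bound simultaneously for all $\tau$ with probability at least $1-\delta$; this is exactly \eqref{eq:prelims-regret-strong}.

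Next I would handle the passage from the realized payoff $\sum_{t\in[\tau]} f_t(a_t)$ to the expected payoff $\sum_{t\in[\tau]} \vec{f}_t\cdot\vec{p}_t$. Define $Y_t := \vec{f}_t\cdot\vec{p}_t - f_t(a_t)$. Since $a_t$ is drawn from $\vec{p}_t$ and $\vec{f}_t,\vec{p}_t$ are determined by the history $\mathcal{H}_{t-1}$, we have $\E[Y_t \mid \mathcal{H}_{t-1}] = \vec{f}_t\cdot\vec{p}_t - \E[f_t(a_t)\mid \mathcal{H}_{t-1}] = 0$, so $(Y_t)$ is a martingale difference sequence with $|Y_t| \leq \sigma$. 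Lemma~\ref{lem:AzumaHoeffding} (applied with $c=\sigma$ and confidence $\delta/T$) together with a union bound over $\tau\in[T]$ gives, with probability at least $1-\delta$, that $\sum_{t\in[\tau]} Y_t \leq \sigma\sqrt{2T\log(T/\delta)}$ for all $\tau$, i.e. $\sum_{t\in[\tau]} \vec{f}_t\cdot\vec{p}_t \leq \sum_{t\in[\tau]} f_t(a_t) + \sigma\sqrt{2T\log(T/\delta)}$.

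Finally I would combine the two events via a union bound (total failure probability at most $2\delta$): on their intersection, for every $\tau\in[T]$,
\[
\optAOL(\tau) - \sum_{t\in[\tau]} \vec{f}_t\cdot\vec{p}_t
  = \left(\optAOL(\tau) - \sum_{t\in[\tau]} f_t(a_t)\right)
    + \left(\sum_{t\in[\tau]} f_t(a_t) - \sum_{t\in[\tau]} \vec{f}_t\cdot\vec{p}_t\right)
  \leq \sigma\, R_{\delta/T}(T) + \sigma\sqrt{2T\log(T/\delta)},
\]
which is exactly \eqref{eq:appx:otherRegBound}. I do not expect a genuine obstacle here, since both pieces are standard; the only point requiring mild care is checking that the martingale structure of $(Y_t)$ is valid under the protocol of Figure~\ref{prob:adv}, namely that $\vec{p}_t$ is $\mathcal{H}_{t-1}$-measurable (the algorithm commits to $\vec{p}_t$ before drawing $a_t$ and before seeing $\vec{f}_t$), and that the adversary is oblivious enough that $\vec{f}_t$ is a deterministic function of history. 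Both hold by the specification of the protocol.
```
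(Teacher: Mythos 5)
Your plan follows the paper's proof exactly: upgrade \eqref{eq:prelims-regret-weak} to the uniform-over-prefixes bound \eqref{eq:prelims-regret-strong} via the zeroing-out trick, use Azuma--Hoeffding to pass from realized payoffs $f_t(a_t)$ to expected payoffs $\vec{f}_t\cdot\vec{p}_t$, and finish with a union bound over the two events. However, as written your martingale step proves the wrong tail. In the final decomposition you need an \emph{upper} bound on
$\sum_{t\in[\tau]} \bigl( f_t(a_t) - \vec{f}_t\cdot\vec{p}_t \bigr)$,
i.e.\ that the realized payoff does not exceed the expected payoff by more than $\sigma\sqrt{2T\log(T/\delta)}$. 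But you defined $Y_t := \vec{f}_t\cdot\vec{p}_t - f_t(a_t)$ and bounded $\sum_{t\in[\tau]} Y_t$ from above, which controls the \emph{opposite} deviation (expected exceeding realized). Your concluding display then silently invokes the direction you did not establish. Since Lemma~\ref{lem:AzumaHoeffding} as stated in the paper is one-sided ($\Pr[\sum_t Y_t > R_{0,\delta}(T)]\leq\delta$), the choice of sign matters for the literal application.

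The fix is one line: apply Lemma~\ref{lem:AzumaHoeffding} to $Y_t := f_t(a_t) - \vec{f}_t\cdot\vec{p}_t$, which is equally a martingale difference sequence with $|Y_t|\leq\sigma$, yielding precisely the paper's inequality \eqref{eq:appxgenReg}; the rest of your argument (union bound at confidence $\delta/T$ per prefix, then summing the two inequalities on the intersection of the events) is correct and matches the paper. So this is a sign slip rather than a conceptual gap, but as written the chain of inequalities does not close.
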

\begin{proof}
Let us use the stronger regret bound \eqref{eq:prelims-regret-strong} implied by \eqref{eq:prelims-regret-weak}. Note that 	
    \[ \E[f_t(a_t)~|~a_1, a_2 \LDOTS a_{t-1}] = \vec{f}_t \cdot \vec{p}_t.\]
Applying the Azuma-Hoeffding inequality for each $\tau \in [T]$, and taking a union bound, we have
\begin{align}\label{eq:appxgenReg}
\Pr\left[\;
    \forall \tau \in [T] \quad \textstyle \sum_{t \in [\tau]} f_t(a_t) - \sum_{t \in [\tau]} \vec{f}_t \cdot \vec{p}_t \leq \sigma\cdot
    \sqrt{2T \log (T/\delta)}
\;\right] \geq 1-\delta.
\end{align}
Taking a union bound over \refeq{eq:appxgenReg} and \refeq{eq:prelims-regret-strong} and adding the equations we get \refeq{eq:appx:otherRegBound}.
\end{proof}

\begin{remark}
For Hedge algorithm, regret bound \refeq{eq:appx:otherRegBound} is already proved in \cite{FS97}.
\OMIT{More specifically,
	\refeq{eq:appx:otherRegBound} is proved in \cite{FS97}. Combining this with a direct application of the Azuma-Hoeffding inequality yields \refeq{eq:appxgenReg}.}
\end{remark}

Let
    $W= \sqrt{2 T \log (T/\delta)}$
denote the term from Lemma~\ref{lem:standard-to-non-standard} in what follows.

We now prove Lemma~\ref{lem:learningGames}, similar to the proof in \cite{freund1996game} for the deterministic game. Recall that we take averages up to some fixed round $\tau\in [T]$. We prove that the following two inequalities  hold, each with probability at least $1-\delta$.
\begin{align}
\frac{1}{\tau} \sum_{t \in [\tau]} \vec{p}_{t,1}^{\tran}\, \vec{G}_t\, \vec{p}_{t, 2} &  \geq v^* - \sigma\cdot \frac{R_{1,\, \delta/T}(T) +2W}{\tau}. & \label{eq:rewardGamesLB} \\
\frac{1}{\tau} \sum_{t \in [\tau]} \vec{p}_{t,1}^{\tran}\, \vec{G}_t\, \vec{p}_{t, 2} &  \leq \overline{\vec{p}}_{1}^{\tran}\, \vec{G}\, \vec{p}_2 + \sigma\cdot  \frac{R_{2,\, \delta/T}(T) + 2W}{\tau} &  \forall \vec{p}_2 \in \Delta_{A_2}. \label{eq:rewardGamesUB}
\end{align}
		
\noindent \refeq{eq:prelims-games-play} in Lemma~\ref{lem:learningGames} follows by adding \refeq{eq:rewardGamesLB} and \refeq{eq:rewardGamesUB}.

First we prove \refeq{eq:rewardGamesLB}. Following the set of inequalities in Section 2.5 of \cite{freund1996game} we have,
\begin{align*}
\frac{1}{\tau} \sum_{t \in [\tau]} \vec{p}_{t, 1}^{\tran} \vec{G}_t \vec{p}_{t, 2}
			& \geq_{whp}  \frac{1}{\tau} \sum_{t \in [\tau]} {\vec{p}_1^*}^{\tran}\, \vec{G}_t\, \vec{p}_{t, 2} - \sigma\cdot \frac{R_{1,\, \delta/T}(T)+W}{\tau} & \text{From Lemma~\ref{lem:standard-to-non-standard}} &\\
				& \geq_{whp} \frac{1}{\tau} \sum_{t \in [\tau]} {\vec{p}_1^*}^{\tran}\, \vec{G}\, \vec{p}_{t, 2} - \sigma\cdot \frac{R_{1,\, \delta/T}(T) + 2W}{\tau} & \text{From Lemma~\ref{lem:AzumaHoeffding}}& \\
				& = \max_{\vec{p}_1 \in \Delta_{A_1}} \frac{1}{\tau} \sum_{t \in [\tau]} {\vec{p}_1}^{\tran}\, \vec{G}\, \vec{p}_{t, 2} - \sigma\cdot \frac{R_{1,\, \delta/T}(T) + 2W}{\tau} & \text{From Definition of $\vec{p}_1^*$.} &\\
				& = \max_{\vec{p}_1 \in \Delta_{A_1}} {\vec{p}_1}^{\tran}\, \vec{G}\, \overline{\vec{p}}_2 - \sigma\cdot \frac{R_{1,\, \delta/T}(T) + 2W}{\tau}  & \text{From Definition of $\overline{\vec{p}}_2$.} &\\
				& \geq \min_{\vec{p}_2 \in \Delta_{A_2}}\;  \max_{\vec{p}_1 \in \Delta_{A_1}} \vec{p}_1^{\tran}\, \vec{G}\, \vec{p}_2 - \sigma\cdot \frac{R_{1,\, \delta/T}(T) + 2W}{\tau} &
\end{align*}
Here $\leq_{whp}$ denotes statements that hold with probability at least $1-\delta$.	

Now let us prove \eqref{eq:rewardGamesUB}. Fix distribution
    $\vec{p}_2 \in \Delta_{A_2}$.
Then:
\begin{align*}
\frac{1}{\tau} \sum_{t \in [\tau]} \vec{p}^{\tran}_{t, 1}\, \vec{G}_t\, \vec{p}_{t, 2}
			& \leq_{whp}\frac{1}{\tau} \sum_{t \in [\tau]} {\vec{p}_{t, 1}}^{\tran}\, \vec{G}_t\, \vec{p}_2 + \sigma\cdot \frac{R_{2,\, \delta/T}(T)+W}{\tau} & \text{From Lemma~\ref{lem:standard-to-non-standard}} \\
				& \leq_{whp} \frac{1}{\tau} \sum_{t \in [\tau]} {\vec{p}_{t, 1}}^{\tran}\, \vec{G}\, \vec{p}_2 + \sigma\cdot \frac{R_{2,\, \delta/T}(T) + 2W}{\tau} & \text{From Lemma~\ref{lem:AzumaHoeffding}}\\
				& = \overline{\vec{p}_1}^{\tran}\, \vec{G}\, \vec{p}_2 + \sigma\cdot \frac{R_{2,\, \delta/T}(T) + 2W}{\tau} & \text{From Definition of $\overline{\vec{p}_1}$.}
			\end{align*}
Taking a union bound over all the four high-probability inequalities, we get the lemma.

\newpage
\section{Table of notation}
\label{sec:notations}

For reference, let us summarize the important notation used across sections.

\begin{center}
\begin{tabular}{ |c|l| }
\hline
	Notation & Usage  \\
\hline
 $\OPTFD$ & Optimal value of the fixed distribution over arms in hindsight. \\ \hline
 $\OPTDP$ & Optimal dynamic policy in hindsight. \\ \hline
 $\REW$ & Total (random) reward obtained by the algorithm \\ \hline
$\vM$ & Outcome matrix; rewards and consumption for every arm; $\vec{o}$ used to represent a row. \\ \hline
$\bvM_{\tau}$ & Average of outcome matrices after $\tau$ time-steps. \\ \hline
$\bvM^{\ips}_{\tau}$ & Average of outcome matrices estimated using IPS estimates after $\tau$ time-steps. \\ \hline
$\vec{G}$ & Payoff matrix in the Lagrangian game \\ \hline
$R_{j, \delta}(\tau)$ & Regret of $\ALG_j$ with probability at least $1-\delta$ after $\tau$ rounds \\ \hline
$R_{0, \delta}(\tau)$ or $R_0(\tau)$ & Confidence term in the Azuma-Hoeffding inequality. \\ \hline
$U_j(T~|~T_0)$ & Regret of $\ALG_j$ after $T$ rounds given the parameter $T_0$ (this affects scaling of regret). \\ \hline
$\mL(.)$ & Lagrange function \\ \hline
$T_0$ & Parameter used in the Lagrangian. $T_0 = T$ in \StochasticBwK and $T_0 = \guess$ in \AdversarialBwK. \\ \hline
$B_0$ & Scaled budget. $B_0 = \tfrac{B}{\ratio}$ in \AdversarialBwK (high-probability). Otherwise $B_0 = B$. \\ \hline
$\guess$, $\Gmax$, $\Gmin$ & Guess, maximum and minimum range of this guess respectively in \AdversarialBwK. \\ \hline
$\kappa$ & Multiplicative factor with which guess is increased. \\ \hline
$\OPTLPfull[\tau]$ & Best objective of the $\tau$ stopped $\LP$s (\ie stopped at times $1, 2 \LDOTS \tau$). \\ \hline
$\LP_{\bvM_{\tau}, B, \tau}$ & Linear program corresponding to the average outcome matrix $\bvM_{\tau}$. \\ \hline
$\OPTLP(\bvM_{\tau}, B, \tau)$ & Optimal value of $\LP_{\bvM_{\tau}, B, \tau}$. \\ \hline
\end{tabular}
\end{center}

\end{document}